\newtheorem{theorem}{Theorem}
\newtheorem{problem}{Problem}
\newtheorem{definition}[theorem]{Definition}
\newtheorem{lemma}[theorem]{Lemma}
\newtheorem{corollary}[theorem]{Corollary}
\DeclareMathOperator{\sgn}{sgn}
\newcommand{\eq}[1]{(\ref{eq:#1})}
\newcommand{\fig}[1]{{\bf\ref{fig:#1}}}
\newcommand{\sect}[1]{{\bf\ref{sec:#1}}}
\newcommand{\tab}[1]{{\bf\ref{tab:#1}}}
\newcommand{\vv}[1]{\boldsymbol{#1}}
\newcommand{\bpi}{\boldsymbol{\pi}}
\newcommand{\epsrel}{\epsilon_{\text{r}}}
\newcommand{\Discuss}[1]{\textcolor{red}{{\fontencoding{U}\fontfamily{futs}\selectfont\char 49\relax} \textit{discuss} {\fontencoding{U}\fontfamily{futs}\selectfont\char 49\relax}}}
\begin{document}
\title{Randomized semi-quantum matrix processing}

\author{Allan Tosta}
\affiliation{Federal University of Rio de Janeiro, Caixa Postal 68528, Rio de Janeiro, RJ 21941-972, Brazil}
\author{Thais de Lima Silva}
\affiliation{Quantum Research Center, Technology Innovation Institute, Abu Dhabi, UAE}

\author{Giancarlo Camilo}
\affiliation{Quantum Research Center, Technology Innovation Institute, Abu Dhabi, UAE}

\author{Leandro Aolita}
\affiliation{Quantum Research Center, Technology Innovation Institute, Abu Dhabi, UAE}
\affiliation{Federal University of Rio de Janeiro, Caixa Postal 68528, Rio de Janeiro, RJ 21941-972, Brazil}

\begin{abstract}
We present a hybrid quantum-classical framework for simulating generic matrix
functions more amenable to early fault-tolerant quantum hardware than standard
quantum singular-value transformations. The method is based on randomization
over the Chebyshev approximation of the target function while keeping the matrix
oracle quantum, and is assisted by a variant of the Hadamard test that removes
the need for post-selection. The resulting statistical overhead is similar to the fully
quantum case and does not incur any circuit depth degradation. On the contrary,
the average circuit depth is shown to get smaller, yielding equivalent reductions
in noise sensitivity, as explicitly shown for depolarizing noise and coherent errors.
We apply our technique to partition-function estimation, linear system solvers,
and ground-state energy estimation. For these cases, we prove advantages on
average depths, including quadratic speed-ups on costly parameters and even the
removal of the approximation-error dependence.  
\end{abstract}

\maketitle


\section{Introduction}\label{sec:intro}
Faster algorithms for linear algebra are a major promise of quantum computation, holding the potential for precious runtime speed-ups over classical methods. 
A modern, unified framework for such algorithms is given by the quantum signal processing (QSP) \cite{LowChuang2017,LowChuangQuantum2019} and, more generally, quantum singular-value transformation (QSVT) \cite{Gilyen2019} formalisms. These are powerful techniques to manipulate a matrix, coherently given by a quantum oracle, via polynomial transformations on its eigenvalues and singular values, respectively.
The class of matrix arithmetic attained is remarkably broad, encompassing primitives as diverse as 
Hamiltonian simulation, matrix inversion, ground-state energy estimation, Gibbs-state sampling, among others \cite{ChuangGrandUnification}.
Moreover, the framework often offers the state-of-the-art in asymptotic query complexities (i.e. number of oracle calls), in some cases matching known complexity lower bounds.
Nevertheless, the experimental requirements for full implementations are prohibitive for current devices, and it is not clear if the framework will be useful in practice before large-scale fault-tolerant quantum computers appear.

This has triggered a quest for {\it early fault-tolerant algorithms} for matrix processing that allow one to trade performance for nearer-term feasibility in a controlled way, i.e. with provable runtime guarantees \cite{silva_fragmented_2022,silva2022fourierbased,lin_heisenberg-limited_2022,wang2023quantum,campbell_random_2019,wan_randomized_2022,wang2023qubitefficient,Campbell_2021,Dong_2022,wang2023faster}. 
Particularly promising are randomized hybrid quantum-classical schemes to statistically simulate a matrix function via quantum implementations of more elementary ones \cite{lin_heisenberg-limited_2022,wang2023quantum,campbell_random_2019,wan_randomized_2022,wang2023qubitefficient,Campbell_2021}.
For instance, this has been applied to the Heaviside step function $\theta(H)$ of a Hamiltonian $H$, which allows for eigenvalue thresholding, a practical technique for Heisenberg-limited spectral analysis \cite{lin_heisenberg-limited_2022}. 
Two input access models have been considered there: quantum oracles as a controlled unitary evolution of $H$ \cite{lin_heisenberg-limited_2022,wang2023quantum,Dong_2022} and classical ones given by a decomposition of $H$ as a linear combination of Pauli operators \cite{campbell_random_2019,wan_randomized_2022,wang2023qubitefficient,Campbell_2021}. 
In the former, one Monte-Carlo simulates the Fourier series of $\theta(H)$ by randomly sampling its harmonics. In the latter -- in an additional level of randomization -- one also probabilistically samples the Pauli terms from the linear combination.

Curiously, however, randomized quantum algorithms for matrix processing have been little explored beyond the specific case of the Heaviside function. Ref. \cite{wang2023qubitefficient} put forward a randomized, qubit-efficient technique for Fourier-based QSP \cite{silva2022fourierbased,Dong_2022} for generic functions. 
However, the additional level of randomization can detrimentally affect the circuit depth per run compared to coherent oracles. 
On the other hand, in the quantum-oracle setting, the randomized algorithms above have focused mainly on controlled unitary evolution as the input access model. 
This is convenient in specific cases where $H$ can be analogically implemented. 
However, it leaves aside the powerful class of {\it block-encoding oracles}, i.e., unitary matrices with the input matrix as one of its blocks \cite{LowChuangQuantum2019}. 
Besides having a broader scope of applicability (including non-Hermitean matrices), such oracle types are also a more natural choice for digital setups. 
Moreover, randomized quantum algorithms have so far not addressed Chebyshev polynomials, the quintessential basis functions for approximation theory \cite{trefethen_approx}, which often attain better accuracy than Fourier series \cite{boyd_spectral_methods}. 
Chebyshev polynomials, together with block-encoding oracles, provide the most sophisticated and general arena for quantum matrix arithmetic \cite{LowChuang2017,LowChuangQuantum2019,Gilyen2019,ChuangGrandUnification}. 

Here, we fill in this gap. We derive a semi-quantum algorithm for Monte-Carlo simulations of QSVT with provably better circuit complexities than fully-quantum schemes as well as advantages in terms of experimental feasibility.
Our method estimates state amplitudes and expectation values involving a generic matrix function $f(A)$, leveraging three main ingredients: 
$i$) it samples each component of a Chebyshev series for $f$ with a probability proportional to its coefficient in the series;
$ii$) it assumes coherent access to $A$ via a block-encoding oracle; and
$iii$) $f(A)$ is automatically extracted from its block-encoding without post-selection, 
using a Hadamard test.
The combination of $i$) and $ii$) leaves untouched the maximal query complexity $k$ per run native from the Chebyshev expansion. 
In addition, the statistical overhead we pay for end-user estimations scales only with the $l_1$-norm of the Chebyshev coefficients. 
For the use cases we consider, this turns out to be similar (at worst up to logarithmic factors) to the operator norm of $f(A)$, which would govern the statistical overhead if we used fully-quantum QSVT with a Hadamard test. That is, our scheme does not incur any significant degradation with respect to the fully-quantum case either in runtime or circuit depth. 
On the contrary, the average query complexity can be significantly smaller than $k$. 
We prove interesting speed-ups of the former over the latter for practical use cases. 

These speed-ups translate directly into equivalent reductions in noise sensitivity: 
For simple models such as depolarization or coherent errors in the quantum oracle, we show that the estimation inaccuracy due to noise scales with the average query depth. In comparison, it scales with the maximal depth in standard QSVT implementations. 
Importantly, we implement each sampled Chebyshev polynomial with a simple sequence of queries to the oracle using qubitization; no QSP pulses are required throughout. 
Finally, $iii$) circumvents the need for repeating until success or quantum amplitude amplification. That is, no statistical run is wasted, and no overhead in circuit depth is incurred. 
In addition, the fully-quantum scheme requires an extra ancillary qubit controlling the oracle in order to implement the QSP pulses. All this renders our hybrid approach more experimentally friendly than coherent QSVT. 

As use cases, we benchmark our framework on four end-user applications: partition-function estimation of classical Hamiltonians via quantum Markov-chain Monte Carlo (MCMC);  partition-function estimation of quantum Hamiltonians via quantum imaginary-time evolution (QITE); linear system solvers (LSSs); and ground-state energy estimation (GSEE). 
The maximal and expected query depths per run as well as the total expected runtime (taking into account sample complexity) are displayed in Table \tab{main_table}. In all cases, we systematically obtain the following advantages (both per run and in total) of expected versus maximal query complexities. 

For MCMC, we prove a quadratic speed-up on a factor $\mathcal{O}(\log (Z_\beta\, e^{\beta}/\epsrel))$, where $Z_\beta$ is the partition function to estimate, at inverse temperature $\beta$, and $\epsrel$ is the tolerated relative error. 
For QITE, we remove a factor  $\mathcal{O}(\log (D\,e^{\beta}/Z_\beta\,\epsrel))$ from the scaling, where $D$ is the system dimension. For LSSs we consider two sub-cases: estimation of an entry of the (normalized) solution vector and of the expectation value of an observable $O$ on it. 
We prove quadratic speed-ups on factors $\mathcal{O}\big(\log(\kappa/\epsilon)\big)$ and $\mathcal{O}\big(\log(\kappa^2\, \|O\|/\epsilon)\big)$ for the first and second sub-cases, respectively, where $\|O\|$ is the operator norm of $O$, $\kappa$ is the condition number of the matrix, and $\epsilon$ the tolerated additive error. 
 This places our query depth at an intermediate position between that of the best known Chebyshev-based method \cite{Childs_2017} and the optimal one in general 
\cite{costa2021optimal}.
In turn, compared to the results obtained in \cite{wang2023qubitefficient} via full randomization, our scaling is one power of $\kappa$ superior. 
Finally, for GSEE, we prove a speed-up on a factor that depends on the overlap $\eta$ between the probe state and the ground state: the average query depth is $\mathcal{O}\big(\frac{1}{\xi}\sqrt{\log(1/\eta)}/\log(1/\xi)\big)$, whereas the maximal query depth is $\mathcal{O}\big(\frac{1}{\xi}\log (1/\eta)\big)$, with $\xi$ the additive error in the energy estimate.


\section{Preliminaries}\label{sec:prelim}

We consider the basic setup of Quantum Singular Value Transformation (QSVT) \cite{Gilyen2019,ChuangGrandUnification}. 
This is a powerful technique for synthesizing polynomial functions of a linear operator embedded in a block of a unitary matrix, via polynomial transformations on its singular values. 
Combined with approximation theory \cite{Vishnoi2013}, this leads to state-of-the-art query complexities and an elegant unifying structure for a variety of quantum algorithms of interest. 
For simplicity of the presentation, in the main text we focus explicitly on the case of Hermitian matrices. There, QSVT reduces to the simpler setup of Quantum Signal Processing (QSP) \cite{LowChuang2017,LowChuangQuantum2019}, describing eigenvalue transformations. The extension of our algorithms to QSVT for generic matrices is straightforward and is left for App. \sect{QSVT}. 
Throughout the paper, we adopt the short-hand notation $[l]:=\{0,\ldots,l-1\}$ for any $l\in\mathbb{N}$.

The basic input taken by QSP is a block-encoding $U_{A}$ of the Hermitian operator $A$ of interest (the \textit{signal}). 
A block-encoding is a unitary acting on $\mathcal{H}_{sa}:=\mathcal{H}_s\otimes\mathcal{H}_{a}$, where $\mathcal{H}_s$ is the system Hilbert space where $A$ acts and $\mathcal{H}_{a}$ is an ancillary Hilbert space (with dimensions $D$ and $D_a$, respectively), satisfying 
\begin{equation}\label{eq:block_encoding}
\big(\bra{0}_a\otimes \mathds{1}_s\big)\,U_A\,\big(\ket{0}_a\otimes \mathds{1}_s\big) = A
\end{equation}
for some suitable state $\ket{0}_{a}\in \mathcal{H}_{a}$ (here $\mathds{1}_s$ is the identity operator in $\mathcal{H}_s$). 
Designing such an oracle for arbitrary $A$ is a non-trivial task \cite{camps2023explicit}, but efficient block-encoding schemes are known in cases where some special structure is present, e.g., when $A$ is sparse or expressible as a linear combination of unitaries \cite{LowChuangQuantum2019,Gilyen2019,sunderhauf2023blockencoding}. 
In particular, we will need the following particular form of $U_A$ that makes it amenable for dealing with Chebyshev polynomials. 

\begin{definition}[Qubitized block-encoding oracle]\label{def:qubitizedoracle}
Let 
 $A$ be a Hermitian matrix on $\mathcal{H}_s$ with spectral norm $\norm{A}\leq1$, eigenvalues $\{\lambda_\gamma\}_{\gamma\in [D]}$, and eigenstates $\{\ket{\lambda}_s\}$. 
A unitary $U_{A}$ acting on $\mathcal{H}_{sa}$
is called a (exact) \textit{qubitized block-encoding} of $A$ if 
it has the form
\begin{align}\label{eq:qubitizedoracle}
U_{A} = \bigoplus_{\gamma\in[D]} e^{-i\,\vartheta_\gamma\,Y_\gamma}
\,,
\end{align}
where $\vartheta_\gamma:=\arccos(\lambda_\gamma)$ and $Y_{\gamma}$ is the second Pauli matrix acting on the two-dimensional subspace spanned by $\big\{\ket{0}_a\otimes\ket{\lambda_\gamma}_s, \ket{\perp_{\lambda_\gamma}}_{sa}\big\}$ with ${}_{sa}\!\bra{\perp_{\lambda_\gamma}}\big(\ket{0}_a\otimes\ket{\lambda_\gamma}_s\big)=0$.
\end{definition}

\noindent A qubitized oracle of the form \eq{qubitizedoracle} can be constructed from any other block-encoding $U'_A$ of $A$ using at most one query to $U'_A$ and ${U'_A}^{-1}$, at most one additional ancillary qubit, and $\mathcal{O}(\log(D_a))$ quantum gates \cite{LowChuangQuantum2019}. 

Standard QSP takes as input the qubitized oracle $U_A$ and transforms it into (a block-encoding of) a polynomial function $\Tilde{f}(A)$. With the help of function approximation theory \cite{trefethen_approx}, this allows the approximate implementation of generic non-polynomial functions $f(A)$. 
The algorithm complexity is measured by the number of queries to $U_A$, which allows for rigorous quantitative statements agnostic to details of $A$ or to hardware-specific circuit compilations. 
For our purposes, only a simple QSP result will be needed, namely the observation \cite{LowChuangQuantum2019} that repeated applications of $U_A$ give rise to Chebyshev polynomials of $A$ (see App. \sect{chebyshev_proof} for a proof).

\begin{lemma}[Block encoding of Chebyshev polynomials] \label{cheb}
Let $U_A$ be a qubitized block-encoding of $A$. Then 
\begin{equation}
\big(\bra{0}_a\otimes \mathds{1}_s\big)\,U_A^j\,\big(\ket{0}_a\otimes \mathds{1}_s\big) = \mathcal{T}_j(A)\,,
\end{equation}
for $j\in\mathbb{N}$, where $\mathcal{T}_{j}(\cdot)$ is the $j$-th order Chebyshev polynomial of the first kind. 
\end{lemma}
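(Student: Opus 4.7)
The plan is to exploit the explicit qubitized block-diagonal form of $U_A$ in Definition \ref{def:qubitizedoracle} to reduce the problem to a $2\times 2$ single-qubit rotation on each invariant subspace and then recognize the Chebyshev identity $\cos(j\,\arccos(x))=\mathcal{T}_j(x)$.

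First I would note that the decomposition $U_A=\bigoplus_{\gamma\in[D]} e^{-i\,\vartheta_\gamma Y_\gamma}$ means each of the two-dimensional subspaces $\mathcal{V}_\gamma:=\mathrm{span}\{\ket{0}_a\otimes\ket{\lambda_\gamma}_s,\ket{\perp_{\lambda_\gamma}}_{sa}\}$ is invariant under $U_A$, and the subspaces $\{\mathcal{V}_\gamma\}_\gamma$ are mutually orthogonal. Since powers of a direct sum are direct sums of powers, one has $U_A^j=\bigoplus_{\gamma\in[D]} e^{-i\,j\,\vartheta_\gamma Y_\gamma}$ for every $j\in\mathbb{N}$. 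So the computation of $U_A^j$ is reduced to a $Y$-rotation by angle $j\,\vartheta_\gamma$ inside each $\mathcal{V}_\gamma$.

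Next I would evaluate the matrix element in the natural basis of $\mathcal{V}_\gamma$. Using the standard identity $e^{-i\,\phi Y}=\cos(\phi)\,\mathds{1}-i\sin(\phi)\,Y$ and the fact that, in the basis ordered as $\big(\ket{0}_a\otimes\ket{\lambda_\gamma}_s,\ket{\perp_{\lambda_\gamma}}_{sa}\big)$, the diagonal entry corresponding to the first basis vector is $\cos(j\,\vartheta_\gamma)$, I get
\begin{equation*}
\big(\bra{0}_a\otimes\bra{\lambda_\gamma}_s\big)\,U_A^j\,\big(\ket{0}_a\otimes\ket{\lambda_\gamma}_s\big)=\cos(j\,\vartheta_\gamma)=\cos\!\big(j\,\arccos(\lambda_\gamma)\big)=\mathcal{T}_j(\lambda_\gamma),
\end{equation*}
where the last equality is the defining property of the Chebyshev polynomials of the first kind. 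Off-diagonal terms between different eigenspaces vanish automatically due to the orthogonality of $\mathcal{V}_\gamma$ and $\mathcal{V}_{\gamma'}$ for $\gamma\neq\gamma'$, so no cross contributions arise.

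Finally, I would insert a resolution of the identity $\mathds{1}_s=\sum_{\gamma\in[D]}\ket{\lambda_\gamma}\!\bra{\lambda_\gamma}_s$ on both sides of $U_A^j$ and collect the block-diagonal contributions to obtain
\begin{equation*}
\big(\bra{0}_a\otimes\mathds{1}_s\big)\,U_A^j\,\big(\ket{0}_a\otimes\mathds{1}_s\big)=\sum_{\gamma\in[D]}\mathcal{T}_j(\lambda_\gamma)\,\ket{\lambda_\gamma}\!\bra{\lambda_\gamma}_s=\mathcal{T}_j(A),
\end{equation*}
where the last equality uses the spectral definition $\mathcal{T}_j(A):=\sum_\gamma \mathcal{T}_j(\lambda_\gamma)\,\ket{\lambda_\gamma}\!\bra{\lambda_\gamma}_s$ valid for Hermitian $A$. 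I do not anticipate a real obstacle here; the only subtlety is the bookkeeping of the $\ket{\perp_{\lambda_\gamma}}_{sa}$ vectors, which only contribute to off-diagonal entries of each $2\times 2$ block and are therefore projected out by the outer $\bra{0}_a$ and $\ket{0}_a$. The argument extends verbatim to the singular-value setting in Appendix \ref{sec:QSVT} by replacing eigenpairs with singular-vector pairs.
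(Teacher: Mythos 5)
Your proof is correct and follows essentially the same route as the paper's: write $U_A^j$ as the direct sum of the $j$-fold powers of the $2\times 2$ rotations $e^{-i\vartheta_\gamma Y_\gamma}$, read off the $\big(\ket{0}_a\otimes\ket{\lambda_\gamma}_s\big)$-diagonal entry as $\cos(j\arccos\lambda_\gamma)=\mathcal{T}_j(\lambda_\gamma)$, and resum over $\gamma$ to obtain $\mathcal{T}_j(A)$. The extra remarks about orthogonality of the invariant subspaces and the projection of the $\ket{\perp_{\lambda_\gamma}}_{sa}$ components are fine and just make explicit what the paper leaves implicit.
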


We are interested in a truncated Chebyshev series 
\begin{equation}
\label{eq:ftilde}
\Tilde{f}(x)=\sum^{k}_{j=0}a_{j}\mathcal{T}_{j}(x) 
\end{equation}
providing a $\nu$-approximation to the target real-valued function $f:[-1,1]\rightarrow\mathbb{R}$, that is, $\max_{x\in [-1,1]}\abs{f(x)-\Tilde{f}(x)} \le \nu$. 
The Chebyshev polynomials $\mathcal{T}_{j}$ form a key basis for function approximation, often leading to near-optimal approximation errors \cite{trefethen_approx}. In particular, unless the target function is periodic and smooth, they tend to outperform Fourier approximations \cite{boyd_spectral_methods}. 
The case of complex-valued functions can be treated similarly by splitting it into its real and imaginary parts. 
The truncation order $k$ is controlled by the desired accuracy $\nu$ in a problem-specific way (see Sec. \sect{usecases} for explicit examples). 
We denote by $\vv{a}:=\big\{a_0,\ldots,a_k\big\}$ the vector of  Chebyshev coefficients of $\Tilde{f}$ and by $\norm{\vv{a}}_1:=\sum_{j=0}^k|a_j|$ its $\ell_1$-norm.

\section{Results} \label{sec:results}

We are now in a position to state our main results. First, we set up explicitly the two problems {in question} 
and then proceed to describe 
our randomized semi-quantum algorithm to solve each one of them, proving correctness, runtime, and performing an error-robustness analysis. We conclude by applying our general framework to a number of exemplary use cases of interest.

\begin{figure*}[ht!]
    \centering
    \includegraphics[width=0.95\textwidth]{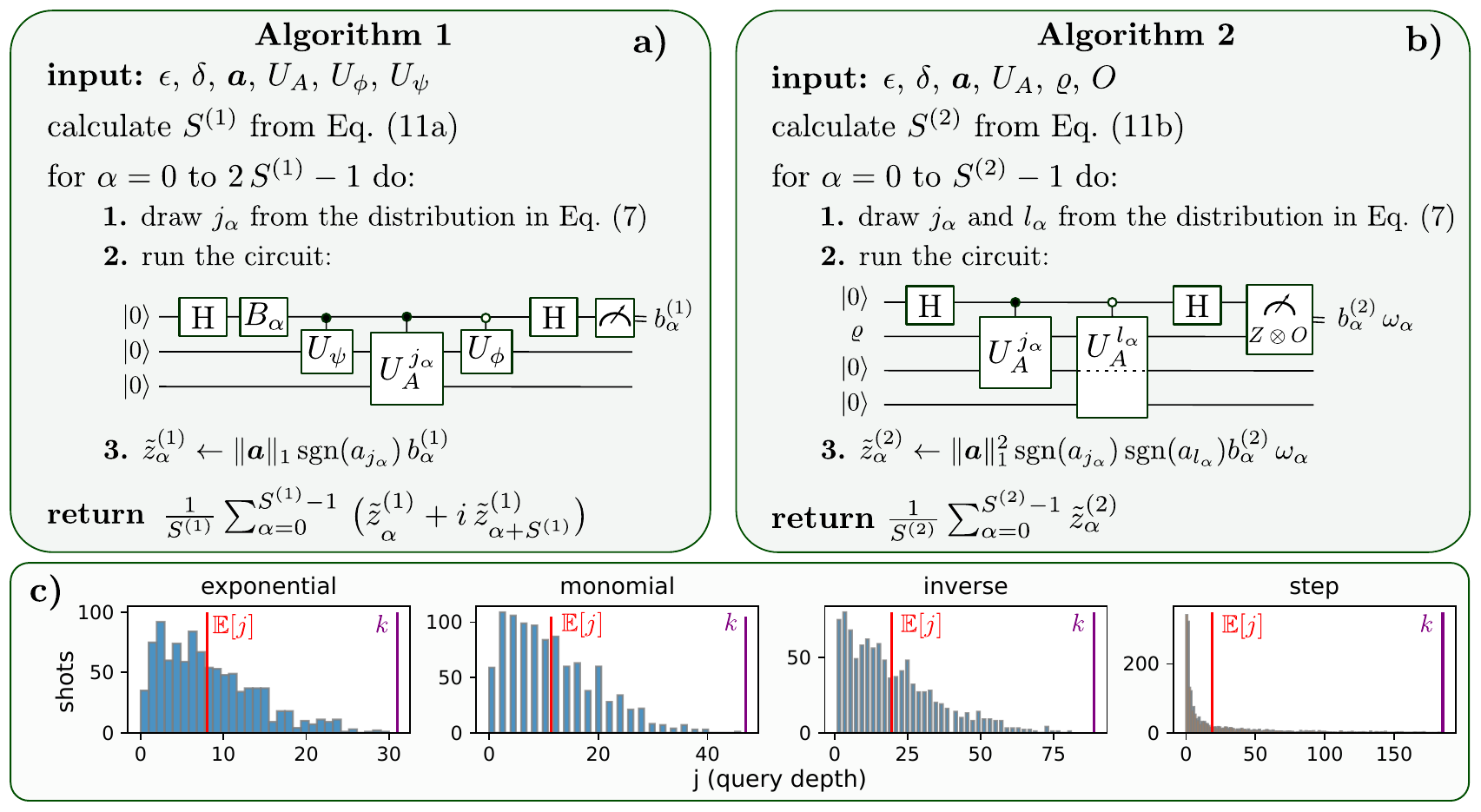}
    \caption{Alg. 1 in panel {\bf a)} solves Problem \ref{problem.1}, whereas Alg. 2 in panel {\bf b)} solves Problem \ref{problem.2}. {\bf a-b)} The algorithms receive as inputs: 
    $i)$ a qubitized block-encoding $U_A$ of $A$; 
    $ii)$ the vector $\vv{a}$ of Chebyshev coefficients defining the polynomial approximation $\tilde{f}$ to the target function $f$; 
    $iii)$ state preparation unitaries $U_\phi$ and $U_\psi$ (for Alg. 1), or the state $\varrho$ and the observable $O$ (for Alg. 2); 
    $iv)$ the tolerated error $\epsilon$ and failure probability $\delta$ for the statistical estimation. The algorithm repeats a number $\frac{2}{P}S^{(P)}$ of times two basic sampling steps. 
    The first step is to classically sample a Chebyshev polynomial degree $j_\alpha$ with probability $p(j_\alpha)=|a_{j_\alpha}|/\norm{\vv{a}}_1$. 
    The second step -- the only quantum subroutine -- is a Hadamard test (including a measurement of $O$, for Alg. 2) containing $j_\alpha$ successive queries to the controlled version of $U_A$ (plus another sequence of $l_\alpha$ queries but with the control negated and a different oracle-ancilla register for Alg. 2). 
    Finally, the average over all the measurement outcomes gives  the statistical estimate of the quantity of interest $z^{(P)}$, for $P=1$ or 2. 
    Interestingly, the Hadamard test automatically extracts the correct block of $U_A$, which relaxes the need for post-selection on the oracle ancillae. 
    Therefore, every experimental run contributes to the statistics (i.e., no measurement shot is wasted).  
    {\bf c)} Histograms of number of times (shots) a Chebyshev polynomial degree $j$ is drawn out of $1000$ samples, for the four use cases described in Sec. \sect{usecases}. 
    The vertical lines show the  maximal Chebyshev degree $k$ (purple) and the average degree $\mathbb{E}[j]$ (red). Importantly, for this figure, we do not estimate $k$ analytically using approximation theory. The values of $k$ plotted are numerically obtained as the minimum degree of $\tilde{f}$ such that the target error $\nu$ is attained. 
    The parameters used are: $\nu=10^{-2}$ (all examples),  $\beta=100$ (exponential function), 
 $t=200$ (monomial), $\kappa=8$ (inverse function), $\xi=20$ (step function). 
 In all cases, we observe a significant reduction in query complexity. This translates in practice into shallower circuits and hence less accumulated noise (see Sec. \sect{error}).}
    \label{fig:main}
\end{figure*}

\subsection{Problem statement}\label{sec:problem_statement}

We consider the following two concrete problems (throughout the paper we will use superscripts $^{(1)}$ or $^{(2)}$ on quantities referring to Problems \ref{problem.1} or \ref{problem.2}, respectively): 
\begin{problem}[Transformed vector amplitudes]
\label{problem.1}
Given access to state preparation unitaries $U_{\phi}$ and $U_{\psi}$ such that $U_{\psi}\ket{0}=\ket{\psi}$, $U_{\phi}\ket{0}=\ket{\phi}$, a Hermitean matrix $A$, and a real-valued function $f$, obtain an estimate of 
\begin{equation}\label{eq:problem1}
z^{(1)} = 
\mel{\phi}{f(A)}{\psi}
\end{equation}
to additive precision $\epsilon$ with failure probability at most $\delta$.
\end{problem} 
\noindent This class of problems is relevant for estimating the overlap between a linearly transformed state and another state of interest. This is the case, e.g., in linear system solving, where one is interested in the $i$-th computational basis component of a quantum state of the form $A^{-1}\ket{\vv{b}}$ encoding the solution to the linear system (see Sec. \sect{linearsystem} for details). 
The unitary $U_\phi$ preparing the computational-basis state $\ket{i}$, in that case, is remarkably simple, given by a sequence of bit flips. 

\begin{problem}[Transformed observable expectation values]
\label{problem.2}
Given access to a state preparation $\varrho$, a Hermitian matrix $A$, an observable $O$, and a real-valued function $f$, obtain an estimate of 
\begin{equation}\label{eq:problem2}
z^{(2)} = 
\Tr[O\,f(A)\,\varrho\,f(A)^{\dagger}]
\end{equation}
to additive precision $\epsilon$ with failure probability at most $\delta$. 
\end{problem} 
\noindent This is of relevance, e.g., when $A=H$ is a Hamiltonian, to estimate the partition function corresponding to $H$, as discussed below in Sec. \sect{partitionfunction}. 

We present randomized hybrid classical-quantum algorithms for these problems using Chebyshev-polynomial approximations of $f$ and coherent access to a block-encoding of $A$. Similar problems have been addressed in \cite{wang2023qubitefficient} but using Fourier approximations and randomizing also over a classical description of $A$ in the Pauli basis.

\subsection{Randomized semi-quantum matrix processing} 
\label{sec:framework}

Our framework is based on the Chebyshev approximation $\Tilde{f}$ of the function $f$ and a modified Hadamard test involving the qubitized block-encoding oracle $U_A$. 
The idea is to statistically simulate the coherent QSP algorithm using a hybrid classical/quantum procedure based on randomly choosing $j\in[k+1]$ according to its importance for Eq. \eqref{eq:ftilde} and then running a Hadamard test involving the block encoding $U_A^j$ of $\mathcal{T}_j(A)$.
Pseudo-codes for the algorithms are presented in Fig. \fig{main}.~{\bf a)} and \fig{main}.~{\bf b)} for Problems \ref{problem.1} and \ref{problem.2}, respectively. 
In both cases, the Hadamard test is the only quantum sub-routine. 
The total number of statistical runs will be $\frac{2}{P}\,S^{(P)}$, with $P=1$ or $2$, where $S^{(P)}$ will be given in Eqs. \eqref{eq:SP} below. 
The factor $\frac{2}{P}$ is a subtle difference between Algorithms \ref{problem.1} and \ref{problem.2} coming from the fact that the target quantity is a complex-valued amplitude in the former case, while in the latter it is a real number. This implies that two different types of Hadamard tests (each with $S^{(1)}$ shots) are needed to estimate the real and imaginary parts of $z^{(1)}$, while $z^{(2)}$ requires a single one. More technically, the procedure goes as follows. First, for every $\alpha\in[\frac{2}{P}\,S^{(P)}]$ run the following two steps:

\begin{enumerate}[wide, labelwidth=!, labelindent=10pt]
\item[$i)$]{Classical subroutine: sample a Chebyshev polynomial degree $j_\alpha\in[k+1]$ (and also $l_\alpha$ for $P=2$) from a probability distribution weighted by the coefficients $\vv{a}$ of $\Tilde{f}$, defined by
\begin{align}\label{eq:pj}
p({j})=\frac{|a_{j}|_{\phantom{1}}}{\norm{\vv{a}}_1},\quad \text{ for all } j\in[k+1]\,.
\end{align}
This has classical runtime ${\tilde{\mathcal{O}}}(k)$.}
\item[$ii)$]{Quantum subroutine: if $P=1$, run the Hadamard test in Fig. \fig{main} {\bf a)}
with $B_\alpha=\mathds{1}$ for $\alpha< S^{(1)}$ or $B_\alpha=S^\dagger:=\ketbra{0}-i\ketbra{1}$ for $\alpha\ge S^{(1)}$ and use the resulting random bit $b^{(1)}_{\alpha}\in\{-1,1\}$ to record a sample of the variable
\begin{equation}\label{eq:z1tilde}
\Tilde{z}^{(1)}_\alpha:= \norm{\vv{a}}_{1}\,\sgn(a_{j_\alpha})\,b^{(1)}_{\alpha}\,.
\end{equation}
If $P=2$, in turn, run the test in Fig. \fig{main} {\bf b)} to get as outcomes a random bit $b^{(2)}_{\alpha}\in\{-1,1\}$ and a random number $\omega_{\alpha}\in\{o_m\}_{m\in [D]}$ where $o_m$ is the $m$-th eigenvalue of $O$, and use this to record a sample of 
\begin{equation}\label{eq:z2tilde}
\Tilde{z}^{(2)}_\alpha:= \norm{\vv{a}}_{1}^2\,\sgn(a_{j_\alpha})\sgn(a_{l_\alpha})\,b^{(2)}_{\alpha}\,\omega^{\phantom{(2)}}_{\alpha}.
\end{equation}
}
\end{enumerate}
Then, in a final classical step, obtain the desired estimate $\Tilde{z}^{(P)}$ by computing the empirical mean over all the recorded samples as follows
\begin{subequations}
\label{eq:ztildeP}
\begin{align}
    \Tilde{z}^{(1)} &= \frac{1}{S^{(1)}}\sum_{\alpha=0}^{S^{(1)}-1} \left(\Tilde{z}^{(1)}_{\alpha}+i\,\Tilde{z}^{(1)}_{\alpha+S^{(1)}}\right),\label{eq:ztildeP1}\\
    \Tilde{z}^{(2)} &= \frac{1}{S^{(2)}}\sum_{\alpha=0}^{S^{(2)}-1} \Tilde{z}^{(2)}_{\alpha}\,.\label{eq:ztildeP2}
\end{align}
\end{subequations}

The following two theorems respectively prove the correctness of the estimator and establish the complexity of the algorithm. 
A simple but crucial auxiliary result for the correctness is the observation that the Hadamard test statistics (i.e. the expectation value of $b^{(P)}_{\alpha}$)  depends only on the correct block of $U_A^j$, removing the need of post-selection.  
With this, in App. \sect{main_lemma_proof}, we prove the following. 

\begin{theorem}[Correctness of the estimator]\label{main_lemma}
The empirical means $\Tilde{z}^{(1)}$ and $\Tilde{z}^{(2)}$
are unbiased estimators of  $\mel{\phi}{\Tilde{f}(A)}{\psi}$ and $\Tr[O\,\Tilde{f}(A)\,\varrho\,\Tilde{f}(A)^{\dagger}]$, respectively.     
\end{theorem}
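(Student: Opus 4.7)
The plan is to prove the two unbiasedness claims separately, but by the same recipe: for each sample index $\alpha$, first condition on the classical random variables (the sampled degrees $j_\alpha$ and, for $P=2$, $l_\alpha$), evaluate the conditional expectation of the quantum-measurement outcomes using the standard analysis of the modified Hadamard test together with Lemma \ref{cheb}, and finally average over $j_\alpha$ (and $l_\alpha$) using the importance-sampling distribution $p(j)=|a_j|/\norm{\vv{a}}_1$ from Eq.~\eqref{eq:pj}. Linearity of expectation then reduces the empirical mean in Eq.~\eqref{eq:ztildeP} to the same single-sample expectation.

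For Problem \ref{problem.1} I would first prove a lemma saying that the modified Hadamard test of Fig.~\ref{fig:main}\,{\bf a}), with control register ``twist'' $B_\alpha$ and $j_\alpha$ controlled applications of $U_A$, satisfies $\mathbb{E}\bigl[b_\alpha^{(1)}\mid j_\alpha\bigr]=\mathrm{Re}\,\bra{0_a\phi}U_A^{j_\alpha}\ket{0_a\psi}$ for $B_\alpha=\mathds{1}$ and $=\mathrm{Im}\,\bra{0_a\phi}U_A^{j_\alpha}\ket{0_a\psi}$ for $B_\alpha=S^\dagger$. This is where the ``no post-selection'' remark enters: the usual Hadamard-test derivation automatically projects onto the $\ket{0}_a$ sector of the oracle ancilla simply because both input states live in that sector and the ancilla is not measured, so only the $\ket{0}_a\!\bra{0}_a$ block of $U_A^{j_\alpha}$ contributes to the interference. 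Invoking Lemma \ref{cheb} then turns $\bra{0_a\phi}U_A^{j_\alpha}\ket{0_a\psi}$ into $\bra{\phi}\mathcal{T}_{j_\alpha}(A)\ket{\psi}$. Substituting into Eq.~\eqref{eq:z1tilde} and averaging over $j_\alpha$ gives
\begin{equation}
\mathbb{E}\bigl[\Tilde{z}^{(1)}_\alpha\bigr]
=\norm{\vv{a}}_1\sum_{j}\frac{|a_j|}{\norm{\vv{a}}_1}\,\sgn(a_j)\,\mathrm{Re/Im}\,\bra{\phi}\mathcal{T}_j(A)\ket{\psi}
=\mathrm{Re/Im}\,\bra{\phi}\Tilde{f}(A)\ket{\psi},
\end{equation}
where the choice of real/imaginary part is dictated by $B_\alpha$. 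Combining the two halves of the sum in Eq.~\eqref{eq:ztildeP1} reconstructs the complex number $\bra{\phi}\Tilde{f}(A)\ket{\psi}$.

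For Problem \ref{problem.2} the structure is the same, but the auxiliary Hadamard-test identity must cover both a left and a right block-encoded Chebyshev polynomial together with the observable readout. Specifically, I would show that in the circuit of Fig.~\ref{fig:main}\,{\bf b}) the joint outcome satisfies $\mathbb{E}\bigl[b_\alpha^{(2)}\,\omega_\alpha\mid j_\alpha,l_\alpha\bigr]=\mathrm{Tr}\!\left[O\,\mathcal{T}_{j_\alpha}(A)\,\varrho\,\mathcal{T}_{l_\alpha}(A)\right]$, again exploiting that the two separate oracle-ancilla registers (one with ordinary control, one with negated control, as noted in the caption) are never post-selected but contribute only through their $\bra{0_a}\cdot\ket{0_a}$ blocks by the same argument as in Problem \ref{problem.1}; Lemma \ref{cheb} and $\Tilde{f}(A)^\dagger=\Tilde{f}(A)$ (real coefficients, Hermitian $A$) deliver the two Chebyshev polynomials, and the measurement of $O$ in its eigenbasis contributes the factor $\omega_\alpha$. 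Multiplying by the $\norm{\vv{a}}_1^2\sgn(a_{j_\alpha})\sgn(a_{l_\alpha})$ prefactor and averaging independently over $j_\alpha$ and $l_\alpha$ collapses both Chebyshev sums to $\Tilde{f}(A)$ and yields $\mathbb{E}\bigl[\Tilde{z}^{(2)}_\alpha\bigr]=\mathrm{Tr}\!\left[O\,\Tilde{f}(A)\,\varrho\,\Tilde{f}(A)^\dagger\right]$.

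The main obstacle, and the real content of the proof, is the first step of each case: cleanly establishing the modified Hadamard-test identity without post-selection, especially for $P=2$ where one must also bookkeep the observable measurement and the fact that the two oracle-ancilla registers are handled with opposite controls. Once those identities are in hand, the rest is classical importance sampling plus linearity of expectation, and the empirical means in Eq.~\eqref{eq:ztildeP} are unbiased by construction.
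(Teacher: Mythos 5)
Your proposal is correct and follows essentially the same route as the paper: the paper likewise first proves Hadamard-test lemmas (App.~\ref{sec:Hadamard_test}, Lemmas~\ref{lem:had_circuit_1} and~\ref{lem:had_circuit_2}) showing that the conditional expectation of the measurement outcome given $j$ (and $l$) equals $\Re/\Im\,\mel{\phi}{\mathcal{T}_j(A)}{\psi}$ and $\Tr[O\,\mathcal{T}_j(A)\,\varrho\,\mathcal{T}_l(A)]$ respectively, with the oracle ancilla's $\ket{0}_a$ block selected automatically, and then averages over the importance-sampling distribution $p(j)=|a_j|/\norm{\vv{a}}_1$ exactly as you describe. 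No gaps.
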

\noindent Importantly, since $\Tilde{f}$ is a $\nu$-approximation to $f$, the obtained $\Tilde{z}^{(P)}$ are actually biased estimators of the ultimate quantities of interest $z^{(P)}$ in Eqs. \eq{problem1} and \eq{problem2}.
Such biases are always present in quantum algorithms based on approximate matrix functions, including the fully-coherent schemes for QSP \cite{LowChuang2017,LowChuangQuantum2019} and QSVT \cite{Gilyen2019,ChuangGrandUnification}.
Nevertheless, they can be made arbitrarily small in a tunable manner by increasing the truncation order $k$ in Eq. \eqref{eq:ftilde}.

Here, it is convenient to set $k$ so that $\nu^{(P)}\leq\epsilon/2$, where $\nu^{(1)}:=\nu$ and $\nu^{(2)}:=\nu\,(2\,\norm{f(A)}\,\norm{O}+\nu)$. 
This limits the approximation error in Eqs. \eq{problem1} or \eq{problem2} to at most $\epsilon/2$. In addition, demanding the statistical error to be also $\epsilon/2$, leads to (see App. \sect{main_lemma2_proof}) the following end-to-end sample and oracle-query complexities for the algorithm. 

\begin{theorem}[Complexity of the estimation]\label{main_lemma2}
Let $\epsilon>0$ and $\delta>0$ be respectively the tolerated additive error and failure probability; let $\vv{a}$ be the vector of coefficients in Eq. \eq{ftilde} and $\nu^{(P)}\leq\epsilon/2$ the error in $z^{(P)}$ from approximating $f$ with $\Tilde{f}$. Then, if the number of samples is at least
\begin{subequations}\label{eq:SP}
\begin{numcases}{S^{(P)} =}
        \frac{16\norm{\vv{a}}^{2}_{1}}{\epsilon^{2}}\log\frac{4}{\delta}\,, & \text{for } P=1, \label{eq:SP1}\\
        \frac{8\,\norm{O}^2\norm{\vv{a}}^{4}_{1}}{\epsilon^{2}}\log\frac{2}{\delta}\,, & \text{for } P=2, \label{eq:SP2}
\end{numcases}
\end{subequations}
Eqs. \eqref{eq:ztildeP} give an $\epsilon$-precise estimate of $z^{(P)}$ with confidence $1-\delta$. 
Moreover, the total expected runtime is $Q^{(P)}:= 2\,\mathbb{E}[j]\,S^{(P)}$, where $\mathbb{E}[j]:=\sum_{j=0}^k j\,p(j)$. 
\end{theorem}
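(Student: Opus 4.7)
The plan is to decompose the total error as bias plus statistical fluctuation, bound each by $\epsilon/2$, and then count the expected queries. By the triangle inequality,
\[
|\Tilde{z}^{(P)} - z^{(P)}| \le |\Tilde{z}^{(P)} - \mathbb{E}[\Tilde{z}^{(P)}]| + |\mathbb{E}[\Tilde{z}^{(P)}] - z^{(P)}|\,.
\]
Theorem \ref{main_lemma} identifies $\mathbb{E}[\Tilde{z}^{(P)}]$ with the target quantity evaluated on $\Tilde{f}$ in place of $f$, so the second (bias) term is exactly the approximation error $\nu^{(P)}\le \epsilon/2$ assumed in the hypothesis. It therefore suffices to bound the first (statistical) term by $\epsilon/2$ with failure probability at most $\delta$.

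For this, I would apply Hoeffding's inequality to the i.i.d.\ samples $\Tilde{z}^{(P)}_\alpha$. A direct bound on Eqs. \eq{z1tilde}--\eq{z2tilde}, using $|b^{(P)}_\alpha|=1$ and $|\omega_\alpha|\le\norm{O}$, gives $|\Tilde{z}^{(1)}_\alpha|\le \norm{\vv{a}}_1$ and $|\Tilde{z}^{(2)}_\alpha|\le \norm{O}\,\norm{\vv{a}}_1^2$, so the samples sit in intervals of lengths $2\norm{\vv{a}}_1$ and $2\norm{O}\,\norm{\vv{a}}_1^2$, respectively. For $P=1$ the estimator is complex-valued, so I would split the budget by treating real and imaginary parts separately, demanding each component deviate by at most $\epsilon/(2\sqrt{2})$ (so that the modulus is at most $\epsilon/2$) and union-bounding the failure probability at $\delta/2$ per component. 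Hoeffding applied to each half of $S^{(1)}$ samples then reproduces the constant $16$ and the logarithm $\log(4/\delta)$ in Eq. \eq{SP1}. For $P=2$, a single application of Hoeffding with tolerance $\epsilon/2$ and failure probability $\delta$ on the real-valued estimator directly gives Eq. \eq{SP2}.

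The runtime count is then bookkeeping. By Lemma \ref{cheb}, each Hadamard test in Alg.~1 consumes $j_\alpha$ controlled queries to $U_A$, while Alg.~2 consumes $j_\alpha + l_\alpha$ such queries. Since $j_\alpha$ and (for $P=2$) $l_\alpha$ are drawn independently from $p(j)$, the expected per-run query counts are $\mathbb{E}[j]$ and $2\,\mathbb{E}[j]$, respectively. Multiplying by the total number of runs---$2\,S^{(1)}$ for $P=1$ (the real and imaginary Hadamard tests each consume $S^{(1)}$ samples) and $S^{(2)}$ for $P=2$---yields $Q^{(P)} = 2\,\mathbb{E}[j]\,S^{(P)}$ uniformly.

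The only subtle step I anticipate is in the $P=1$ case, where correctly coordinating the $1/\sqrt{2}$ error split between real and imaginary components with the $\delta/2$ union bound is what allows the clean constants $16$ and $\log(4/\delta)$ to emerge; a naive analysis would produce slightly looser prefactors. Everything else is a standard application of Hoeffding together with Lemma \ref{cheb}.
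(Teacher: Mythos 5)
Your proposal is correct and follows essentially the same route as the paper's proof in App.~\ref{sec:main_lemma2_proof}: a bias--fluctuation split via the triangle inequality with Theorem~\ref{main_lemma} supplying the mean, Hoeffding's inequality on the bounded samples $|\Tilde{z}^{(1)}_\alpha|\le\norm{\vv{a}}_1$ and $|\Tilde{z}^{(2)}_\alpha|\le\norm{O}\norm{\vv{a}}_1^2$, and for $P=1$ the $\epsilon/(2\sqrt{2})$ per-component tolerance with a $\delta/2$ union bound over the independently sampled real and imaginary parts, which is exactly what produces the prefactor $16$ and the $\log(4/\delta)$. The query-count bookkeeping ($\mathbb{E}[j]$ per run over $2S^{(1)}$ runs for $P=1$, versus $2\,\mathbb{E}[j]$ per run over $S^{(2)}$ runs for $P=2$) also matches.
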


A remarkable consequence of this theorem is that the expected number of queries per statistical run is $P\times\mathbb{E}[j]$. 
Instead, if we used standard QSVT (together with a similar Hadamard test to avoid post-selection), each statistical run would take $P\times k$ queries (and an extra ancillary qubit coherently controlling everything else would be required).
As shown in Fig. \fig{main}.~{\bf c)}, $\mathbb{E}[j]$ can be significantly smaller than $k$ in practice. 
In fact, in Sec. \sect{usecases}, we prove scaling advantages of $\mathbb{E}[j]$ over $k$.
These query-complexity advantages translate directly into reductions in circuit depth and, hence, also in noise sensitivity (see next sub-section).
As for sample complexity, the statistical overhead of our semi-quantum algorithms scales with $\norm{\vv{a}}_{1}$, while that of fully-quantum ones would have a similar scaling with $\norm{f(A)}$, due to the required normalization for block encoding. 
Interestingly, in all the use cases analyzed, $\norm{\vv{a}}_1$ and $\norm{f(A)}$ differ at most by a logarithmic factor. 
Finally, another appealing feature is that our approach relaxes the need to compute the QSP/QSVT angles, which is currently tackled with an extra classical pre-processing stage of runtime $\mathcal{O}\big(\text{poly}(k)\big)$ \cite{LowChuang2017,LowChuangQuantum2019,Gilyen2019,ChuangGrandUnification}.

We emphasize that here we have assumed Hermitian $A$ for the sake of clarity, but a straightforward extension of our randomized scheme from QSP to QSVT (see App. \sect{QSVT})
gives the generalization to generic $A$. Moreover, in Lemma \ref{lemma_hadamardsecondkind} in App. \ref{sec:Hadamard_test}, we also extend the construction to Chebyshev polynomials of the second kind. This is useful for ground-state energy estimation, in Sec. \sect{groundstate}.


\subsection{Intrinsic noise-sensitivity reduction} 
\label{sec:error}

Here we study how the  reduction in query complexity per run from $k$ to the average value $\mathbb{E}[j]$ translates into sensitivity to experimental noise.
The aim is to make a quantitative but general comparison between our randomized semi-quantum approach and fully-quantum schemes, remaining agnostic to the specific choice of operator function, circuit compilation, or physical platform.
To this end, we consider two toy error models that allow one to allocate one unit of noise per oracle query. 

Our first error model consists of a faulty quantum oracle given by the ideal oracle followed by a globally depolarizing channel $\Lambda$ of noise strength $p$, defined by \cite{Aolita15review}
\begin{equation}
\label{eq:depol}
\Lambda[\varrho]:=(1-p)\,\varrho + p\,\frac{\mathds{1}}{D_\text{tot}}.
\end{equation}
Here, $\varrho$ is the joint state of the total Hilbert space in Fig. \fig{main}{\bf a} (system register, oracle ancilla, and Hadamard test ancilla) and $D_\text{tot}$ its dimension.
In App. \sect{error_analysis_proof} we prove:
\begin{theorem}[Average noise sensitivity]\label{thm:error_robust}
Let $\Tilde{z}^{(P)}$  be the ideal estimators \eq{ztildeP} and $\Tilde{z}^{(P),\Lambda}$ 
their noisy version with $\Lambda$ acting after each oracle query in Fig. \fig{main}. Then 
\begin{subequations}\label{eq:noise_robustness}
\begin{align}
\abs{\mathbb{E}\big[\Tilde{z}^{(1)}\big]-\mathbb{E}\big[\Tilde{z}^{(1),\Lambda}\big]} &\le
p\,E^{(1)}_\text{sq}\leq\, p\,\norm{\vv{a}}_{1}\,\mathbb{E}[j] \,,\label{eq:noise_robustness1}\\    
\abs{\mathbb{E}\big[\Tilde{z}^{(2)}\big]-\mathbb{E}\big[\Tilde{z}^{(2),\Lambda}\big]} &\le p\,E^{(2)}_\text{sq}
\leq\, 2\,p\,\norm{\vv{a}}_{1}^2\,\mathbb{E}[j],\label{eq:noise_robustness2}
\end{align}    
\end{subequations}
where $E^{(1)}_\text{sq} := \abs{\sum^{k}_{j=0}j\,a_{j}\mel**{\phi}{\mathcal{T}_j(A)}{\psi}}$ and $E^{(2)}_\text{sq} := \abs{\sum^{k}_{j,l=0}(j+l)\,a_{j}\,a_{l}\Tr{O\,\mathcal{T}_j(A)\,\varrho\,\mathcal{T}_l(A)}}$.
\end{theorem}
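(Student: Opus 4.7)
The plan is to exploit the unitary covariance of the depolarizing channel, $\Lambda[V\rho V^\dagger]=V\,\Lambda[\rho]\,V^\dagger$ (which follows from $V\,\mathds{1}\,V^\dagger=\mathds{1}$), in order to push all noise events to the end of each circuit. A short induction on the number $n$ of noisy oracle queries then yields
\begin{equation*}
(\Lambda\circ\mathcal{U}_A)^{n}[\rho] \;=\; (1-p)^{n}\,\mathcal{U}_A^{n}[\rho] \;+\; \bigl(1-(1-p)^{n}\bigr)\,\frac{\mathds{1}}{D_{\text{tot}}},
\end{equation*}
where $\mathcal{U}_A(\cdot)$ denotes conjugation by one oracle query. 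For Algorithm~1 one applies this with $n=j_\alpha$; for Algorithm~2, since each of the $j_\alpha$ and $l_\alpha$ queries---on either of the two oracle-ancilla registers---is followed by one depolarization of the \emph{joint} state, one applies it with $n=j_\alpha+l_\alpha$.

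Next, I observe that the maximally mixed contribution washes out of the statistics. The bit $b^{(P)}_\alpha$ is obtained by measuring $X$ or $Y$ on the Hadamard-test ancilla, and for $P=2$ the statistic $b\,\omega$ corresponds to measuring $X\otimes O$ or $Y\otimes O$. Since $X$ and $Y$ are traceless, all these observables have vanishing expectation on $\mathds{1}/D_{\text{tot}}$. Hence the conditional expectations reduce to their ideal counterparts multiplied by a single noise factor,
\begin{align*}
\mathbb{E}[\,b^{(1)}_\alpha \mid j\,]_\Lambda &= (1-p)^{j}\,\mathbb{E}[\,b^{(1)}_\alpha \mid j\,]_{\text{ideal}},\\
\mathbb{E}[\,b^{(2)}_\alpha\,\omega_\alpha \mid j,l\,]_\Lambda &= (1-p)^{j+l}\,\mathbb{E}[\,b^{(2)}_\alpha\,\omega_\alpha \mid j,l\,]_{\text{ideal}}.
\end{align*}

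Averaging over the classical samples exactly as in Theorem~\ref{main_lemma}, the biases then read
\begin{align*}
\mathbb{E}[\tilde z^{(1)}]-\mathbb{E}[\tilde z^{(1),\Lambda}] &= \sum_{j=0}^{k} a_j\bigl[1-(1-p)^{j}\bigr]\,\mel{\phi}{\mathcal{T}_j(A)}{\psi},\\
\mathbb{E}[\tilde z^{(2)}]-\mathbb{E}[\tilde z^{(2),\Lambda}] &= \sum_{j,l=0}^{k} a_j a_l\bigl[1-(1-p)^{j+l}\bigr]\,\Tr\bigl[O\,\mathcal{T}_j(A)\,\varrho\,\mathcal{T}_l(A)\bigr].
\end{align*}
The Bernoulli inequality $1-(1-p)^{n}\le p\,n$ (valid for $p\in[0,1]$, $n\ge 0$) immediately produces the first of each pair of inequalities in the statement, namely the bounds in terms of $E^{(P)}_{\text{sq}}$. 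The coarser bounds then follow from $\|\mathcal{T}_j(A)\|\le 1$ and $\|\varrho\|_1=1$, together with $\sum_j j\,|a_j|=\|\vv{a}\|_1\,\mathbb{E}[j]$; for $P=2$ the factor $(j+l)$ splits into two mirror contributions, recovering the overall factor $2$.

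The only subtlety I anticipate is bookkeeping rather than analysis: confirming that $\Lambda$ genuinely acts on the \emph{entire} joint Hilbert space (system, oracle ancillae, and Hadamard ancilla), so that both the covariance push-through and the traceless-measurement cancellation are exact; and, for $P=2$, verifying that counting exactly one depolarization per oracle query---irrespective of whether the control is active or negated, and irrespective of which of the two oracle-ancilla registers is addressed---indeed accumulates to the total factor $(1-p)^{j+l}$. Once this bookkeeping is fixed the rest of the argument is elementary.
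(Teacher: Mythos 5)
Your proposal is correct and follows essentially the same route as the paper: commute the depolarizing channel past the unitaries (unitary covariance), use tracelessness of the Hadamard-ancilla observable to kill the maximally mixed contribution so the noisy conditional expectation is $(1-p)^{j}$ (resp. $(1-p)^{j+l}$) times the ideal one, then apply Bernoulli's inequality and the triangle inequality with $\norm{\mathcal{T}_j(A)}\le 1$. The only difference is that you spell out the $P=2$ bookkeeping explicitly, whereas the paper's written proof only carries out the $P=1$ case.
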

\noindent Our second model is coherent errors that make the quantum oracle no longer the exact block encoding $U_A$ of $A$ but only a $\varepsilon$-approximate block encoding (a unitary with operator-norm distance $\varepsilon$ from $U_A$).
In App. \sect{error_analysis_proof}, we show that Eq. \eq{noise_robustness} holds also there  with  $p$ replaced by $2\varepsilon$.

It is instructive to compare Eq.\ \eq{noise_robustness} with the inaccuracy for the corresponding fully-quantum scheme. 
A fair scenario for that comparison (in the case of Problem 1) is to equip the standard QSVT with a Hadamard test similar to the ones in Fig. \fig{main} so as to also circumvent the need for post-selection.
Notice that, while in our randomized method, only the Hadamard ancilla controls the calls to the oracle, the standard QSVT circuit involves two-qubit control to also implement the pulses that determine the Chebyshev coefficients.
As a consequence, the underlying gate complexity per oracle query would be considerably higher than for our schemes (with single-qubit gates becoming two-qubit gates, two-qubit gates becoming Toffoli gates, etc). For this reason, the resulting noise strength  $p_\text{fq}$ is expected to be larger than $p$.
The left-hand side of Eq. \eq{noise_robustness1} would then (see App. \sect{error_analysis_proof}) be upper-bounded by $p_\text{fq}\,E_\text{fq}$, with $E_\text{fq} = k\,|\mel{\phi}{\Tilde{f}(A)}{\psi}|$, where $p_\text{fq}> p$ and $k>\mathbb{E}[j]$. 

Another natural scenario for comparison is that where the fully-quantum algorithm does not leverage a Hadamard test but implements post-selection measurements on the oracle ancilla, in a repeat-until-success strategy. This comparison applies only to Problem \ref{problem.2}, since one cannot directly measure the complex amplitudes for Problem \ref{problem.1}. The advantage though is that the circuits are now directly comparable because the gate complexities per oracle query are essentially the same (the fully-quantum scheme has extra QSP pulses, but these are single-qubit gates whose error contribution is low).
Hence, similar error rates to $p$ are expected here, so that one would have the equivalent of Eq. \eq{noise_robustness2} being $\mathcal{O}(k\, p)$.
This is already worse than Eq. \eq{noise_robustness2} because $k>\mathbb{E}[j]$, as already discussed. 
Moreover, with post-selection, one additionally needs to estimate normalizing constants with an independent set of experimental runs, which incurs in extra systematic and statistical errors. 
In contrast, our method does not suffer from this issue, as it directly gives the estimates in Eqs. \eq{problem1} or \eq{problem2} regardless of state normalization (see Sec. \sect{usecases}). 

Finally, a third possibility could be to combine the fully-quantum scheme with quantum amplitude amplification to manage the post-selection. This would quadratically improve the dependence on the post-selection probability.
However, it would then be the circuit depth that would gain a factor inversely proportional to the square root of the post-selection probability.
Unfortunately, this is far out of reach of early-fault tolerant hardware.

\subsection{End-user applications}
\label{sec:usecases}

Here we illustrate the usefulness of our framework with four use cases of practical relevance: partition function estimation (both for classical or general Hamiltonians), linear system solving, and ground-state energy estimation. These correspond to $f(x)=x^t$, $e^{-\beta x}$, $x^{-1}$, and $\theta(x)$, respectively. 
The end-to-end complexities for each case are summarized in Table \tab{main_table}.

\begin{table*}[ht]
    \centering
    \begin{tabular}{||c||c|c|c|c||}
     \hline
     \rule{0pt}{3ex}
     \rule[-1.5ex]{0pt}{0pt}
     {\bf Problem} & {\bf App.} & {\bf Maximal query depth} 
     & {\bf Expected query depth} & {\bf Total expected runtime} \\
     \hline
     \hline
     \rule{0pt}{4ex}
     \rule[-1.5ex]{0pt}{0pt}
     Part. funct. (MCMC) & \sect{explicit_examples_monomials} & $\sqrt{\frac{2}{\Delta}}\log\Big(\frac{12\,Z_\beta\, e^{\beta E_{\vv{y}}}}{\epsrel}\Big)$
     & $\sqrt{\frac{2}{\pi\,\Delta}\log\Big(\frac{12\,Z_\beta\, e^{\beta E_{\vv{y}}}}{\epsrel}\Big)}$ 
     & $\mathcal{O}\left(\frac{e^{2\beta E_{\vv{y}}} Z_\beta^2}{\sqrt{\Delta}}\sqrt{\log\Big(\frac{Z_\beta e^{\beta E_{\vv{y}}}}{\epsrel}\Big)}\frac{\log(1/\delta)}{\epsrel^2}\right)$ \\
     \hline
     \rule{0pt}{4ex}
     \rule[-2ex]{0pt}{0pt}
     Part. funct. (QITE) & \sect{explicit_examples_qite} & $\mathcal{O}\Big(\sqrt{\beta}\log\big(\frac{D\, e^{\beta}}{Z_\beta\,\epsrel}\big)\Big)$ 
     & $\mathcal{O}\left(\sqrt{\beta}\right)$ & ${\mathcal{O}}\left(\frac{D^2 \sqrt{\beta}\,e^{2\beta}}{Z_{\beta}^2}\frac{\log(1/\delta)}{\epsrel^2}\right)$ \\
     \hline
     \rule{0pt}{4ex}
     \rule[-2ex]{0pt}{0pt}     
     QLSS: $\bra{i}A^{-1}\ket{\vv{b}}$ & \sect{explicit_examples_inverse} & $\mathcal{O}\left(\kappa\log\big(\frac{\kappa}{\epsilon}\big)\right)$ 
     & $\mathcal{O}\left(\kappa\,\sqrt{\log\big(\frac{\kappa}{\epsilon}\big)}\right)$ & $\mathcal{O}\left(\kappa^3\log^{5/2}\!\big(\frac{\kappa}{\epsilon}\big)\frac{\log(1/\delta)}{\epsilon^2}\right)$ \\
     \hline
     \rule{0pt}{4ex}
     \rule[-2ex]{0pt}{0pt} 
     QLSS: $\bra{\vv{b}}A^{-1}OA^{-1}\ket{\vv{b}}$ & \sect{explicit_examples_inverse} & $\mathcal{O}\left(\kappa\log\Big(\frac{\kappa^2\norm{O}}{\epsilon}\Big)\right)$ 
     & $\mathcal{O}\left(\kappa\,\sqrt{\log\Big(\frac{\kappa^2\norm{O}}{\epsilon}\Big)}\right)$ & $\mathcal{O}\left(\kappa^5\norm{O}^2\log^{9/2}\!\Big(\frac{\kappa^2\norm{O}}{\epsilon}\Big)\frac{\log(1/\delta)}{\epsilon^2}\right)$ \\
     \hline
     \rule{0pt}{4ex}
     \rule[-2ex]{0pt}{0pt}      
         Ground-state energy & \sect{explicit_examples_step} & $\mathcal{O}\left(\frac{1}{\xi}\log\big(\frac{1}{\eta}\big)\right)$ 
         & $\quad\mathcal{O}\left(\frac{1}{\xi}\frac{\sqrt{\log(1/\eta)}}{\log((1/\xi)\log(1/\eta))}\right)$
         & $
    \mathcal{O}\left(\frac{1}{\eta^{2}\xi}\sqrt{\log\big(\frac{1}{\eta}\big)}\log\big(\frac{1}{\xi}
    \big)\log\big(\frac{1}{\delta}
    \big)\right)
    $ \\   
     \hline
    \end{tabular}
    \caption{
    {\bf Complexities of our algorithms for end-user applications}. 
    The first column indicates the specific use case (see Sec. \sect{usecases}).
    The second one indicates the appendix with the corresponding derivations. 
    The third column shows the maximal query complexity per run $k$. Chebyshev-based fully-quantum matrix processing (using the same Hadamard tests as us) would require the same query depth but in {\it every} run.  
    The fourth column displays the average query complexity per run $P\,\mathbb{E}[j]$, with $P=1$ for Alg. 1 and $P=2$ for Alg. 2. 
    We notice that in the last row we use $\xi$ for the additive error in the ground state energy (coming from the $\mathcal{O}\big(\log\big(\frac{1}{\xi}\big)\big)$ steps in the binary search) to distinguish from the $\epsilon$ (which here is $\mathcal{O}(\eta)$) reserved for the estimation error in the quantities $z^{(P)}$. 
    As can be seen, in all use cases, the average query depth features a better scaling than $k$ on certain parameters. 
    This is an interesting speed-up specific to the randomization over the Chebyshev expansion.
    Finally, the fourth column shows the expected runtime, given by $Q^{(P)}$ in Theorem \ref{main_lemma2}, namely the average query depth times the sample complexity $S^{(P)}$. 
    Here, $S^{(P)}$ scales with $\norm{\vv{a}}_1$ exactly as it would with $\norm{f(A)}$ had we used the fully-quantum algorithm. Interestingly, $\norm{\vv{a}}_1$ and  $\norm{f(A)}$  happen to be of the same order for the use cases studied, except for small logarithmic corrections for QLSSs and ground-state energy estimation (see Table \tab{chebyshev_data} in App. \sect{explicit_examples} for details). 
    All in all, the total expected runtimes are either similar or slightly superior to the corresponding runtimes of Chebyshev-based fully-quantum approaches. 
    Remarkably, this is achieved in tandem with important advantages in terms of quantum hardware (see, e.g., Sec. \sect{error}). 
    }
    \label{tab:main_table}
\end{table*}

\subsubsection{Relative-error partition function estimation} \label{sec:partitionfunction}

Partition function estimation is a quintessential hard computational problem, with applications {ranging from} statistical physics {to generative} machine learning, as in Markov
random fields \cite{Ma_Peng_Wang}, Boltzmann machines \cite{KRAUSE2020103195}, and even the celebrated transformer architecture \cite{shim2022probabilistic} from large language models. 
Partition functions also appear naturally in other problems of practical relevance, such as constraint satisfaction problems \cite{BULATOV2005148}.

The partition function of a Hamiltonian $H$ at inverse temperature $\beta$ is defined as
\begin{equation}\label{eq:partitionfunction}
Z_{\beta}=\Tr\left[ e^{-\beta H}\right]\,.
\end{equation}
One is typically interested in the problem of estimating $Z_\beta$ to relative error $\epsrel$, that is, finding $\Tilde{Z}_\beta$ such that
\begin{align}
   \big|\Tilde{Z}_\beta - Z_\beta\big| \le \epsrel\,Z_\beta\,.
\end{align}
This allows for the estimation of relevant thermodynamic functions, such as the Helmholtz free energy $F=\frac{1}{\beta}\log Z_{\beta}$, to additive precision. The naive classical algorithm based on direct diagonalization runs in time $\mathcal{O}(D^{3})$, where $D=\text{dim}(\mathcal{H}_s)$ is the Hilbert space dimension. Although it can be improved to $\mathcal{O}(D)$ using the kernel polynomial method \cite{RevModPhys.78.275} if $H$ is sparse, one expects no general-case efficient algorithm to be possible due to complexity theory arguments \cite{bravyi2021complexity}. In turn, if the Hamiltonian is classical (diagonal), $Z_{\beta}$ can be obtained exactly in classical runtime $\mathcal{O}(D)$.  General-purpose quantum algorithms (that work for any inverse temperature and any Hamiltonian) have been proposed 
\cite{PhysRevLett.103.220502,chowdhury_computing_2021,PhysRevA.107.012421}. The list includes another algorithm \cite{chowdhury_computing_2021} that, like ours, utilizes the Hadamard test and a block-encoding of the Hamiltonian. 

In the following, we present two different quantum algorithms for partition function estimation: one for classical Ising models, based on the Markov-Chain Monte-Carlo (MCMC) method, and another for generic non-commuting Hamiltonians, based on quantum imaginary-time evolution (QITE) simulation \cite{Sunetal21,silva_fragmented_2022}.

\paragraph{Partition function estimation via MCMC:}

Here, we take $H$ as the Hamiltonian of a classical Ising model. As such, spin configurations, denoted by $\ket{\vv{y}}$, are eigenstates of $H$ with corresponding energies $E_{\vv{y}}$. Let us define the {coherent version of the} Gibbs state $\ket{\sqrt{\bpi}} := Z_\beta^{-1/2}\sum_{\vv{y}}e^{-\beta E_{\vv{y}}/2}\ket{\vv{y}}$.  
 Then, for any $\ket{\vv{y}}$, the partition function satisfies the identity
\begin{align}\label{eq:Zidentity}
    Z_\beta = \frac{e^{-\beta E_{\vv{y}}}}{\mel{\vv{y}}{\Pi_{{\bpi}}}{\vv{y}}} 
\end{align}
with $\Pi_{{\bpi}}:=\ketbra{\sqrt{\bpi}}$. Below we discuss how to use our framework to obtain an estimation of $\mel{\vv{y}}{\Pi_{{\bpi}}}{\vv{y}}$ for a randomly sampled $\ket{\vv{y}}$ and, therefore, approximate the partition function. 

Let $A$ be the discriminant matrix\ \cite{szegedy2004} of a Markov chain having the Gibbs state of $H$ at inverse temperature $\beta$ as its unique stationary state. The Szegedy quantum walk unitary\ \cite{szegedy2004} provides a qubitized block-encoding $U_A$ of $A$ that can be efficiently implemented\ \cite{Lemieux2020efficientquantum}. A useful property of $A$ is that the monomial $A^t$ approaches $\Pi_{\bpi}$  for sufficiently large integer $t$ (the precise statement is given by Lemma\ \ref{lemma_partition_function_power_method} in App.\ \sect{explicit_examples_monomials}). This implies that  $\mel{\vv{y}}{\Pi_{{\bpi}}}{\vv{y}}$ can be estimated using Alg. 1  with $f(A)=A^t$ and $\ket{\psi}=\ket{\phi}=\ket{\vv{y}}$. 
In this case, the state preparation unitaries $U_{\psi}=U_{\phi}$ will be simple bit flips. 

A $\nu$-approximation $\Tilde{f}(A)$ can be constructed by truncating the Chebyshev representation of $A^t$ to order $k=\sqrt{2\,t\log(2/\nu)}$  \cite{Vishnoi2013}. The {$l_1$}-norm of the {corresponding} coefficient vector is $\norm{\vv{a}}_1=1-\nu$.  For this Chebyshev series, the ratio $\mathbb{E}[j]/k$ between the average and the maximum query complexities can be shown (see Lemma \ref{lem:av_Q_mon} in App.\ \sect{explicit_examples_monomials}) to be at most  $(1-\nu)^{-1}/{\sqrt{\pi\,\log(2/\nu)}}
$ for large $t$. 
This implies that the more precise the estimation, the larger the advantage of the randomized algorithm in terms of total expected runtime.
{For instance, for $\nu=10^{-2}$, the ratio is roughly equal to 0.25.}

To estimate the partition function up to relative error $\epsrel$, Alg. 1 needs to estimate $\mel{\vv{y}}{\Pi_{{\bpi}}}{\vv{y}}$ with additive error $\epsilon=\frac{ e^{-\beta E_{\vv{y}}}}{2\,Z_\beta}\epsrel$ (see Lemma \ref{lem:estimate_z} in App.\ \sect{explicit_examples_monomials}).
In Lemma \ref{lem:complexity_mon}, in App.\ \sect{explicit_examples_monomials}, we show that the necessary $t$ and $\nu$ required for that yield a  maximum query complexity per run of $k=
\sqrt{\frac{2}{\Delta}}\log(\frac{12\,Z_\beta\, e^{\beta E_{\vv{y}}}}{\epsrel})$ and an average query complexity of $\mathbb{E}[j]=\sqrt{\frac{2}{\pi\,\Delta}\log(\frac{12\,Z_\beta\, e^{\beta E_{\vv{y}}}}{\epsrel})}$, where $\Delta$ is the spectral gap of $A$.
Moreover, from Theorem \ref{main_lemma2}, the necessary sample complexity is $S^{(1)}=64\,e^{2\beta E_{\vv{y}}}\, Z_\beta^2\,\frac{\log(2/\delta)}{\epsrel^2}$. 
This leads to the total expected runtime in Table \tab{main_table}.  

Three important observations about the algorithm's complexities are in place. First, the total expected runtime has no explicit dependence on the Hilbert space dimension $D$ and maintains the square-root dependence on $\Delta$ (a Szegedy-like quadratic quantum speed-up \cite{szegedy2004}).
Second, all three complexities in the first row of the  table depend on the product $Z_\beta\, e^{\beta E_{\vv{y}}}=\mathcal{O}\big(e^{\beta (E_{\vv{y}}-E_\text{min})}\big)
$, with $E_\text{min}$ the minimum eigenvalue of $H$, where the scaling holds for large $\beta$.
This  scaling plays more in our favor the lower the energy $E_{\vv{y}}$ of the initial state $\vv{y}$ is.
Hence, by uniformly sampling a constant number of different bit-strings $\vv{y}$ and picking the lowest energy one, one ensures to start with a convenient initial state.
Third, the quadratic advantage featured by $\mathbb{E}[j]$ over $k$ on the logarithmic term is an interesting type of speed-up entirely due to the randomization over the components of the Chebyshev series.

To end up with, the total expected runtime obtained can potentially provide a quantum advantage over classical estimations in regimes where $\frac{e^{2\beta (E_{\vv{y}}-E_\text{min})}}{\sqrt{\Delta}\,\epsrel^2}<D$.

\paragraph{Partition function estimation via QITE:} 
Alternatively, the partition function associated with a Hamiltonian $H$ can be estimated by quantum simulation of imaginary time evolution (QITE). This method applies to any Hamiltonian (not just classical ones), assuming a block-encoding of $H$.  
$Z_{\beta}$ can be written in terms of the expectation value of the QITE propagator $e^{-\beta H}$ over the maximally mixed state $\varrho_0:=\frac{\mathds{1}}{D}$, that is,
\begin{equation}
Z_{\beta}=D\,\Tr\big[e^{-\beta H}\varrho_0\big]\,.
\end{equation}
Therefore, we can apply our Alg. 2 with $A=H$, $O=D\mathds{1},\varrho=\varrho_0$, and $f(H)=e^{-\beta H/2}$ to estimate $Z_{\beta}$ with relative precision $\epsrel$ and confidence $1-\delta$. The sample complexity is obtained from Eq.\ \eqref{eq:SP2}  as $S^{(2)}=\frac{8\, D^2e^{2\beta}}{\epsrel^2 Z_{\beta}^2}\log\frac{2}{\delta}$, by setting the additive error equal to $Z_\beta\,\epsrel$. 

We use the Chebyshev approximation of the exponential function introduced in Ref.\ \cite{Vishnoi2013}, which has a quadratically better asymptotic dependence on $\beta$ than other well-known expansions such as the Jacobi-Anger decomposition \cite{silva_fragmented_2022}.  This expansion was used before to implement the QITE propagator using QSVT coherently \cite{Gilyen2019}.
The resulting truncated Chebyshev series has order $k=\sqrt{2\,\max\left\{\frac{e^2\beta}{2},\log\big(\frac{8D\,e^{\beta}}{Z_\beta\,\epsrel}\big)\right\}\,\log\big(\frac{16 D\,e^{\beta}}{Z_\beta\,\epsrel}\big)}$  and coefficient $l_1$-norm $\norm{\vv{a}}_1\leq e^{\beta/2}+\nu$ (see Lemmas \ref{lem:approx_exp} and \ref{lem:1norm_exp} in App.\ \sect{explicit_examples_qite}). 
 Interestingly, the average query depth does not depend on the precision of the estimation but scales as $\mathcal{O}(\sqrt{\beta})$ with a modest constant factor for any $\epsrel$ (see Lemma \ref{lem:query_exp} in App.\ \sect{explicit_examples_qite}). This implies an advantage of $\mathcal{O}\left(\log\big(\frac{D\,e^{\beta}}{Z_\beta\,\epsrel}\big)\right)$ in terms of overall runtime as compared to coherent QSVT, which is again entirely due to our randomization scheme. 
 
 Overall, {this} gives our algorithm a total {expected} runtime of $\mathcal{O}\left(\frac{D^2 \sqrt{\beta}\,e^{2\beta}}{Z_{\beta}^2}\frac{\log(2/\delta)}{\epsrel^2}\right)$. The previous state-of-the-art algorithm from Ref.\ \cite{chowdhury_computing_2021} has runtime $\Tilde{\mathcal{O}}\left(\frac{D^2D_a^2e^{2\beta}\beta^2}{\epsrel^2 Z_{\beta}^2}\log\frac{1}{\delta}\right)$. 
 Compared with that, we get an impressive
  quartic speed-up in $\beta$ together with the entire removal of the dependence on $D_a^2$. 
  The improvement comes from not estimating each Chebyshev term individually and allowing the ancillas to be pure while only the system is initialized in the maximally mixed state.

Finally, compared to the  $\order{D^3}$ scaling of the classical algorithm based on exact diagonalization, our expected runtime has a better dependence on $D$. Moreover, in the regime of small $\beta$ such that  $Z^2_\beta>\mathcal{O}\big(\sqrt{\beta}\,e^{2\beta}\log(1/\delta)/\epsrel^2\big)$, the expected runtime can be even better than that of the kernel method, which scales as $\mathcal{O}(D)$.

\subsubsection{Quantum linear-system solvers} \label{sec:linearsystem}

Given a matrix $A\in\mathbb{C}^{D}\times \mathbb{C}^{D}$ and a vector $\vv{b}\in\mathbb{C}^{D}$, the task is to find a vector $\vv{x}\in\mathbb{C}^{D}$ such that
\begin{equation}
A\,\vv{x}=\vv{b}\,.
\end{equation}
The best classical algorithm for a generic $A$ is based on Gaussian elimination, with a runtime $\mathcal{O}(D^3)$ \cite{trefethen97}. For $A$ positive semi-definite and sparse, with sparsity (i.e. maximal number of non-zero elements per row or column) $s$, the conjugate gradient algorithm \cite{book_iterative_methods} can reduce this to $\mathcal{O}(Ds\kappa)$, where $\kappa:=\norm{A}\,\norm{A^{-1}}$ is the condition number of $A$. In turn, the randomized Kaczmarz algorithm \cite{strohmer2007randomized} can yield an $\epsilon$-precise approximation of a single component of $\vv{x}$ in $\mathcal{O}\left(s\,\kappa_{F}^{2}\log(1/\epsilon)\right)$, with $\kappa_F := \norm{A}_{F}\norm{A^{-1}}$ and $\norm{A}_{F}$ the Frobenius norm of $A$. 

In contrast, quantum linear-system solvers (QLSSs) \cite{harrow_quantum_2009,Childs_2017,Gilyen2019,ChuangGrandUnification,Lin_2020,An_2022,Suba_2019,costa2021optimal,Wossnig_2018} prepare a quantum state that encodes the normalized version of the solution vector $\vv{x}$ in its amplitudes. More precisely, given quantum oracles for $A$ and $\ket{\vv{b}}:=\frac{1}{\norm{\vv{b}}_2}\sum_{i}b_i\ket{i}$ as inputs, they output the state $\ket{\vv{x}}:=\frac{1}{\|\vv{x}\|_2}\sum_{i}x_i\ket{i}$, where $\norm{\cdot}_2$ is the $l_2$-norm and we assume $\norm{A}\le1$ for simplicity of presentation (see App. \sect{QSVT} for the case of unnormalized $A$). 
Interestingly, circuit compilations of block encoding oracles for $A$ with gate complexity $\mathcal{O}\left(\log(D/\epsilon)\right)$ have been explicitly worked out assuming a QRAM access model to the classical entries of $A$ \cite{Clader_2022}. 
This can be used for extracting relevant features -- such as an amplitude $\braket{\phi}{\vv{x}}$ or an expectation value $\mel{\vv{x}}{O}{\vv{x}}$ -- from the solution state, with potential exponential speed-ups over known classical algorithms, assuming that the oracles are efficiently implementable and $\kappa=\mathcal{O}\big(\text{polylog}(D)\big)$. 

Ref.~\cite{costa2021optimal} proposed an asymptotically optimal QLSS based on a discrete version of the adiabatic theorem with query complexity $\mathcal{O}\left(\kappa\log(1/\epsilon)\right)$. 
Within the Chebyshev-based QSP framework, the best known QLSS 
uses $\mathcal{O}\left(\kappa\log(\kappa/\epsilon)\right)$ oracle queries \cite{Childs_2017}.  
If the final goal is, for instance, to reconstruct a computational-basis component $\braket{i}{\vv{x}}$ of the solution vector, the resulting runtime becomes $\mathcal{O}\left((\kappa^3/\epsilon^{2})\log(\kappa/\epsilon)\right)$, since this requires $\mathcal{O}\left(\kappa^2/\epsilon^2\right)$ measurements on $\ket{\vv{x}}$. 
Importantly, however, in order to relate the abovementioned features of $\ket{\vv{x}}$ to the corresponding ones from the (unnormalized) classical solution vector $\vv{x}$, one must also independently estimate $\|\vv{x}\|_2$. 
This can still be done with QLSSs (e.g., with quantum amplitude estimation techniques), but requires extra runs. 
Our algorithms do not suffer from this issue, providing direct estimates from the unnormalized vector $A^{-1}\ket{\vv{b}}$. 

More precisely, with $f$ being the inverse function on the cut-off interval $\mathcal{I}_\kappa:=[1,-1/\kappa]\cup[1/\kappa,1]$, our Algs. 1 and 2 readily estimate amplitudes $\bra{\phi}A^{-1}\ket{\vv{b}}$ and expectation values $\bra{\vv{b}}A^{-1}OA^{-1}\ket{\vv{b}}$, respectively. 
The technical details of the polynomial approximation $\Tilde{f}$ and complexity analysis are deferred to App. \sect{explicit_examples_inverse}. 
In particular, there we show that, to approximate $f$ to error $\nu$, one needs a polynomial of degree $k=\mathcal{O}\left(\kappa\,\log(\kappa/\nu)\right)$ and $\norm{\vv{a}}_1=\mathcal{O}\left(\kappa\sqrt{\log(\kappa^{2}/\nu)}\right)$. 
For our purposes, as discussed before theorem \ref{main_lemma2}, to ensure a target estimation error $\epsilon$ on the quantity of interest one must have $\nu=\mathcal{O}(\epsilon)$ for Alg. 1 and $\nu=\mathcal{O}((\kappa\norm{O})^{-1}\epsilon)$ for Alg. 2. 
This leads to the sample complexities $S^{(1)}=\mathcal{O}\left((\kappa^2/\epsilon^2)\log^2(\kappa^{2}/\epsilon)\log(4/\delta)\right)$ and $S^{(2)}=\mathcal{O}\left((\kappa^4\norm{O}^2/\epsilon^2)\log^4(\kappa^3\,\norm{O}/\epsilon))\log(4/\delta)\right)$, respectively. 
    
The expected query depth and total expected runtimes are shown in Table \tab{main_table}. 
In particular, the former exhibits a quadratic improvement in the error dependence with respect to the maximal query depth $k$. 
This places our algorithm in between the $\mathcal{O}(\kappa\log(\kappa/\epsilon))$ \cite{Childs_2017} scaling of the fully quantum algorithm and the asymptotically optimal $\mathcal{O}(\kappa\log(1/\epsilon))$ scaling of \cite{costa2021optimal}, therefore making it more suitable for the early fault-tolerance era. 
In fact, our expected query depth can even beat this optimal scaling for $\kappa\lesssim(1/\epsilon)^{\log(1/\epsilon)-1}$. 
Note also that our total expected runtimes are only logarithmically worse in $\kappa$ than the ones in the fully-quantum case. 
For the case of Alg. 1, an interesting sub-case is that of $\bra{\phi}=\bra{i}$, as this directly gives the $i$-th component of the solution vector $\vv{x}$. 
The quantum oracle $U_{\phi}$ is remarkably simple there, corresponding to the preparation of a computational-basis state. As for the runtime, we recall that $\norm{A}\leq\norm{A}_{F}$ in general and $\norm{A}_{F}=\mathcal{O}(\sqrt{D}\,\norm{A})$ for high-rank matrices. Hence, Alg. 1 has potential for significant speed-ups over the randomized Kaczmarz algorithm mentioned above. In turn, for the case of Alg. 2, we stress that the estimates obtained refer directly to the target expectation values for a generic observable $O$, with no need to estimate the normalizing factor $\|\vv{x}\|_2$ separately (although, if desired, the latter can be obtained by taking $O=\openone$). 
 
It is also interesting to compare our results with those of the fully randomized scheme of \cite{wang2023qubitefficient}. There, for $A$ given in terms of a Pauli decomposition with total Pauli weight $\lambda$, they also offer direct estimates, with no need of $\|\vv{x}\|_2$. However, their total runtime of $\Tilde{O}\big(\norm{A^{-1}}^{4}\lambda^{2}/\epsilon^{2}\big)$ is worse than the scaling presented here by a factor $\Tilde{O}\big(\norm{A^{-1}}\,\lambda^{2}\big)$ (recall that here $\kappa=\norm{A^{-1}}$ since we are assuming $\norm{A}=1$). In turn, compared to the solver in Ref. \cite{wang2023qubitefficient}, the scaling of our query depth per run is one power of $\kappa$ superior. In their case, the scaling refers readily to circuit depth, instead of query depth, but this is approximately compensated by the extra dependence on $\lambda^{2}$ in their circuit depth.

\subsubsection{Ground-state energy estimation} \label{sec:groundstate}

The task of estimating the ground-state energy of a quantum Hamiltonian holds paramount importance in condensed matter physics, quantum chemistry, material science, and optimization. 
In fact, it is considered one of the most promising use cases for quantum computing in the near term \cite{clinton2022nearterm}. 
However, the problem in its most general form is known to be QMA-hard \cite{kempe2005complexity}. A typical assumption -- one we will also use here -- is that one is given a Hamiltonian $H$ with $\norm{H}\leq1$ and a promise state $\varrho$ having non-vanishing overlap $\eta$ with the ground state subspace. The {\it ground state energy estimation} (GSEE) problem \cite{lin_heisenberg-limited_2022} then consists in finding an estimate of the ground state energy $E_{0}$ to additive precision $\xi$. 

If the overlap $\eta$ is reasonably large (which is often the case in practice, e.g., for small molecular systems using the Hartree-Fock state \cite{tubman2018postponing}), the problem is known to be efficiently solvable, but without any guarantee on $\eta$ the problem is challenging. 
A variety of quantum algorithms for GSEE have been proposed (see, e.g., \cite{PhysRevLett.83.5162,ge2018faster,Lin_2020,Poulin_2009}), but the substantial resources required are prohibitive for practical implementation before full-fledged fault tolerant devices become available. 
Recent works have tried to simplify the complexity of quantum algorithms for GSEE with a view towards early fault-tolerant quantum devices. 
Notably, a semi-randomized quantum scheme was proposed in \cite{lin_heisenberg-limited_2022} with query complexity $\mathcal{O}\big(\frac{1}{\xi}\log\!\big(\frac{1}{\xi\eta}\big)\big)$ achieving Heisenberg-limited scaling in $\epsilon$ \cite{Atia_2017}. 
Importantly, their algorithm assumes access to the Hamiltonian $H$ through a time evolution oracle $e^{-iH\tau}$ (for some fixed time $\tau$), which makes it more appropriate for implementation in analog devices. 
The similar fully-randomized approach of \cite{wan_randomized_2022} gives rise to an expected circuit (not query) complexity of $\mathcal{O}\big(\frac{1}{\xi^2}\log\!\big(\frac{1}{\eta}\big)\big)$.  

Here we approach the GSEE problem within our Chebyshev-based randomized semi-quantum framework. 
We follow the same strategy used in \cite{lin_heisenberg-limited_2022,wan_randomized_2022,wang2023faster,wang2023quantum} of reducing GSEE to the so-called {\it eigenvalue thresholding problem}. The problem reduces to the estimation up to additive precision $\frac{\eta}{2}$ of the filter function $F_\varrho(y):=\Tr[\varrho\,\theta(y\mathds{1}-H)]$ for a set of $\log\big(\frac{1}{\xi}\big)$ different values of $y$ chosen from a uniform grid of cell size $\xi$ (times the length $E_\text{max}-E_0$ of the interval of energies of $H$). This allows one to find $E_0$ up to additive error $\xi$ with $\log\big(\frac{1}{\xi}\big)$ steps of a binary-like search over $y$ \cite{lin_heisenberg-limited_2022}. 
At each step, we apply our Alg. \fig{main} with $f(x)=\theta(y-x)$, $A=H$, and $\ket{\phi}=\ket{\psi}$ to estimate $F_{\varrho}(y)$, with $\varrho=\ketbra{\psi}{\psi}$. Here, $\ket{\psi}$ is any state with promised overlap $\eta>0$ with the ground state subspace. 
The requirement of additive precision $\frac{\eta}{2}$ for $F_\varrho(x)$ requires an approximation error $\nu\le\frac{\eta}{4}$ for $f$ and a statistical error $\epsilon\le\frac{\eta}{4}$ for the estimation. 

Interestingly, our approach does not need to estimate $F_\varrho(y)$ at different $y$'s for the search. In Lemma \ref{lemma_ychoice} in App. \sect{explicit_examples_step}, we show that estimating $F_\varrho$ at a special point $y_*=1/\sqrt{2}$ and increasing the number of samples suffices to obtain $F_\varrho(y)$ at any other $y$. 
As a core auxiliary ingredient for that, 
we develop a $\nu$-approximation $\Tilde{f}$ to the step function with a shifted argument, $\theta(y-x)$, given in Lemma \ref{lemma_theta} in App. \sect{explicit_examples_step}. It has the appealing property that the $x$ and $y$ dependence are separated, namely $\Tilde{f}(y-x)=\sum_{j\in[k]}\big[a_j(y)\,\mathcal{T}_j(x)\big]+\sum_{j\in[k]}\big[b_j(y)\sqrt{1-x^2}\,\mathcal{U}_j(x)\big]$, where $\mathcal{U}_j$ is the $j$-th Chebyshev polynomial of the second kind. 
The first contribution to $\Tilde{f}$ takes the usual form \eq{ftilde} and can be directly implemented by our Alg. \fig{main}; the second contribution containing the $\mathcal{U}_j$'s can also be implemented in a similar way, with the caveat that the required Hadamard test needs a minor modification described in Lemma \ref{lemma_hadamardsecondkind}, App. \sect{Hadamard_test}. 
The maximal degree $k=\mathcal{O}(\frac{1}{\xi}\log\big(\frac{1}{\eta}\big))$ is the same for both contributions and the coefficient 1-norms are $\norm{\vv{a}}_1=\norm{\vv{b}}_1=\mathcal{O}\left(\log\big(\frac{1}{\xi}\log\big(\frac{1}{\eta}\big)\big)\right)$. 
Putting all together and taking into account also the $\mathcal{O}\big(\log\big(\frac{1}{\xi}\big)\big)$ steps of the binary search, one obtains a total sample complexity $S^{(1)}=\mathcal{O}\left(\frac{1}{\eta^{2}}\log^2\big(\frac{1}{\xi}\log\big(\frac{1}{\eta}\big)\big)\log(\frac{4}{\delta}\log(\frac{1}{\xi}))\right)$. 

The corresponding expected query depth and total runtime are shown in Table \tab{main_table}. 
Remarkably, the query depth exhibits a speed-up with respect to the maximal value $k$, namely a square root improvement in the $\eta$ dependence and a logarithmic improvement in the $\frac{1}{\xi}$ dependence (see Lemma \ref{lemma_step_query} in App. \sect{explicit_examples_step} for details). 
In addition, as can be seen in the table, our expected runtime displays the same Heisenberg-scaling of \cite{wan_randomized_2022}. This is interesting given that our algorithm is based on block-encoded oracles rather than the time-evolution oracles used in \cite{wan_randomized_2022}, which may be better suited for digital platforms as discussed previously. Finally, it is interesting to note that there have been recent improvements in the precision dependence, e.g. based on a derivative Gaussian filter \cite{wang2023quantum}. Those matrix functions are also within the scope of applicability of our approach.

\section{Final discussion} 

We presented a randomized hybrid quantum-classical framework to efficiently estimate state amplitudes and expectation values involving a generic matrix function $f(A)$. 
More precisely, our algorithms perform a Monte-Carlo simulation of the powerful quantum signal processing (QSP) and singular-value transformation (QSVT) techniques \cite{LowChuang2017,LowChuangQuantum2019,Gilyen2019,ChuangGrandUnification}. 
Our toolbox is based on three main ingredients: 
$i$) it samples each component of a Chebyshev series for $f$ weighed by its coefficient in the series;
$ii$) it assumes coherent access to $A$ via a block-encoding oracle; and
$iii$) $f(A)$ is automatically extracted from its block-encoding without post-selection, 
using a Hadamard test. 
This combination allows us to deliver provably better circuit complexities than the standard QSP and QSVT algorithms while maintaining comparable total runtimes. 

We illustrated our algorithms on four specific end-user applications: partition-function estimation via quantum Markov-chain Monte Carlo and via imaginary-time evolution; linear system solvers; and ground-state energy estimation (GSEE). 
The full end-to-end complexity scalings are detailed in Table \tab{main_table}.

For GSEE, the reduction in query complexity (and consequently also in the total gate count) due to randomization is by a factor $k/\mathbb{E}[j]=\mathcal{O}\big(\sqrt{\log(1/\eta)}\,\log(\xi^{-1}\log(1/\eta))\big)$. 
We estimate this factor explicitly for the iron-molybdenum cofactor (FeMoco) molecule, which is the primary cofactor of nitrogenase and one of the main target use cases in chemistry for early quantum computers \cite{Reiher_2017,RevModPhys.92.015003}.  
For GSEE within chemical accuracy (see Sec. \ref{app:gate_count} in the Methods for details), the resulting reduction factor is approximately $28$ (from $k\approx1.12\times10^7$ to $\mathbb{E}[j]\approx4.01\times10^5$), while the sample complexity overhead is a factor of $\norm{\vv{a}}_1^2\approx2.35$. 
Importantly, these estimates do not take into account the overhead for quantum error correction (QEC). 
Our reductions in query depth imply larger tolerated logical-gate error levels, which translate into lower code distances and, hence, lower QEC overheads. 
E.g., for the surface code, the overhead in physical-qubit number is usually quadratic with the code distance \cite{Kim_2022}.

An interesting future direction is to explore other matrix functions with our framework. This includes recent developments such as Gaussian and derivative-Gaussian filters for precision improvements in ground-state energy estimation \cite{wang2023quantum} or Green function estimation \cite{wang2023qubitefficient}, and a diversity of other more-established use cases \cite{ChuangGrandUnification}. 
Another possibility is to explore the applicability of our methods in the context of hybrid quantum-classical rejection sampling \cite{wang2023faster}.
Moreover, further studies on the interplay between our framework and Fourier-based matrix processing \cite{silva2022fourierbased,Dong_2022} may be in place too. Fourier-based approaches have so far focused mainly on the eigenvalue thresholding for ground-state energy estimation \cite{lin_heisenberg-limited_2022,wang2023quantum,wan_randomized_2022,wang2023qubitefficient}. 

Our findings open a promising arena to build and optimize early fault-tolerant quantum algorithms towards practical linear-algebra applications in a nearer term.

\begin{acknowledgements}
    AT {and LA} acknowledge financial support from the Serrapilheira Institute (grant number Serra-1709-17173).
    We thank Lucas Borges, Samson Wang, Sam McArdle, Mario Berta, Daniel Stilck-Fran\c{c}a,  and Juan Miguel Arrazola for helpful discussions.
\end{acknowledgements}

\bibliography{paper/sn-bibliography}

\appendix

\section{Proof of Lemma \ref{cheb}}
\label{sec:chebyshev_proof}

\begin{proof}
Let $U_{A}$ be given as in Def. \ref{def:qubitizedoracle}. Then 
\begin{align}\label{eq:explicit_Uj}
U^{j}_{A}&=\bigoplus_{\gamma}\left(e^{-i\vartheta_{\gamma}Y_{\gamma}}\right)^{j}\notag\\
&=\bigoplus_{\gamma}{\begin{bmatrix}
\cos(j\arccos{\lambda_\gamma}) & -\sin(j\arccos{\lambda_\gamma})\\
\sin(j\arccos{\lambda_\gamma}) & \cos(j\arccos{\lambda_\gamma})
\end{bmatrix}}
.
\end{align}
Since $\cos(j\arccos{x})=\mathcal{T}_{j}(x)$, where $\mathcal{T}_j$ is the $j$-th Chebyshev polynomial of the first kind, it follows that 
\begin{equation}
\big(\bra{0}_a\otimes \mathds{1}_s\big)\,U_A^j\,\big(\ket{0}_a\otimes \mathds{1}_s\big)=\bigoplus_{\gamma}\mathcal{T}_{j}(\lambda_\gamma)\ketbra{\lambda_\gamma}_{s}=\mathcal{T}_j(A).
\end{equation}
\end{proof}

\section{Hadamard test for block encodings}
\label{sec:Hadamard_test}

In Algorithms 1 and 2, for each $j_\alpha$ or $j_\alpha,l_\alpha$ sampled, only one measurement shot is obtained. As preparation for proving the correctness of the algorithms (Theorem\ \ref{main_lemma}), in this appendix, we show what would result from the Hadamard tests if we performed several measurement shots with a fixed circuit, i.e., fixed $j_\alpha$ (Lemma\ \ref{lem:had_circuit_1}) or $j_\alpha,l_\alpha$ (Lemma\ \ref{lem:had_circuit_2}). Moreover, in Lemma\ \ref{lemma_hadamardsecondkind} we show how to extend the construction to Chebyshev polynomials of the second kind.
Throughout this section, we use the notation $\text{c}U$ to refer to a gate $U$ controlled by the state $\ket{1}$ of the ancilla $1$ (the top register in both circuits of Fig.\ \ref{fig:main}, i.e., the Hadamard qubit ancilla). We use  $\bar{\text{c}}U$ for when the action of the unitary $U$ is controlled by the ancilla $1$ being in state $\ket{0}$.

\begin{lemma}[Circuit for Algorithm 1]\label{lem:had_circuit_1}
A single-shot measurement on the Hadamard test of Fig.~\ref{fig:main}.a) yields a random variable $\mathbf{Had}(U^{j}_{A},\ket{\phi},\ket{\psi})\in\{-1,1\}$ that satisfies
\begin{equation}
\mathbb{E}\Big[\mathbf{Had}(U^{j}_{A},\ket{\phi},\ket{\psi})\Big]=\begin{cases}
\Re{\mel{\phi}{\mathcal{T}_j(A)}{\psi}}\text{, if }B=\mathds{1},\\
\Im{\mel{\phi}{\mathcal{T}_j(A)}{\psi}}\text{, if }B=S^{\dagger},
\end{cases}    
\end{equation}
\end{lemma}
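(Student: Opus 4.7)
The plan is to propagate the quantum state through the circuit of Fig.~\ref{fig:main}.a) stage by stage and then read off the difference of the two outcome probabilities. Starting from $\ket{0}_{\text{Had}}\ket{0}_a\ket{0}_s$, after the initial Hadamard on the control ancilla together with the controlled state-preparation gates $\bar{\text{c}}U_\phi$ and $\text{c}U_\psi$, the state becomes $\tfrac{1}{\sqrt{2}}(\ket{0}\ket{0}_a\ket{\phi}+\ket{1}\ket{0}_a\ket{\psi})$. Then $j$ successive applications of $\text{c}U_A$ rotate only the $\ket{1}$ branch, sending it to $\ket{1}\,U_A^j\ket{0}_a\ket{\psi}$. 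By Lemma~\ref{cheb}, this branch decomposes uniquely as $\ket{1}\,(\ket{0}_a\,\mathcal{T}_j(A)\ket{\psi}+\ket{\perp})$, where $\ket{\perp}\in\mathcal{H}_{sa}$ obeys $(\bra{0}_a\otimes\mathds{1}_s)\ket{\perp}=0$. This orthogonality is the structural fact that drives the whole argument.

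Next, I apply $B$ to the control ancilla followed by a final Hadamard, and compute the probabilities $P(0)$ and $P(1)$ of the two outcomes on that qubit. Because $\ket{\perp}$ is orthogonal to every state of the form $\ket{0}_a\ket{\chi}$, the would-be cross term between $\ket{0}_a\ket{\phi}$ in the first branch and the $\ket{\perp}$ part of the second branch vanishes, leaving the overlap $\mel{\phi}{\mathcal{T}_j(A)}{\psi}$ as the only surviving interference contribution. For $B=\mathds{1}$, this directly yields $P(0)-P(1)=\operatorname{Re}\mel{\phi}{\mathcal{T}_j(A)}{\psi}$, whereas for $B=S^\dagger$ the extra $-i$ phase on the $\ket{1}$ branch rotates the overlap to $P(0)-P(1)=\operatorname{Im}\mel{\phi}{\mathcal{T}_j(A)}{\psi}$. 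With the convention $\ket{0}\mapsto+1$, $\ket{1}\mapsto-1$, the difference of probabilities is exactly $\mathbb{E}[\mathbf{Had}]$, which is the claim.

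The one non-routine point I would spell out carefully is that the diagonal contributions involving $\ket{\perp}$ do not spoil the calculation. This reduces to checking normalization: $\norm{\ket{\phi}}^2=1$ and $\norm{\mathcal{T}_j(A)\ket{\psi}}^2+\norm{\ket{\perp}}^2=\norm{U_A^j\ket{0}_a\ket{\psi}}^2=1$, so the diagonal terms combine to give $P(0)+P(1)=1$, while $\ket{\perp}$ contributes only to these norms and never to the off-diagonal overlap. This is precisely the mechanism by which the Hadamard test automatically extracts the $\ket{0}_a$-block of $U_A^j$ without any post-selection on the oracle ancilla; the remaining work is elementary Bloch-sphere algebra on the single control qubit.
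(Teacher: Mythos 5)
Your proof is correct and follows essentially the same route as the paper's: both reduce the Hadamard-test expectation to the cross term between the two control branches and then invoke Lemma~\ref{cheb} to identify the surviving $\ket{0}_a$-block of $U_A^j$ with $\mathcal{T}_j(A)$, the orthogonal remainder $\ket{\perp}$ dropping out of the interference term. The differences are purely presentational — you propagate the pure state and compute $P(0)-P(1)$ after the final Hadamard, whereas the paper evaluates $\Tr[X_{1}\,G(\ketbra{+}_{1}\otimes\varrho_{a,s})G^{\dagger}]$ in the operator picture — and your reordering of $\Bar{\text{c}}U_{\phi}$ and $\text{c}U^{j}_{A}$ is harmless since they are controlled on orthogonal states of the same qubit and hence commute.
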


\begin{proof}
Here we prove it for the case $B=\mathds{1}$, since the other follows in an analogous way. We have that,
\begin{equation}
\mathbb{E}\Big[\mathbf{Had}(U^{j}_{A},\ket{\phi},\ket{\psi})\Big]=\Tr{X_{1}\,G(\ketbra{+}_{1}\otimes\varrho_{a,s})G^{\dagger}},
\end{equation}
with $G=\Bar{\text{c}}U_{\phi}\,\text{c}U^{j}_{A}\,\text{c}U_{\psi}$ and $\varrho_{a,s}=\ketbra{0}_{a}\otimes\ketbra{0}_{s}$. Computing the expression inside the trace we get
\begin{multline}
X_{1}\,G(\ketbra{+}_{1}\varrho_{a,s})G^{\dagger}=\frac{1}{2}\ketbra{0}_{1}(U^{j}_{A}U_{\psi}\varrho_{a,s}U^{\dagger}_{\phi})+\\
+\frac{1}{2}\ketbra{1}_{1}(U_{\phi}\varrho_{a,s}U^{\dagger}_{\psi}U^{j\dagger}_{A})+\frac{1}{2}\ketbra{1}{0}_{1}(U_{\phi}\varrho_{a,s}U^{\dagger}_{\phi})+\\
+\frac{1}{2}\ketbra{0}{1}_{1}(U^{j}_{A}U_{\psi}\varrho_{a,s}U^{\dagger}_{\psi}U^{j\dagger}_{A}).
\end{multline}
Therefore,
\begin{multline}
\mathbb{E}\Big[\mathbf{Had}(U^{j}_{A},\ket{\phi},\ket{\psi})\Big]=\Re\{\Tr_{a,s}[U^{j}_{A}\ketbra{0}_{a}\ketbra{\psi}{\phi}_{s}]\}=\\
=\Re\{\Tr_{s}[\mathcal{T}_j(A)\ketbra{\psi}{\phi}_{s}]\}=\Re\{\mel{\phi}{\mathcal{T}_j(A)}{\psi}\}
\end{multline}
\end{proof}

\begin{lemma}[Circuit for Algorithm 2]\label{lem:had_circuit_2}
A single-shot measurement on the Hadamard test of Fig.~\ref{fig:main}.b) yields a random variable $\mathbf{Had}(U^{l}_{A},U^{j}_{A},O)$ that satisfies
\begin{equation}
\mathbb{E}\Big[\mathbf{Had}(U^{j}_{A},U^{l}_{A},O)\Big]=\Re\{\Tr[O\mathcal{T}_j(A)\varrho_{s}\mathcal{T}_{l}(A)]\}
\end{equation}
\end{lemma}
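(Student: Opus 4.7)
The plan is to mirror Lemma~\ref{lem:had_circuit_1}, accommodating two new circuit ingredients: (i) a second controlled oracle $\bar{\text{c}}U_A^l$ acting on an independent ancilla register with negated control, and (ii) a final measurement of a generic observable $O$ on the system. Label the Hadamard ancilla by $1$, the two oracle-ancillas by $a_1, a_2$ (both prepared in $\ket{0}$), and the system by $s$. Write $V_j := U_A^j$ acting on $(a_1, s)$ and $W_l := U_A^l$ acting on $(a_2, s)$, so that the implemented unitary is $G := \text{c}V_j\cdot\bar{\text{c}}W_l$. The initial density matrix is $\sigma_0 := \ketbra{+}_1 \otimes \varrho_0$, with $\varrho_0 := \ketbra{0}_{a_1}\otimes\ketbra{0}_{a_2}\otimes\varrho_s$, and the measured operator is $X_1 \otimes O_s$, so that $\mathbb{E}[\mathbf{Had}] = \Tr[(X_1 \otimes O_s)\,G\,\sigma_0\,G^\dagger]$.

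Expanding $\ketbra{+}_1 = \tfrac12\sum_{a,b\in\{0,1\}}\ketbra{a}{b}_1$ and noting that $G$ acts as $V_j$ only on the $\ket{1}_1$ branch and as $W_l$ only on the $\ket{0}_1$ branch, a direct calculation gives
\begin{equation*}
G\,\sigma_0\,G^\dagger = \tfrac12\sum_{a,b}\ketbra{a}{b}_1\otimes B_a\,\varrho_0\,B_b^\dagger,
\end{equation*}
with $B_0 := W_l$ and $B_1 := V_j$. Since $\Tr[X_1\ketbra{a}{b}_1] = \delta_{a,1-b}$, the diagonal ($a=b$) terms drop out and only the off-diagonals survive, leaving $\mathbb{E}[\mathbf{Had}] = \tfrac12\Tr[O_s\,V_j\,\varrho_0\,W_l^\dagger] + \mathrm{c.c.}$

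The decisive step is to collapse this trace to Chebyshev polynomials via Lemma~\ref{cheb}. By cyclicity,
\begin{equation*}
\Tr[O_s V_j \varrho_0 W_l^\dagger] = \Tr_s\!\bigl[(\bra{0}_{a_1}\bra{0}_{a_2}\,W_l^\dagger O_s V_j\,\ket{0}_{a_1}\ket{0}_{a_2})\,\varrho_s\bigr].
\end{equation*}
Because $V_j$ acts trivially on $a_2$ and $W_l^\dagger$ trivially on $a_1$, the two partial projections decouple; applying Lemma~\ref{cheb} to each (and using that the top-left block of $(U_A^\dagger)^l$ still equals $\mathcal{T}_l(A)$ in the qubitized form of Def.~\ref{def:qubitizedoracle}, since $\mathcal{T}_l$ is real), the sandwich collapses to the system operator $\mathcal{T}_l(A)\,O\,\mathcal{T}_j(A)$. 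A final cyclic shift gives $\Tr[O\,\mathcal{T}_j(A)\,\varrho_s\,\mathcal{T}_l(A)]$, and adding its complex conjugate yields the claimed real part.

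The only delicate point is the collapse of the ancilla sandwich: one must show that the garbage components of the block encodings do not survive the final projection, despite both $V_j$ and $W_l^\dagger$ acting on the shared system register $s$. Decomposing $V_j\ket{0}_{a_1}\ket{\phi}_s = \ket{0}_{a_1}\otimes\mathcal{T}_j(A)\ket{\phi}_s + \ket{G}_{a_1 s}^\perp$, with $(\bra{0}_{a_1}\otimes\mathds{1}_s)\ket{G}^\perp = 0$, and analogously for $W_l^\dagger$, one propagates each piece through $O_s$ and checks that the orthogonal residuals end up on registers that the final projection annihilates. This is precisely the mechanism by which the Hadamard test extracts the correct blocks of $U_A^j$ and $U_A^l$ without post-selection.
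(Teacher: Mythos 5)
Your proposal is correct and follows essentially the same route as the paper's proof: expand the Hadamard-test state over the $\ketbra{a}{b}_1$ components, observe that only the cross terms survive the $X_1$ measurement, and collapse the resulting traces to $\mathcal{T}_j(A)$ and $\mathcal{T}_l(A)$ via Lemma~\ref{cheb}. If anything, you are more explicit than the paper about the one delicate step --- that the two oracles act on \emph{distinct} ancilla registers so the garbage blocks of $U_A^j$ and $U_A^{l\dagger}$ are annihilated by the respective $\ket{0}$ projections even though $O$ sits between them --- which the paper's proof asserts without comment.
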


\begin{proof}
We have that
\begin{equation}
\mathbb{E}\Big[\mathbf{Had}(U^{j}_{A},U^{l}_{A},O)\Big]=\Tr{OX_{1}M(\ketbra{+}_{1}\otimes\varrho_{a,s})M^{\dagger}},
\end{equation}
where $M=\Bar{\text{c}}U^{l}_{A}\,\text{c}U^{j}_{A}$ and $\varrho_{a,s}=\ketbra{0}_{a}\otimes\varrho_{s}$. Note that
\begin{multline}
X_{1}M(\ketbra{+}_{1}\varrho_{a,s})M^{\dagger}=\frac{1}{2}\ketbra{0}_{1}(U^{j}_{A}\varrho_{a,s}U^{l\dagger}_{A})+\\
+\frac{1}{2}\ketbra{1}_{1}(U^{l}_{A}\varrho_{a,s}U^{j\dagger}_{A})+\frac{1}{2}\ketbra{1}{0}_{1}(U^{l}_{A}\varrho_{a,s}U^{l\dagger}_{A})+\\
+\frac{1}{2}\ketbra{0}{1}_{1}(U^{j}_{A}\varrho_{a,s}U^{j\dagger}_{A}),
\end{multline}
which implies
\begin{multline}
\mathbb{E}\Big[\mathbf{Had}(U^{j}_{A},U^{l}_{A},O)\Big]=\frac{1}{2}\Tr_{a,s}[O(U^{j}_{A}\varrho_{a,s}U^{l\dagger}_{A})]+\\
+\frac{1}{2}\Tr_{a,s}[O(U^{l}_{A}\varrho_{a,s}U^{j\dagger}_{A})]=\frac{1}{2}\Tr_{s}[O(\mathcal{T}_j(A)\varrho_{s}T^{\dagger}_{l}(A))]+\\
+\frac{1}{2}\Tr_{s}[O(\mathcal{T}_j(A)\varrho_{s}T^{\dagger}_{l}(A))]=\Tr_{s}[O(\mathcal{T}_j(A)\varrho_{s}\mathcal{T}_l(A))],
\end{multline}
since $A$ and $O$ are Hermitian.
\end{proof}

\begin{lemma}\label{lemma_hadamardsecondkind}  If we modify the Hadamard test in Fig.~\ref{fig:main}.a) by applying a $\Bar{\text{c}}NOT_{1,a}$ after any other gate, the random variable $\textbf{Had}_{alt}(U^{j}_{A},\ket{\phi},\ket{\psi})$ satisfies for $B=\mathds{1}$
\begin{equation}
\mathbb{E}\Big[\textbf{Had}_{alt}(U^{j}_{A},\ket{\phi},\ket{\psi})\Big]=\Re\{\mel{\phi}{\sqrt{\mathds{1}-A^{2}}\,\mathcal{U}_{j-1}(A)}{\psi}\}
\end{equation}
and
\begin{equation}
\mathbb{E}\Big[\textbf{Had}_{alt}(U^{j}_{A},\ket{\phi},\ket{\psi})\Big]=\Im\{\mel{\phi}{\sqrt{\mathds{1}-A^{2}}\,\mathcal{U}_{j-1}(A)}{\psi}\}
\end{equation}
for $B=S^{\dagger}$.
\end{lemma}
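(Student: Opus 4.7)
The plan is to rerun the computation of Lemma~\ref{lem:had_circuit_1} while tracking the effect of the additional $\bar{\text{c}}NOT_{1,a}$ at the end of the circuit. Because this gate is controlled on ancilla $1$ being in $\ket{0}$, it acts as the identity on the $\ket{1}_1$ branch of the Hadamard superposition and as the single-qubit flip $X_a$ on the targeted oracle-ancilla qubit of the $\ket{0}_1$ branch. Thus, immediately before the $X_1$ measurement, the only change with respect to Lemma~\ref{lem:had_circuit_1} is that, on the $\phi$-side of the resulting pre-measurement state, the ancilla factor $\ket{0}_a$ gets replaced by $X_a\ket{0}_a$.

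Repeating the bookkeeping of the Hadamard-test expectation value (with the same real/imaginary extraction via $B=\mathds{1}$ or $B=S^{\dagger}$ as in Lemma~\ref{lem:had_circuit_1}), one then finds that the outcome becomes the real or imaginary part of the ``off-diagonal'' matrix element
\begin{equation}
\bra{\phi}\bigl(\bra{0}_a X_a\otimes\mathds{1}_s\bigr)\,U_A^{j}\,\bigl(\ket{0}_a\otimes\mathds{1}_s\bigr)\ket{\psi},
\end{equation}
in place of the diagonal matrix element $\mel{\phi}{\mathcal{T}_j(A)}{\psi}$ that appeared in the original test. No new computation beyond the Hadamard-test algebra of Lemma~\ref{lem:had_circuit_1} is needed at this stage; the whole content of the modification is that $\bra{0}_a$ is swapped for $\bra{0}_a X_a$ on the $\phi$-side.

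Evaluating this off-diagonal matrix element is the only non-routine step. The qubitization construction of \cite{LowChuangQuantum2019} produces a $U_A$ of the explicit block form $\begin{pmatrix}A & -\sqrt{\mathds{1}-A^{2}}\\ \sqrt{\mathds{1}-A^{2}} & A\end{pmatrix}$ in the $\{\ket{0}_a,\,X_a\ket{0}_a\}$ basis of the ancilla; equivalently, the orthogonal partner state in Def.~\ref{def:qubitizedoracle} takes the product form $\ket{\perp_{\lambda_\gamma}}_{sa}=X_a\ket{0}_a\otimes\ket{\lambda_\gamma}_s$ for every eigenvalue $\lambda_\gamma$ of $A$. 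Combining this with Eq.~\eqref{eq:explicit_Uj} and the Chebyshev identity $\sin(j\theta)=\sin(\theta)\,\mathcal{U}_{j-1}(\cos\theta)$, summation over the eigenbasis of $A$ yields $(\bra{0}_a X_a\otimes\mathds{1}_s)\,U_A^{j}\,(\ket{0}_a\otimes\mathds{1}_s)=\sqrt{\mathds{1}-A^{2}}\,\mathcal{U}_{j-1}(A)$, which substituted into the previous paragraph gives the two claimed expressions. The main subtlety --- and the principal obstacle --- is precisely this identification of the single-qubit flip $X_a$ with the ``orthogonal-flag'' direction $\ket{\perp_{\lambda_\gamma}}_{sa}$ on every $Y_\gamma$-invariant subspace; it is the specific convention of the qubitization scheme employed throughout the paper that makes the simple $\bar{\text{c}}NOT_{1,a}$ correctly swap the diagonal and off-diagonal blocks of $U_A^{j}$, thereby converting the Chebyshev polynomial of the first kind extracted by the original Hadamard test into the weighted polynomial of the second kind stated in the lemma.
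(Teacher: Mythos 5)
Your proposal is correct and follows essentially the same route as the paper's proof: both track the effect of the final $\bar{\text{c}}NOT_{1,a}$ through the Hadamard-test algebra of Lemma~\ref{lem:had_circuit_1} to arrive at the off-diagonal matrix element $\big(\bra{1}_a\otimes\mathds{1}_s\big)U_A^{j}\big(\ket{0}_a\otimes\mathds{1}_s\big)$, and then evaluate it via Eq.~\eqref{eq:explicit_Uj} and $\sin(j\theta)=\sin(\theta)\,\mathcal{U}_{j-1}(\cos\theta)$. Your explicit flagging of the convention $\ket{\perp_{\lambda_\gamma}}_{sa}=X_a\ket{0}_a\otimes\ket{\lambda_\gamma}_s$, which Definition~\ref{def:qubitizedoracle} alone does not enforce but which is needed for the off-diagonal block identification, is a point the paper's proof leaves implicit.
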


\begin{proof}
Again, here we only show the case $B=\mathds{1}$ since the other case is analogous. From the description of the circuit in Fig.~\ref{fig:main}.a) and the added modification we have that
\begin{equation}
\mathbb{E}\Big[\textbf{Had}_{alt}(U^{j}_{A},\ket{\phi},\ket{\psi})\Big]=\Tr[X_{1}G_{alt}(\ketbra{+}_{1}\varrho_{a,s})G_{alt}^{\dagger}]
\end{equation}
with $G_{alt}=\Bar{\text{c}}NOT_{1,a}\,\Bar{\text{c}}U_{\phi}\,\text{c}U^{j}_{A}\,\text{c}U_{\psi}$ and $\varrho_{a,s}=\ketbra{0}_{a}\otimes\ketbra{0}_{s}$. We have that
\begin{multline}
X_{1}G_{alt}(\ketbra{+}_{1}\varrho_{a,s})G_{alt}^{\dagger}=\frac{\ketbra{1}{0}_{1}\ketbra{1}_{a}\ketbra{\phi}_{s}}{2}+\\
+\frac{\ketbra{0}{0}_{1}U^{j}_{A}(\ketbra{0}{1}_{a}\ketbra{\psi}{\phi}_{s})}{2}+\frac{\ketbra{1}{1}_{1}(\ketbra{1}{0}_{a}\ketbra{\phi}{\psi}_{s})U^{j\dagger}_{A}}{2}+\\
+\frac{\ketbra{0}{1}_{1}U^{j}_{A}(\ketbra{0}_{a}\ketbra{\psi}_{s})U^{j\dagger}_{A}}{2},
\end{multline}
from which we obtain
\begin{equation}
\Tr[X_{1}G_{alt}(\ketbra{+}_{1}\varrho_{a,s})G_{alt}^{\dagger}]=\Re\{\mel{\phi}{\mel{1}{U^{j}_{A}}{0}_{a}}{\psi}_{s}\}.
\end{equation}
From Eq.\ \eqref{eq:explicit_Uj} and the trigonometric definition of the Chebyshev polynomial $\mathcal{U}_{j-1}(x)$ we have that
\begin{equation}
\big(\bra{1}_a\otimes \mathds{1}_s\big)\,U_A^j\,\big(\ket{0}_a\otimes \mathds{1}_s\big)=\sqrt{\mathds{1}-A^{2}}\,\mathcal{U}_{j-1}(A),
\end{equation}
proving the statement.
\end{proof}

\section{Proof of Theorem \ref{main_lemma}}
\label{sec:main_lemma_proof}

\begin{proof}
We provide different proofs for each $P$. For $P=1$, note that
\begin{equation}
\mel{\phi}{\Tilde{f}(A)}{\psi}=\sum^{k}_{j=0}a_{j}\mel{\phi}{\mathcal{T}_j(A)}{\psi},
\end{equation}
and from Lemma \ref{lem:had_circuit_1} for $B=\mathds{1}$,
\begin{multline}
\Re\{\mel{\phi}{\Tilde{f}(A)}{\psi}\}=\sum^{k}_{j=0}a_{j}\,\mathbb{E}\Big[\textbf{Had}(U^{j}_{A},\ket{\phi},\ket{\psi})\Big]\\
=\sum^{k}_{j=0}\sum_{b\in\{-1,1\}}p_{j}\,p(b|j)\,\norm{\vv{a}}_{1}\sgn(a_{j})\,b.
\end{multline}
The last line implies that, for $\alpha<S^{(1)}$, $\frac{1}{S^{(1)}}\sum_\alpha \norm{\vv{a}}_{1}\sgn(a_{j_\alpha})\,b^{(1)}_\alpha$ with $j_\alpha$ sampled from the distribution $p_j$ and $b^{(1)}_\alpha$ sampled from the conditional distribution $p(b|j)$ by running the Hadamard test in Figure \fig{main}a) is an unbiased estimator of $\Re\{\mel{\phi}{\Tilde{f}(A)}{\psi}\}$. The case for $B=S^{\dagger}$ is essentially the same, implying that for $S^{1}\ge\alpha<2S^{(1)}-1$, $\frac{1}{S^{(1)}}\sum_\alpha \norm{\vv{a}}_{1}\sgn(a_{j_\alpha})\,b^{(1)}_\alpha$ is an unbiased estimator of $\Im\{\mel{\phi}{\Tilde{f}(A)}{\psi}\}$. 

Similarly, for $P=2$ note that
\begin{equation}
\Tr[O\Tilde{f}(A)\varrho\Tilde{f}^{\dagger}(A)]=\sum^{k}_{j,l=0}a_{j}\,a_{l}\Tr[O\mathcal{T}_j(A)\varrho \mathcal{T}_j(A)],
\end{equation}
which by Lemma \ref{lem:had_circuit_2} implies that
\begin{multline}
\Tr[O\Tilde{f}(A)\varrho\Tilde{f}^{\dagger}(A)]=\sum^{k}_{j,l=0}a_{j}\,a_{l}\,\mathbb{E}\Big[\textbf{Had}(U^{j}_{A},U^{l}_{A},O)\Big]\\
=\sum^{k}_{j,l=0}\sum_{b}\sum_{\omega}p_{j}\,p_{l}\,p(b,\omega|j,l)\,\norm{\vv{a}}^{2}_{1}\sgn(a_{j}a_{l})\,b\,\omega,
\end{multline}
where $b\in\{-1,1\}$ and $\omega$ in the spectrum of $O$. The last line implies that $\frac{1}{S^{(2)}}\sum_\alpha \norm{\vv{a}}^{2}_{1}\sgn(a_{j_{\alpha}})\sgn(a_{l_{\alpha}})b^{(2)}_{\alpha}\omega_{\alpha}$ with $j_\alpha,l_\alpha$ both sampled from the distribution $p_j$ and $b^{(2)}_\alpha,\omega_\alpha$ sampled from the conditional distribution $p(b,\omega|j,l)$ by running the Hadamard test in Figure \fig{main}b) is an unbiased estimator of $\Tr[O\Tilde{f}(A)\varrho\Tilde{f}^{\dagger}(A)]$. 
\end{proof}

\section{Proof of Theorem \ref{main_lemma2}}
\label{sec:main_lemma2_proof}
Now we prove Theorem \ref{main_lemma2}.

\begin{proof}
Let $\Tilde{f}$ be a $\nu$-approximation to the function $f$. Then, for any two states $\ket{\psi},\ket{\phi}$ we have that
\begin{equation}
\begin{split}
&\abs{\mel{\phi}{f(A)}{\psi}-\mel{\phi}{\Tilde{f}(A)}{\psi}}\leq
\norm{f(A)-\Tilde{f}(A)}_{op}\le\nu.
\end{split}
\end{equation}
Here, $\norm{\cdot}_{op}$ is the operator norm, which for Hermitian operators coincides with the spectral norm $\norm{\cdot}$. From Theorem  \ \ref{main_lemma} we have $\mathbb{E}\big[\Tilde{z}^{(1)}\big]=\mel{\phi}{\Tilde{f}(A)}{\psi}$ and thus
\begin{equation}
\begin{split}
\abs{z^{(1)}- \Tilde{z}^{(1)}}&\leq\abs{z^{(1)}-\mathbb{E}\big[\Tilde{z}^{(1)}\big]}+\abs{\mathbb{E}\big[\Tilde{z}^{(1)}\big]-\Tilde{z}^{(1)}}\\
&=\nu+\abs{\Tilde{z}^{(1)}-\mathbb{E}\big[\Tilde{z}^{(1)}\big]},
\end{split}
\end{equation}
$\Tilde{z}^{(1)}$ will be an $\epsilon$-precise estimator of $z^{(1)}$ if and only if
\begin{equation}
\abs{\Tilde{z}^{(1)}-\mathbb{E}\big[\Tilde{z}^{(1)}\big]}\le\epsilon-\nu^{(1)},
\end{equation}
where $\nu^{(1)}=\nu$. Notice from the definition of the estimator $\Tilde{z}^{(1)}$ that the range of its real and imaginary parts each is $\{-\norm{\vv{a}}_{1},\norm{\vv{a}}_{1}\}$. Therefore, Hoeffding's inequality implies that the number of samples $S^{(1)}$ necessary to attain an $\epsilon$-precise estimate with $1-\delta$ confidence is
\begin{equation}
S^{(1)}=\frac{4\norm{\vv{a}}^{2}_{1}}{(\epsilon-\nu^{(1)})^{2}}\log(\frac{4}{\delta}),
\end{equation}
where the real and imaginary parts are sampled independently. By setting the approximation error $\nu^{(1)}$ to $\epsilon/2$ we arrive at the claimed result.

In the case of problem \ref{problem.2}, we have to bound $\abs{z^{(2)}-\Tilde{z}^{(2)}}$. First, notice that
\begin{equation}
\begin{split}
&\abs{\Tr[O\,f(A)\,\varrho\,f^{\dagger}(A)]-\Tr[O\,\Tilde{f}(A)\,\varrho\,\Tilde{f}^{\dagger}(A)]}\\
\leq&\abs{\Tr[O(f(A)-\Tilde{f}(A))\,\varrho\, f^{\dagger}(A)]}\\
+&\abs{\Tr[O(f(A)-\Tilde{f}(A))\,\varrho\,(f^{\dagger}(A)-\Tilde{f}^{\dagger}(A))]}\\
+&\abs{\Tr[O\,f(A)\,\varrho\,(f^{\dagger}(A)-\Tilde{f}^{\dagger}(A))]},
\end{split}
\end{equation}
where we used the triangle inequality after adding and subtracting the terms $\Tr[O\,f(A)\,\varrho\,\Tilde{f}(A)]$, $\Tr[O\,f(A)\,\varrho\, f(A)]$ and $\Tr[O\,\Tilde{f}(A)\,\varrho\, f(A)]$. 

By means of the Holder tracial matrix inequality
\begin{equation}
\abs{\Tr[AB^{\dagger}]}\leq\abs{\Tr[A]}\,\norm{B},
\end{equation}
we can now show that
\begin{equation}
\begin{split}
&\abs{\Tr[Of(A)\varrho f^{\dagger}(A)]-\Tr[O\Tilde{f}(A)\varrho\Tilde{f}^{\dagger}(A)]}\leq\\
&\norm{O}\norm{f(A)-\Tilde{f}(A)}\norm{f^{\dagger}(A)}+\norm{O}\norm{f(A)-\Tilde{f}(A)}^{2}+\\
&+\norm{O}\norm{f(A)-\Tilde{f}(A)}\norm{f(A)},
\end{split}
\end{equation}
where we used that $\abs{\Tr[\varrho]}=1$. Therefore, 
\begin{equation}
\abs{z^{(2)}-\Tilde{z}^{(2)}}\le\nu^{(2)}:=\nu\norm{O}\left(2\norm{f(A)}+\nu\right).
\end{equation}

Similarly to the first case, $\Tilde{z}^{(2)}$ will be an $\epsilon$-precise estimate of $z^{(2)}$ if and only if
\begin{equation}
\abs{\Tilde{z}^{(2)}-\mathbb{E}\Big[\Tilde{z}^{(2)}\Big]}\le\epsilon-\nu^{(2)}.
\end{equation}
Since the range of possible values of $\Tilde{z}^{(2)}$ lie in $\{-\norm{O}\norm{\vv{a}}^{2}_{1},\norm{O}\norm{\vv{a}}^{2}_{1}\}$, Hoeffding's inequality implies that the number of samples necessary to attain an $\epsilon$-precise estimate with $1-\delta$ confidence is
\begin{equation}
S_{2}=\frac{2\norm{O}^{2}\norm{\vv{a}}^{4}_{1}}{(\epsilon-\nu^{(2)})^{2}}\log(\frac{2}{\delta}).
\end{equation}
The claim then follows by setting the approximation error $\nu^{(2)}=\epsilon/2$.
\end{proof}

\section{Error robustness analysis} \label{sec:error_analysis_proof}

In this appendix, we prove Theorem \ref{thm:error_robust}. Before the proof, two auxiliary lemmas are presented. In Lemma\ \ref{lem:dep_chanel} we show that the action of a depolarizing channel can be commuted with the action of a generic unitary channel. In particular, we consider the error model where the unitary is called many times, and each time it acts it is followed by a depolarizing channel. We show that this is equivalent to acting with all the depolarizing channels first and then all the unitaries. Lemma\ \ref{lem:error_j} shows how the presence of a depolarizing noise in the block-encoding oracle affects the result of the Hadamard test in Fig. \ref{fig:main}a with a fixed $j$. With these two lemmas, the proof of Theorem \ref{thm:error_robust} is obtained straightforwardly. At the end of the section, we also analyze the influence on the estimate of Algo. 1 of coherent noise in the block-encoding of $A$.

\begin{lemma}\label{lem:dep_chanel}
Let $U$ be any unitary operator over $\mathcal{H}_{1}\otimes\mathcal{H}_{a}\otimes\mathcal{H}_{s}$. Then, if $\Lambda$ is the full depolarization channel with strength $p$, we have that, for any $\sigma\in\mathcal{H}_{1}\otimes\mathcal{H}_{a}\otimes\mathcal{H}_{s}$
\begin{equation}
(\Lambda\circ U)[\sigma]=(U\circ\Lambda)[\sigma],
\end{equation}
and, in particular,
\begin{equation}
(\Lambda\circ U)^{j}[\sigma]=U^{j}\circ\Lambda^{j}[\sigma],
\end{equation}
where $U[\sigma]$ is the action of $U$ as a unitary channel over $\sigma$.
\end{lemma}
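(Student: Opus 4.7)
The plan is to prove the two claims by direct computation, exploiting the unitary invariance of the maximally mixed state. For the first identity, I would write out both sides explicitly:
\begin{equation*}
(\Lambda \circ U)[\sigma] = (1-p)\, U\sigma U^\dagger + p\,\frac{\mathds{1}}{D_\text{tot}}\,,
\end{equation*}
while
\begin{equation*}
(U \circ \Lambda)[\sigma] = (1-p)\, U\sigma U^\dagger + \frac{p}{D_\text{tot}}\, U\mathds{1} U^\dagger\,.
\end{equation*}
The key observation is simply that $U\mathds{1} U^\dagger = \mathds{1}$ because $U$ is unitary on the full space $\mathcal{H}_1\otimes\mathcal{H}_a\otimes\mathcal{H}_s$, so both expressions coincide. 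This is the crux of the lemma and requires no further machinery.

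For the iterated version, the plan is a straightforward induction on $j$ using the first part. Writing $(\Lambda\circ U)^j$ as a composition of $j$ copies of $\Lambda\circ U$ and repeatedly applying the commutation identity, every $\Lambda$ factor can be slid past every $U$ factor on its left without altering the channel. Collecting all $U$'s to the left and all $\Lambda$'s to the right yields $U^j \circ \Lambda^j$. Since the argument is purely algebraic (channel composition associativity plus the pointwise commutation already established), no subtle convergence or ordering issue arises, and both claims follow in a few lines.

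The only point worth emphasizing in the write-up is that the lemma relies critically on $\Lambda$ being \emph{fully} depolarizing on the total Hilbert space: a partial depolarization acting only on a subsystem would not commute with an arbitrary $U$ acting across the full space, because $U$ can scramble the subsystems. Since the theorem we are building towards applies $\Lambda$ globally (as per Eq.~\eqref{eq:depol}), this subtlety is exactly what makes the commutation hold. I do not anticipate any significant obstacle here; the lemma is essentially a one-line identity dressed up for later reuse in the proof of Theorem~\ref{thm:error_robust}, where it will allow one to collect all noise events into a single global depolarization of strength $1-(1-p)^{j}$ applied before the clean circuit $U^j$.
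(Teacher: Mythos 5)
Your proposal is correct and follows essentially the same route as the paper: the commutation identity is verified by the direct computation $U\big((1-p)\sigma + p\,\mathds{1}/D_\text{tot}\big)U^\dagger = (1-p)U\sigma U^\dagger + p\,\mathds{1}/D_\text{tot}$, relying on unitary invariance of the maximally mixed state, and the iterated version follows by repeatedly sliding each $\Lambda$ past the $U$'s. Your remark that full (global) depolarization is essential is a nice addition, but the argument itself matches the paper's.
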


\begin{proof}
First, we have that
\begin{equation}
\begin{split}
\Lambda\circ U[\sigma]&=\Lambda[\left(U\sigma U^{\dagger}\right)]\\
&=(1-p)\left(U\sigma U^{\dagger}\right)+p\frac{\mathds{1}}{D_\text{tot}}\\
&=U\left((1-p)\sigma+p\frac{\mathds{1}}{D_\text{tot}}\right)U^{\dagger}\\
&=U\circ\Lambda[\sigma],
\end{split}
\end{equation}
where $D_\text{tot}$ is the dimension of $\mathcal{H}_{1}\otimes\mathcal{H}_{a}\otimes\mathcal{H}_{s}$. Therefore, since $\Lambda$ and $U$ commute as channels, we can pass all $U$ to the left of the $\Lambda$ copies, proving the lemma.
\end{proof}

\begin{lemma}\label{lem:error_j}
Let $\textbf{Had}(U^{j}_{A},\Lambda,\ket{\phi},\ket{\psi})$ be the random variable obtained by the Hadamard test described in Fig.~\ref{fig:main}a when we use $\Lambda\circ \text{c}U_{A}$ in the place of $\text{c}U_{A}$. Then, for $B=\mathds{1}$
\begin{equation}
\mathbb{E}\Big[\textbf{Had}(U^{j}_{A},\Lambda,\ket{\phi},\ket{\psi})\Big]=(1-p)^{j}\Re\{\mel{\phi}{\mathcal{T}_j(A)}{\psi}\},
\end{equation}
and for $B=S^{\dagger}$
\begin{equation}
\mathbb{E}\Big[\textbf{Had}(U^{j}_{A},\Lambda,\ket{\phi},\ket{\psi})\Big]=(1-p)^{j}\Im\{\mel{\phi}{\mathcal{T}_j(A)}{\psi}\}.
\end{equation}
\end{lemma}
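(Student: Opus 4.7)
The plan is to reduce the noisy circuit to the ideal one by commuting all depolarizing channels to the left of every controlled call of $U_A$, using Lemma \ref{lem:dep_chanel}, and then exploit the fact that the maximally mixed component of the resulting state contributes zero to the $X_1$ expectation.

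More concretely, denote by $\sigma$ the state of the three registers immediately before the block of $j$ noisy oracle calls (i.e.\ after the Hadamard gate and $\text{c}U_\psi$). By Lemma \ref{lem:dep_chanel}, applying $(\Lambda\circ\text{c}U_A)^{j}$ to $\sigma$ is the same as applying $\text{c}U_A^{\,j}\circ\Lambda^{j}$ to $\sigma$. A short induction on $j$, using $\Lambda^{2}[\sigma]=(1-p)\Lambda[\sigma]+p\,\mathds{1}/D_\text{tot}$ together with $\Lambda[\mathds{1}/D_\text{tot}]=\mathds{1}/D_\text{tot}$, yields
\begin{equation}
\Lambda^{j}[\sigma]=(1-p)^{j}\sigma+\bigl(1-(1-p)^{j}\bigr)\,\frac{\mathds{1}}{D_\text{tot}}.
\end{equation}
Thus the noisy Hadamard-test statistic equals $(1-p)^{j}$ times the noiseless statistic plus $(1-(1-p)^{j})$ times the statistic one would obtain by feeding the maximally mixed state into all the remaining ideal gates $\bar{\text{c}}U_{\phi}$, $B$, and the measurement of $X_1$.

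For the second, ``noise-only'' term, unitarity implies that the maximally mixed state is preserved by every subsequent gate, so the measured density operator is still $\mathds{1}/D_\text{tot}=(\mathds{1}/2)_1\otimes(\mathds{1}/(D_a D_s))_{a,s}$. Since $\Tr[X_1\cdot\mathds{1}/D_\text{tot}]=\tfrac{1}{2}\Tr[X]\cdot 1=0$, this contribution vanishes identically and is independent of whether $B=\mathds{1}$ or $B=S^{\dagger}$. Consequently
\begin{equation}
\mathbb{E}\bigl[\textbf{Had}(U_A^{j},\Lambda,\ket{\phi},\ket{\psi})\bigr]=(1-p)^{j}\,\mathbb{E}\bigl[\textbf{Had}(U_A^{j},\ket{\phi},\ket{\psi})\bigr],
\end{equation}
and invoking Lemma \ref{lem:had_circuit_1} for each choice of $B$ yields the two claimed identities.

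I do not anticipate any real obstacle: the only subtlety is making sure that Lemma \ref{lem:dep_chanel} is truly applicable here, i.e.\ that the depolarizing channel in the error model acts on the \emph{whole} tripartite Hilbert space $\mathcal{H}_1\otimes\mathcal{H}_a\otimes\mathcal{H}_s$ (as defined in the main text) rather than on a proper subsystem; if this were not the case, $\Lambda$ would not commute with $\text{c}U_A$ and the argument would break down. Granted this, the proof is essentially a one-line commutation plus the trace-zero observation $\Tr X=0$.
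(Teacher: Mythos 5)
Your proof is correct and follows essentially the same route as the paper's: both commute the depolarizing channels past the unitaries via Lemma \ref{lem:dep_chanel}, write $\Lambda^{j}$ of the state as a convex combination of the ideal state and $\mathds{1}/D_\text{tot}$, and kill the noise term using the tracelessness of $X_1$. Your added caveat that $\Lambda$ must act on the full tripartite register for the commutation to hold is exactly the assumption the paper makes in its error model.
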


\begin{proof}
Note that, by Fig.~\ref{fig:main}.a)
\begin{equation}
\mathbb{E}\Big[\textbf{Had}(U^{j}_{A},\Lambda,\ket{\phi},\ket{\psi})\Big]=\Tr[X_{1}\left(N[\ketbra{+}_{1}\varrho_{a,s}]\right)],
\end{equation}
where 
\begin{equation}
N=\Bar{\text{c}}U_{\phi}\circ{(\Lambda\circ \text{c}U_{A})}^{j}\circ \text{c}U_{\psi}\circ B,
\end{equation}
and $\varrho_{a,s}=\ketbra{0}_{a}\ketbra{0}_{s}$. Now, using Lemma \ref{lem:dep_chanel} we have that
\begin{equation}
\begin{split}
N&=\Bar{\text{c}}U_{\phi}\circ \text{c}U_{A}^{j}\circ\Lambda^{j}\circ \text{c}U_{\psi}\circ B\\
&=\Bar{\text{c}}U_{\phi}\circ \text{c}U_{A}^{j}\circ \text{c}U_{\psi}\circ B\circ\Lambda^{j}=G\circ\Lambda^{j},
\end{split}
\end{equation}
where $G$ is the unitary channel $G[\sigma]=G\sigma G^{\dagger}$ with $G=\Bar{\text{c}}U_{\phi}\,\text{c}U^{j}_{A}\,\text{c}U_{\psi}\,B$. 
Therefore, 
\begin{equation}
\begin{split}
&N[\ketbra{+}_{1}\varrho_{a,s}]=G\circ\Lambda^{j}[\ketbra{+}_{1}\varrho_{a,s}]=\\
&(1-p)^j G(\ketbra{+}_{1}\varrho_{a,s})G^{\dagger}+(1-(1-p)^{j})\frac{\mathds{1}}{D_\text{tot}}.
\end{split}
\end{equation}
Now, by the proof of lem.\ref{lem:had_circuit_1}, and the fact that $X_{1}$ is trace-less, we have that
\begin{equation}
\begin{split}
\mathbb{E}\Big[\textbf{Had}(U^{j}_{A},\Lambda,\ket{\phi},\ket{\psi})\Big]=(1-p)^{j}\mathbb{E}\Big[\textbf{Had}(U^{j}_{A},\ket{\phi},\ket{\psi})\Big],
\end{split}
\end{equation}
which is equivalent to the statement of the lemma given any choice of $B$
\end{proof}

Now, we are ready for the proof of Theorem \ref{thm:error_robust}.

\begin{proof}[Proof of Theorem \ref{thm:error_robust}]
By definition, we have that
\begin{equation}
\abs{\mathbb{E}\big[\Tilde{z}^{(1)}\big]-\mathbb{E}\big[\Tilde{z}^{(1),\Lambda}\big]}=\abs{\sum^{k}_{j=0}a_{j}(1-(1-p)^{j})\mel{\phi}{\mathcal{T}_j(A)}{\psi}}.
\end{equation}
By Bernoulli's inequality $1-(1-p)^{j}\leq pj$ for $p<1$, implying
\begin{equation}
\abs{\mathbb{E}\big[\Tilde{z}^{(1)}\big]-\mathbb{E}\big[\Tilde{z}^{(1),\Lambda}\big]}\leq p\abs{\sum^{k}_{j=0}j\,a_{j}\mel{\phi}{\mathcal{T}_j(A)}{\psi}}.   
\end{equation}
Now, using the fact that $\abs{\mel{\phi}{\mathcal{T}_j(A)}{\psi}}\leq1$ for any $\ket{\phi},\ket{\psi}$ and the triangle inequality one gets
\begin{equation}
\abs{\mathbb{E}\big[\Tilde{z}^{(1)}\big]-\mathbb{E}\big[\Tilde{z}^{(1),\Lambda}\big]}\leq p\sum^{k}_{j=0}j\,\abs{a_{j}}=p\norm{\vv{a}}_{1}\mathbb{E}[j].
\end{equation}
\end{proof}

\begin{lemma}[Noise robustness of QSVT] Let $U_{\Tilde{f}(A)}$ be the block-encoding of $\Tilde{f}(A)/\|\Tilde{f}\|$ resulting from an ideal QSVT circuit, with $\|\Tilde{f}\|=\text{max}_{x\in[-1,1]}|\Tilde{f}(x)|$. Let $\Tilde{z}^{(1)}_{\text{fq}}=\|\Tilde{f}\| \,\textbf{Had}(U_{\Tilde{f}(A)},\ket{\phi},\ket{\psi})$ and $\Tilde{z}^{(1),\Lambda}_{\text{fq}}= \|\Tilde{f}\|\,\textbf{Had}(U_{\Tilde{f}(A)},\Lambda,\ket{\phi},\ket{\psi})$ be the random variable obtained by a Hadamard test on $U_{\Tilde{f}(A)}$ and its noisy counterpart with $\Lambda$ acting after each oracle query, respectively. Then, $\mathbb{E}[\Tilde{z}^{(1)}_{\text{fq}}]=\text{Re}\{\mel{\phi}{\Tilde{f}(A)}{\psi}\}$  and  
\begin{align}\label{eq:noise_robustnessQSP}
\abs{\mathbb{E}\big[\Tilde{z}^{(1)}_{\text{fq}}\big]-\mathbb{E}\big[\Tilde{z}^{(1),\Lambda}_{\text{fq}}\big]} &\le
p_\text{fq}\,E^{(1)}_\text{fq}\leq\, p_\text{fq}\,\|\Tilde{f}\|\,k \,,
\end{align}    
with $E^{(1)}_\text{fq} := k \abs{\text{Re}\{\mel{\phi}{\Tilde{f}(A)}{\psi}\}}$.
\end{lemma}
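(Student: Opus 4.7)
The proof will proceed in two parts, both closely following the strategy already laid out in Lemmas \ref{lem:had_circuit_1}--\ref{lem:error_j} and Theorem \ref{thm:error_robust}. The only conceptual difference is that the sum over Chebyshev degrees in the randomized case is replaced by a single circuit running the full QSVT pulse sequence, which calls the oracle $k$ times rather than a sampled $j$ times. The aim is to show, in the noisy version, that this $k$ shows up as a multiplicative factor in the bias, exactly as $\mathbb{E}[j]$ did for our randomized estimator.

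For the unbiasedness claim, I would invoke the block-encoding identity \eqref{eq:block_encoding} together with Lemma \ref{lem:had_circuit_1}, applied now to the unitary $U_{\Tilde{f}(A)}$ that block-encodes $\Tilde{f}(A)/\|\Tilde{f}\|$ instead of $\mathcal{T}_j(A)$. The computation of $\Tr[X_1\, G(\ketbra{+}_1\otimes\varrho_{a,s})G^\dagger]$ is identical up to the substitution $U_A^j\to U_{\Tilde{f}(A)}$, and yields $\text{Re}\{\mel{\phi}{\Tilde{f}(A)/\|\Tilde{f}\|}{\psi}\}$. Multiplying by the prefactor $\|\Tilde{f}\|$ in the definition of $\Tilde{z}^{(1)}_{\text{fq}}$ then gives $\mathbb{E}[\Tilde{z}^{(1)}_{\text{fq}}]=\text{Re}\{\mel{\phi}{\Tilde{f}(A)}{\psi}\}$.

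For the noisy case, the key step is to push all $k$ depolarizing channels to the start of the circuit. Since the only non-oracle gates appearing between consecutive oracle calls are the QSP pulses (plus the Hadamard-test and state-preparation wrappers), all of which are unitary channels on the joint Hilbert space, Lemma \ref{lem:dep_chanel} applies to each of them. Hence the composition of $k$ noisy oracle calls interleaved with arbitrary unitaries equals the ideal $U_{\Tilde{f}(A)}$-channel composed with $\Lambda^k$. Using $\Lambda^k[\sigma]=(1-p_{\text{fq}})^k\sigma+(1-(1-p_{\text{fq}})^k)\mathds{1}/D_\text{tot}$, the maximally mixed contribution is annihilated by the final trace against $X_1$ (because $\Tr[X_1]=0$), as in the proof of Lemma \ref{lem:error_j}. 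Thus $\mathbb{E}[\Tilde{z}^{(1),\Lambda}_{\text{fq}}]=(1-p_{\text{fq}})^k\,\mathbb{E}[\Tilde{z}^{(1)}_{\text{fq}}]$.

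The noise-inaccuracy bound then follows directly from Bernoulli's inequality $1-(1-p_{\text{fq}})^k\leq k\,p_{\text{fq}}$ for $p_{\text{fq}}\in(0,1)$, giving the first inequality in \eqref{eq:noise_robustnessQSP} with the claimed $E^{(1)}_{\text{fq}}=k\,|\text{Re}\{\mel{\phi}{\Tilde{f}(A)}{\psi}\}|$. The second, cruder inequality is immediate from $|\text{Re}\{\mel{\phi}{\Tilde{f}(A)}{\psi}\}|\leq\|\Tilde{f}(A)\|\leq\|\Tilde{f}\|$. I do not anticipate any real obstacle; the only mild subtlety is checking that the QSP pulses do not interfere with the commutation argument, but since global depolarization commutes with any unitary channel on the full Hilbert space, this is automatic and no structural feature of QSP needs to be invoked.
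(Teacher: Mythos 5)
Your proposal is correct and follows essentially the same route as the paper: unbiasedness via Lemma \ref{lem:had_circuit_1} with $U_A^j$ replaced by $U_{\Tilde{f}(A)}$, then commuting the $k$ depolarizing channels past the unitaries as in Lemmas \ref{lem:dep_chanel} and \ref{lem:error_j} to get the factor $(1-p_\text{fq})^k$, followed by Bernoulli's inequality. The paper states this in two sentences; your write-up simply fills in the details, including the correct final bound $\abs{\Re\{\mel{\phi}{\Tilde{f}(A)}{\psi}\}}\le\|\Tilde{f}\|$.
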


\begin{proof}
First notice that the Hadamard test will work as in Lemma~\ref{lem:had_circuit_1} with $U_A^j$ substituted by $U_{\Tilde{f}(A)}$.
    The rest of the proof follows analogously to the proof of Lemma~\ref{lem:error_j} by noticing that the circuit always makes $k$ calls to the noisy oracle.
\end{proof}

So far, we considered only the effect of incoherent error. Now we consider the case where the source of error is a coherent imperfection in the block-encoding oracle of $A$, as we define below.
\begin{definition}
Given an Hermitian operator $A$ with $\norm{A}\leq1$, an $\varepsilon$-approximate block-encoding of $A$ is a unitary operator $U_{A,\varepsilon}$ satisfying
\begin{equation}
\norm{(\bra{0}_{a}\otimes\mathds{1}_{s})U_{A,\varepsilon}(\ket{0}_{a}\otimes\mathds{1}_{s})-A}_{tr}<\varepsilon,
\end{equation}
where $\norm{\,\,}_{tr}$ is the trace norm.
\end{definition}

Under the approximate oracle assumption, the next lemma shows how the error propagates to our Algo. 1 estimate result.

\begin{lemma}
Let $U_{A,\varepsilon}$ be an $\varepsilon$-approximate block-encoding of $A$, and let $U_{A}$ be an exact block encoding of $A$. If $\Tilde{z}^{(1)}$ is as in Eq. (\ref{eq:ztildeP}a) and $\Tilde{z}^{(1),\varepsilon}$ is the unbiased estimator obtained from the Hadamard test in Fig.~\ref{fig:main}a when exchanging $\text{c}U^{j}_{A}$ by $\text{c}U^{j}_{A,\varepsilon}$, with $(k\,\varepsilon)\leq1/2$,
then 
\begin{equation}
\abs{\mathbb{E}\big[\Tilde{z}^{(1)}\big]-\mathbb{E}\big[\Tilde{z}^{(1),\varepsilon}\big]}\leq\, 2\,\varepsilon\,\norm{\vv{a}}_{1}\,\mathbb{E}[j]\,,
\end{equation}
\end{lemma}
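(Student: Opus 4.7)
The plan is to reduce the statement to a per-term comparison of the blocks of $U_A^j$ versus $U_{A,\varepsilon}^j$, and then propagate the $\varepsilon$-closeness of the oracle through a telescoping argument. Concretely, I would first apply Lemma~\ref{lem:had_circuit_1} to both the ideal and noisy Hadamard tests, obtaining
\begin{equation}
\mathbb{E}\bigl[\tilde z^{(1)}\bigr]-\mathbb{E}\bigl[\tilde z^{(1),\varepsilon}\bigr]
=\sum_{j=0}^{k} a_{j}\,\bra{\phi}\bigl(\mathcal{T}_j(A)-B_j\bigr)\ket{\psi},
\end{equation}
where $B_j:=(\bra{0}_a\otimes\mathds{1}_s)\,U_{A,\varepsilon}^{j}\,(\ket{0}_a\otimes\mathds{1}_s)$ is the effective (non-unitary) block obtained from the $j$-th power of the faulty oracle, and the combinatorial structure (sampling of $j_\alpha$ and sign tracking by $\sgn(a_{j_\alpha})$) reassembles exactly the Chebyshev series as in the proof of Theorem~\ref{main_lemma}. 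This reduces the claim to an operator-norm bound on $B_j-\mathcal{T}_j(A)$.

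Next, I would upgrade the block approximation $\|B_1-A\|\leq\varepsilon$ (which follows from the trace-norm definition since $\|\cdot\|_{\text{op}}\leq\|\cdot\|_{\text{tr}}$) to a full-unitary bound $\|U_{A,\varepsilon}-U_A\|_{\text{op}}\leq 2\varepsilon$. The factor of two arises because forcing the faulty block into a qubitized block-encoding of $A$ may perturb the off-diagonal blocks by at most the same amount as the principal one, so the unitary correction costs at most another $\varepsilon$. With this in hand, a standard telescoping identity
\begin{equation}
U_{A,\varepsilon}^{j}-U_{A}^{j}=\sum_{m=0}^{j-1}U_{A,\varepsilon}^{m}\bigl(U_{A,\varepsilon}-U_{A}\bigr)U_{A}^{j-1-m}
\end{equation}
together with unitarity of each factor gives $\|U_{A,\varepsilon}^{j}-U_{A}^{j}\|_{\text{op}}\le 2j\varepsilon$, and hence $\|B_j-\mathcal{T}_j(A)\|\le 2j\varepsilon$. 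The hypothesis $k\varepsilon\le 1/2$ keeps this linear bound within the trivial ceiling $\|B_j-\mathcal{T}_j(A)\|\leq 2$, so the estimate is meaningful for every term in the sum.

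Finally, I would assemble the pieces via the triangle inequality and $|\bra{\phi}M\ket{\psi}|\leq\|M\|_{\text{op}}$:
\begin{equation}
\Bigl|\mathbb{E}\bigl[\tilde z^{(1)}\bigr]-\mathbb{E}\bigl[\tilde z^{(1),\varepsilon}\bigr]\Bigr|
\le\sum_{j=0}^{k}|a_j|\,\|B_j-\mathcal{T}_j(A)\|
\le 2\varepsilon\sum_{j=0}^{k}j\,|a_j|
=2\varepsilon\,\|\vv{a}\|_{1}\,\mathbb{E}[j],
\end{equation}
where in the last equality I used the definition $p(j)=|a_j|/\|\vv{a}\|_1$ so that $\sum_j j|a_j|=\|\vv{a}\|_1\mathbb{E}[j]$. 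The main obstacle I anticipate is Step~2: the passage from the trace-norm block closeness to an operator-norm bound on the full unitaries is not automatic, and pinning down the precise constant (here $2$) requires an explicit construction relating $U_{A,\varepsilon}$ to a nearby exact block-encoding of $A$. Everything else (Hadamard-test identity, telescoping, and triangle inequality) is routine once that step is in place.
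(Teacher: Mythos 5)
Your overall route is the same as the paper's: reduce the difference of expectations, via Lemma~\ref{lem:had_circuit_1}, to the term-by-term block comparison $\sum_{j}a_j\bra{\phi}\bigl(\mathcal{T}_j(A)-B_j\bigr)\ket{\psi}$ with $B_j$ the principal block of $U_{A,\varepsilon}^{j}$, bound each $\norm{B_j-\mathcal{T}_j(A)}$ by $2j\varepsilon$ through a telescoping argument, and finish with the triangle inequality and $\sum_j j\,|a_j|=\norm{\vv{a}}_1\,\mathbb{E}[j]$. Steps~1 and~3 are correct and match the paper, which itself simply asserts the $2j\varepsilon$ block bound without derivation.

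The genuine gap is exactly the step you flag as the main obstacle, and the heuristic you offer for it does not hold: closeness of the principal blocks implies \emph{no} operator-norm closeness of the unitaries, because the definition of an $\varepsilon$-approximate block-encoding leaves the off-diagonal blocks completely unconstrained. Concretely, take $A=0$ on a one-dimensional system, so the exact qubitized oracle is $U_A=e^{-i(\pi/2)Y}=\bigl(\begin{smallmatrix}0&-1\\1&\phantom{-}0\end{smallmatrix}\bigr)$, and let $U_{A,\varepsilon}=X=\bigl(\begin{smallmatrix}0&1\\1&0\end{smallmatrix}\bigr)$. Both have principal block exactly $0$, so this is a $0$-approximate block-encoding, yet $\norm{U_{A,\varepsilon}-U_A}=2$; worse, the principal block of $U_{A,\varepsilon}^{2}$ equals $1$ while $\mathcal{T}_2(0)=-1$, so $\norm{B_2-\mathcal{T}_2(A)}=2$ even though $\varepsilon=0$. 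Your Step~2 therefore fails, and with it the bound $\norm{B_j-\mathcal{T}_j(A)}\le 2j\varepsilon$. The lemma is only salvageable under the stronger hypothesis actually stated in the main text (Sec.~\ref{sec:error}), namely that $U_{A,\varepsilon}$ is a unitary with $\norm{U_{A,\varepsilon}-U_A}\le\varepsilon$; under that hypothesis your telescoping identity immediately gives $\norm{U_{A,\varepsilon}^{j}-U_A^{j}}\le j\varepsilon$, hence $\norm{B_j-\mathcal{T}_j(A)}\le j\varepsilon$ and the claimed bound follows (indeed without the factor of $2$ and without needing $k\varepsilon\le 1/2$). The alternative repair --- assuming the faulty oracle is itself an exact qubitized block-encoding of some Hermitian $A'$ with $\norm{A'-A}\le\varepsilon$ --- does not recover the stated scaling either, since the Lipschitz constant of $\mathcal{T}_j$ on $[-1,1]$ is $j^2$, not $2j$.
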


\begin{proof}
If $U_{A,\varepsilon}$ is an $\varepsilon$-approximate block-encoding then,
\begin{equation}
\norm{(\bra{0}_{a}\otimes\mathds{1}_{s})U^{j}_{A,\varepsilon}(\ket{0}_{a}\otimes\mathds{1}_{s})-\mathcal{T}_{j}(A)}_\text{tr}\le 2\,j\,\varepsilon,
\end{equation}
which implies that
\begin{equation}
\abs{(\bra{0}_{a}\bra{\phi}_{s})U^{j}_{A,\varepsilon}(\ket{0}_{a}\ket{\psi}_{s})-\mel{\phi}{\mathcal{T}_{j}(A)}{\psi}}\le 2\,j\,\varepsilon.
\end{equation}
Therefore, denoting by $\delta \tilde{z}^{(1)}\equiv\abs{\mathbb{E}\big[\Tilde{z}^{(1)}\big]-\mathbb{E}\big[\Tilde{z}^{(1),\varepsilon}\big]}$ the error in the estimator $z^{(1)}$ from replacing the exact block-encoding $U_A$ by the faulty $U_{A,\varepsilon}$,
\begin{align}
\delta \tilde{z}^{(1)} &= \abs{\sum^{k}_{j=0}a_{j}\left((\bra{0}_{a}\bra{\phi}_{s})U^{j}_{A,\varepsilon}(\ket{0}_{a}\ket{\psi}_{s})-\mel{\phi}{\mathcal{T}_{j}(A)}{\psi}\right)} \notag\\    
&\le \varepsilon\abs{\sum^{k}_{j=0}ja_{j}}\leq2\,\varepsilon\norm{\vv{a}}_{1}\mathbb{E}[j].
\end{align}
\end{proof}

\section{Specific function approximation results} \label{sec:explicit_examples}

\begin{table*}[htpb!]
    \centering
    \begin{tabular}{||c||c|c|c|c||}
     \hline
     \rule{0pt}{3ex}
     \rule[-1.5ex]{0pt}{0pt}
     {\bf Function} & {\bf Domain} & {\bf Truncation order ($k$)} & {\bf $\norm{\vv{a}}_{1}$} & {\bf $\norm{f}_\infty$} \\
     \hline
     \hline
     \rule{0pt}{3ex}
     \rule[-1.5ex]{0pt}{0pt}
     $x^{t}$ ($t\in\mathbb{N}$) & $[-1,1]$ & $\sqrt{2t\,\log\frac{2}{\nu}}$ & $\mathcal{O}(1)$ & $\mathcal{O}(1)$\\
     \hline
     \rule{0pt}{4ex}
     \rule[-2ex]{0pt}{0pt}
     $e^{-\beta x/2}$ ($\beta\in\mathbb{R}$) & $[-1,1]$ & $\sqrt{2\,\max\left\{\frac{e^2\beta}{2},\log\Big(\frac{2\, e^{\beta/2}}{\nu}\Big)\right\}\,\log\Big(\frac{4\,e^{\beta/2}}{\nu}\Big)}$ & $\mathcal{O}(e^{\beta/2})$ & $\mathcal{O}(e^{\beta/2})$ \\
     \hline
     \rule{0pt}{4ex}
     \rule[-1.5ex]{0pt}{0pt}
     $x^{-1}$ & $[-1,-1/\kappa]\cup[1/\kappa,1]$ &  $\quad 2\kappa\,\sqrt{\log\frac{\kappa}{\nu} \, \log(\frac{4\kappa^{2}}{\nu}\log\frac{\kappa}{\nu})}+1\quad $ & $\mathcal{O}\left(\kappa\sqrt{\log\frac{\kappa}{\nu}}\right)$ & $\mathcal{O}(\kappa)$ \\
     \hline
     \rule{0pt}{3ex}
     \rule[-1.5ex]{0pt}{0pt}
     $\theta(x)$ & $[-1,-\xi]\cup[\xi,1]$ & $\mathcal{O}\left(\frac{1}{\xi}\log\frac{1}{\nu}\right)$ & {$\mathcal{O}\left(\log(\frac{1}{\xi}\log\frac{1}{\nu})\right)$} & $\mathcal{O}(1)$ \\
     \hline
    \end{tabular}
    \caption{Chebyshev $\nu$-approximation data (approximation interval $\mathcal{I}$, truncation degree $k$ and coefficient 1-norm $\norm{\vv{a}}_1$) for our exemplary functions. 
    When a singular point is present at the origin, the approximation interval excludes a ball of fixed radius around it. 
    For comparison, we also provide the $\infty$-norm $\norm{f}_\infty:=\max_{x\in I} |f(x)|$, which in the matrix function setting plays the role of spectral norm $\norm{f(A)}$ of the matrix $f(A)$. Namely, the difference between $\norm{\vv{a}}_1$ and $\norm{f}_\infty$ translates into the difference between sample complexities between our semi-quantum algorithm and the corresponding fully-quantum one.}
    \label{tab:chebyshev_data}
\end{table*}

The following general lemma allows us to compute the 1-norm $\norm{\vv{a}}_1$ in certain cases.

\begin{lemma}\label{lemma_1norm} 
Let $\vv{a}=\{a_0,\ldots,a_k\}$ be the vector of Chebyshev coefficients of the polynomial $\Tilde{f}$ as in \eq{ftilde}. Then $\norm{\vv{a}}_1=|\Tilde{f}(1)|$ if all $a_j$'s have the same sign and $\norm{\vv{a}}_1=|\Tilde{f}(-1)|$ if they alternate signs. 
\end{lemma}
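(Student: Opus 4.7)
The plan is to exploit the elementary boundary values of the Chebyshev polynomials of the first kind, namely $\mathcal{T}_j(1)=1$ and $\mathcal{T}_j(-1)=(-1)^j$ for every $j\in\mathbb{N}$, both of which follow immediately from the trigonometric definition $\mathcal{T}_j(\cos\theta)=\cos(j\theta)$ evaluated at $\theta=0$ and $\theta=\pi$.

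First, I would substitute $x=1$ into \eq{ftilde} to obtain
\begin{equation*}
\Tilde{f}(1)=\sum_{j=0}^{k}a_{j}\,\mathcal{T}_{j}(1)=\sum_{j=0}^{k}a_{j}.
\end{equation*}
If all coefficients share a common sign $s\in\{-1,+1\}$, then $a_j=s|a_j|$ for every $j$, so $\Tilde{f}(1)=s\sum_{j=0}^{k}|a_j|=s\,\norm{\vv{a}}_1$. Taking absolute values yields $|\Tilde{f}(1)|=\norm{\vv{a}}_1$, which is the first claim.

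Next, I would substitute $x=-1$ to get
\begin{equation*}
\Tilde{f}(-1)=\sum_{j=0}^{k}a_{j}\,\mathcal{T}_{j}(-1)=\sum_{j=0}^{k}(-1)^{j}\,a_{j}.
\end{equation*}
If the $a_j$'s alternate in sign, i.e. $\sgn(a_j)=(-1)^j s$ for some fixed $s\in\{-1,+1\}$, then $(-1)^j a_j=s|a_j|$, so $\Tilde{f}(-1)=s\,\norm{\vv{a}}_1$ and hence $|\Tilde{f}(-1)|=\norm{\vv{a}}_1$, establishing the second claim.

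There is no real obstacle here; the lemma is essentially a direct evaluation. The only thing to be careful about is to state precisely the ``alternating signs'' hypothesis so that $(-1)^j a_j$ has constant sign across $j$, which is exactly what collapses the signed sum into the absolute sum $\sum_j|a_j|=\norm{\vv{a}}_1$.
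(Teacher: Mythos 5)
Your proof is correct and follows exactly the same route as the paper's: evaluate $\Tilde{f}$ at $x=\pm1$ using $\mathcal{T}_j(1)=1$ and $\mathcal{T}_j(-1)=(-1)^j$, and use the sign hypothesis to collapse the signed sum into $\norm{\vv{a}}_1$. Your version merely writes out the substitution steps more explicitly than the paper does.
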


\begin{proof}
First notice that if all the coefficients have the same sign (respectively, opposite signs) then either $a_j=|a_j|$ or $a_j=-|a_j|$ for all $j$ (respectively, either $a_j=(-1)^j |a_j|$ or $a_j=(-1)^{j+1} |a_j|$ for all $j$). 
The claim then follows immediately from the fact that $\mathcal{T}_j(1)=1$ and $\mathcal{T}_j(-1)=(-1)^j$ for all $j$. 
\end{proof}

\subsection{Monomial function} \label{sec:explicit_examples_monomials}

\subsubsection{Polynomial approximation and its parameters}
The following lemma gives the Chebyshev approximation to the monomial function that we use to prove the complexity of our algorithm. The lemma is a version of \cite[Lemma 3.3]{Vishnoi2013} with explicit coefficients.
\begin{lemma}\label{lem:approx_mon}
Let $f:\mathbb{R}\rightarrow\mathbb{R}$ be such that $f(x)=x^{t}$ for some $t\in\mathbb{N}$. Then, the function $\Tilde{f}:[-1,1]\rightarrow\mathbb{R}$ given by the Chebyshev polynomial 
\begin{equation}
\Tilde{f}(x)=
 2^{1-t} \sideset{}{'}\sum^{k}_{\substack{j=0\\ t-j\text{ even}}}\binom{t}{(t-j)/2}\mathcal{T}_{j}(x),
\end{equation}
is a $\nu$-approximation of $f$ on $[-1,1]$, if and only if 
\begin{equation}
k\geq\sqrt{2t\log(\frac{2}{\nu})}.    
\end{equation}
\end{lemma}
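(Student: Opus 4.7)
The plan is to start from the well-known closed-form Chebyshev expansion of $x^t$, truncate it at order $k$, and bound the resulting tail. The starting point is the identity $\cos^t\theta = 2^{1-t} \sideset{}{'}\sum_{j=0,\, t-j\text{ even}}^{t}\binom{t}{(t-j)/2}\cos(j\theta)$, which I would derive by writing $\cos^t\theta = 2^{-t}(e^{i\theta}+e^{-i\theta})^t$, expanding via the binomial theorem, and pairing complex conjugate exponents. Substituting $x=\cos\theta$ and using $\mathcal{T}_j(x)=\cos(j\arccos x)$ yields an \emph{exact} Chebyshev expansion of $x^t$ truncated at $j=t$. The proposed $\tilde f$ is precisely this expansion cut off at $j=k$.

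Second, I would control the truncation error. On $[-1,1]$ we have $|\mathcal{T}_j(x)|\le 1$, so
\begin{equation}
\max_{x\in[-1,1]}\bigl|f(x)-\tilde f(x)\bigr|\;\le\; 2^{1-t}\!\!\sum_{\substack{j=k+1\\ t-j\text{ even}}}^{t}\!\!\binom{t}{(t-j)/2}.
\end{equation}
The change of variable $m=(t-j)/2$ turns the right-hand side into $2^{1-t}\sum_{m=0}^{M}\binom{t}{m}$ with $M=\lfloor (t-k-1)/2\rfloor$, i.e.\ a lower tail of a $\mathrm{Binomial}(t,1/2)$ distribution.

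Third, I would apply Hoeffding's inequality to that tail: for $M\le t/2$,
\begin{equation}
\sum_{m=0}^{M}\binom{t}{m}\;\le\; 2^{t}\exp\!\Bigl(-\tfrac{(t-2M)^{2}}{2t}\Bigr)\;\le\; 2^{t}\exp\!\Bigl(-\tfrac{k^{2}}{2t}\Bigr),
\end{equation}
since $t-2M\ge k+1\ge k$. Combining with the previous step gives the clean bound $\max_{x\in[-1,1]}|f(x)-\tilde f(x)|\le 2\,e^{-k^{2}/(2t)}$. Demanding this to be at most $\nu$ rearranges exactly to $k\ge\sqrt{2t\log(2/\nu)}$, which establishes the ``if'' direction.

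For the ``only if'' direction, the key observation is that the error is actually attained at $x=1$: since $\mathcal{T}_j(1)=1$, one has $f(1)-\tilde f(1)=2^{1-t}\sum_{m=0}^{M}\binom{t}{m}$ exactly, with no slack from the $|\mathcal{T}_j|\le 1$ bound. The sufficiency direction thus already gives the uniform error, and matching this with a standard lower bound on the binomial lower tail (e.g.\ a reverse-Hoeffding / Slud-type estimate) forces $k\ge\sqrt{2t\log(2/\nu)}$ up to the stated constants. I expect the main subtlety to be not the algebra but this converse step: making sure the binomial tail lower bound is sharp enough to recover the same $\sqrt{2t\log(2/\nu)}$ threshold, rather than a weaker $\sqrt{2t\log(2/\nu)}-O(\log t)$ one. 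If a fully matching lower bound is not available, the ``only if'' should be read as asymptotic tightness in the relevant regime $t\gg 1$, $\nu\ll 1$.
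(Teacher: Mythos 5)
Your proof is correct and is essentially the argument behind the paper's one-line citation to Sachdeva--Vishnoi, Theorem 3.2: the exact Chebyshev expansion of $\cos^t\theta$ via the binomial theorem, reduction of the truncation error to a lower tail of $\mathrm{Binomial}(t,1/2)$, and Hoeffding's inequality. Your caveat on the ``only if'' direction is well taken --- the cited theorem establishes only the sufficiency of $k\geq\sqrt{2t\log(2/\nu)}$, and since the error at $x=1$ equals the binomial tail exactly, the converse holds only up to the sharpness of a matching tail lower bound, so the lemma's ``if and only if'' phrasing is stronger than what is actually proved in the reference or recoverable without an asymptotic qualifier.
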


\begin{proof}
See \cite{sachdeva_approximation_2013}, Theorem 3.2.
\end{proof}

\begin{lemma}
Let $\textbf{a}_\text{mon}$ be the vector of coefficients of the Chebyshev polynomial $\Tilde{f}_\text{mon}$. Then the one-norm is
\begin{equation}
\norm{\textbf{a}_\text{mon}}_{1}=1-\nu.
\end{equation}
\end{lemma}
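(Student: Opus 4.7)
The plan is to exploit the sign structure of the Chebyshev coefficients of $\Tilde{f}_\text{mon}$ together with the auxiliary result in Lemma \ref{lemma_1norm}. Concretely, I observe that every coefficient $a_j = 2^{1-t}\binom{t}{(t-j)/2}$ appearing in the truncated expansion of Lemma \ref{lem:approx_mon} is strictly positive, so they all have the same sign. Lemma \ref{lemma_1norm} then immediately yields
\begin{equation}
\norm{\textbf{a}_\text{mon}}_{1} \;=\; |\Tilde{f}_\text{mon}(1)| \;=\; \Tilde{f}_\text{mon}(1),
\end{equation}
using $\mathcal{T}_j(1)=1$ for all $j$.

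Next, I would relate $\Tilde{f}_\text{mon}(1)$ to $\nu$. Since $x^t$ is itself a polynomial of degree $t$, it coincides with its full Chebyshev expansion, so the truncation error is exactly the tail
\begin{equation}
f(x)-\Tilde{f}_\text{mon}(x) \;=\; 2^{1-t}\sideset{}{'}\sum_{\substack{j=k+1\\ t-j\text{ even}}}^{t}\binom{t}{(t-j)/2}\,\mathcal{T}_j(x).
\end{equation}
All the tail coefficients are positive and $|\mathcal{T}_j(x)|\le 1$ on $[-1,1]$ with equality at $x=\pm 1$. Hence the supremum of the absolute tail is attained at $x=1$ (and also at $x=-1$ up to a sign, since the surviving indices share the parity of $t$), which gives
\begin{equation}
\nu \;=\; \max_{x\in[-1,1]}\bigl|f(x)-\Tilde{f}_\text{mon}(x)\bigr| \;=\; f(1)-\Tilde{f}_\text{mon}(1) \;=\; 1-\Tilde{f}_\text{mon}(1).
\end{equation}
Combining the two displays yields $\norm{\textbf{a}_\text{mon}}_{1} = 1-\nu$, as claimed.

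The only real subtlety is interpretational rather than technical: the identity $\norm{\textbf{a}_\text{mon}}_{1}=1-\nu$ makes sense only when $\nu$ denotes the actual sup-norm error of the truncation, not an arbitrary upper bound; this is consistent with the ``if and only if'' phrasing of Lemma \ref{lem:approx_mon}. Once that convention is fixed, the argument reduces to the two elementary observations above (positivity of the coefficients, plus the fact that the Chebyshev tail of a monomial attains its sup-norm at the endpoints), and no delicate estimates are required.
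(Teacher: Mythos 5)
Your proposal is correct and follows essentially the same route as the paper, whose entire proof is the one-line remark that the result is a corollary of Lemma \ref{lemma_1norm}: you invoke that lemma via the positivity of the coefficients $2^{1-t}\binom{t}{(t-j)/2}$ to get $\norm{\textbf{a}_\text{mon}}_1=\Tilde{f}_\text{mon}(1)$. Your second step --- showing that the Chebyshev tail of $x^t$ attains its sup-norm at $x=1$, so that $\Tilde{f}_\text{mon}(1)=1-\nu$ when $\nu$ is read as the exact truncation error --- is the detail the paper leaves implicit, and your remark on that interpretational convention is apt.
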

\begin{proof}
Follows as corollary of Lemma \ref{lemma_1norm}.
\end{proof}

\begin{lemma}\label{lem:av_Q_mon}
Let $\textbf{a}_\text{mon}$ be the vector of coefficients of the Chebyshev polynomial $\Tilde{f}_\text{mon}$. Then, the average query complexity of estimating $\Tilde{z}^{(1)}$ is
\begin{equation}
\mathbb{E}[j]\leq\frac{1}{1-\nu}\sqrt{\frac{2t}{\pi}\frac{t}{t-1}}\,.
\end{equation}
\end{lemma}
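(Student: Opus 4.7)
The plan is to identify $\sum_j j\,a_j$ with the mean absolute deviation of a symmetric binomial distribution, and then apply a Stirling-type estimate on the central binomial coefficient.

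First, the coefficients in Lemma \ref{lem:approx_mon} are $a_j = 2^{1-t}\binom{t}{(t-j)/2}$ for $0\le j\le k$ with $t-j$ even and zero otherwise, so they are all nonnegative. Using $\norm{\vv{a}_{\text{mon}}}_1 = 1-\nu$ from the previous lemma,
\begin{equation*}
\mathbb{E}[j] \;=\; \frac{1}{1-\nu}\sum_{j=0}^{k} j\,a_j \;\le\; \frac{1}{1-\nu}\sum_{\substack{j=0 \\ t-j\,\text{even}}}^{t} j\,a_j,
\end{equation*}
since extending the sum from $k$ up to the full untruncated degree $t$ only enlarges the right-hand side (every added term is nonnegative). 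Substituting $m=(t-j)/2$ converts the upper bound into $\tfrac{2}{2^{t}(1-\nu)}\sum_{m=0}^{\lfloor(t-1)/2\rfloor}(t-2m)\binom{t}{m}$, which by the symmetry of the binomial coefficients about $t/2$ equals $\mathbb{E}\bigl[|2X-t|\bigr]/(1-\nu)$ for $X\sim\mathrm{Bin}(t,1/2)$.

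Next, I would invoke the classical identity $\mathbb{E}[|2X-t|]=\tfrac{t}{2^{t-1}}\binom{t-1}{\lfloor(t-1)/2\rfloor}$, which drops out in one line from the telescoping relation $(t-2m)\binom{t}{m}=t\bigl[\binom{t-1}{m}-\binom{t-1}{m-1}\bigr]$. Finally I would apply the Stirling estimate $\binom{2n}{n}\le 4^n/\sqrt{\pi n}$ to handle both parities of $t$: for $t$ odd it bounds $\binom{t-1}{(t-1)/2}$ by $2^{t-1}\sqrt{2/[\pi(t-1)]}$ directly, while for $t$ even the Pascal-plus-symmetry identity $\binom{t-1}{t/2-1}=\tfrac12\binom{t}{t/2}$ reduces the estimate to the even central binomial coefficient and yields an even sharper factor. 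Either way one obtains
\begin{equation*}
\mathbb{E}\bigl[|2X-t|\bigr]\;\le\;\sqrt{\frac{2t^{2}}{\pi(t-1)}}\,,
\end{equation*}
and dividing by $1-\nu$ delivers the stated bound.

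The main obstacle is the first reduction to a binomial mean absolute deviation: one must track the parity constraint $t-j$ even, the prime on the sum in Lemma \ref{lem:approx_mon} (which is harmless since it only rescales the $j=0$ term, whose contribution to $\mathbb{E}[j]$ vanishes), and the even/odd case split for $\lfloor(t-1)/2\rfloor$. After that step, the telescoping identity and the Stirling bound are routine.
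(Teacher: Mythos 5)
Your proof is correct and follows essentially the same route as the paper's: extend the sum from $k$ to $t$ using nonnegativity of the coefficients, evaluate $\sum_j j\,a_j$ in closed form as $\tfrac{t}{2^{t-1}}\binom{t-1}{\lfloor(t-1)/2\rfloor}$, and apply the central binomial coefficient bound $\binom{2n}{n}\le 4^n/\sqrt{\pi n}$. The only difference is presentational: your telescoping identity and mean-absolute-deviation framing unify the two parity cases that the paper treats separately (Eqs. \eqref{eq:Ej_monomial_even} and \eqref{eq:Ej_monomial_odd}), arriving at the same closed forms and the same final bound.
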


\begin{proof}
First notice that 
\begin{align}
\mathbb{E}[j] &= \frac{1}{\norm{\vv{a}_\text{mon}}}\sum_{j=1}^k j\,|a_j| \le \frac{1}{\norm{\vv{a}_\text{mon}}}\sum_{j=1}^t j\,|a_j|\,,
\end{align}
since $k\le t$ by assumption. The summation on the righthand side can be computed exactly, but its precise value depends on whether $t$ is even or odd. For $t$ even, the only non-vanishing Chebyshev coefficients are the even order ones, namely $a_{2\ell}=2^{1-t} \binom{t}{(t-2\ell)/2}$, and 
\begin{align}\label{eq:Ej_monomial_even}
\sum_{j=1}^t j\,|a_j| = \sum_{\ell=1}^{t/2} 2^{2-t}\,\ell\,\binom{t}{(t-2\ell)/2} = 
2^{-t} \,t\,\binom{t}{t/2} \le \sqrt{\frac{2t}{\pi}}\,,
\end{align}
where we have used the inequality $\binom{t}{t/2}\le \frac{2^{t}}{\sqrt{\pi t/2}}$. 
Similarly, for odd $t$ the only non-vanishing coefficients are the odd order ones, with $a_{2\ell-1}=2^{1-t} \binom{t}{(t-2\ell+1)/2}$, and 
\begin{align}\label{eq:Ej_monomial_odd}
\sum_{j=1}^t j\,|a_j| &= \sum_{\ell=1}^{(t+1)/2} 2^{1-t}\,(2\ell-1)\,\binom{t}{(t-2\ell+1)/2} \notag\\
&= 
2^{1-t}\,t\,\binom{t-1}{(t-1)/2} \le 
\sqrt{\frac{2t}{\pi}\frac{t}{t-1}}\,.
\end{align}
Notice that the bound \eq{Ej_monomial_even} contains the one in \eq{Ej_monomial_odd}, and therefore can be taken to hold for all $t$. The claim then follows by incorporating $\norm{\vv{a}_\text{mon}}$ from Lemma \ref{lem:av_Q_mon}.
\end{proof}

\subsubsection{Problem statement and complexities of the solution}

We  consider a  reversible and ergodic Markov chain on $N=2^n$ discrete states with row stochastic transition matrix $\mathcal{P}$ satisfying $\sum_{k}\mathcal{P}_{jk}=1$ whose stationary state, i.e the eigenstate corresponding to eigenvalue $\lambda=1$ has the Gibbs amplitudes $\bpi=(e^{-\beta E_0}/Z_\beta,\cdots,e^{-\beta E_{D-1}}/Z_\beta)$, with $\{E_l\}_{l\in[D]}$ the eigenvalues of the Hamiltonian $H$. 
The eigenvalues $\lambda$ of $\mathcal{P}$ lie in the interval $(-1,1]$ with $\lambda=1$ non-degenerate.
The so-called {\it discriminant matrix} $A$ is defined by $A_{jk}=\sqrt{\mathcal{P}_{jk}\mathcal{P}_{kj}}$. For the sake of our algorithm, the relevant properties of $A$ are Hermiticity and the fact that $A$ has the same spectrum of $\mathcal{P}$ with $\sqrt{\bpi}$ the eigenvector corresponding to eigenvalue $\lambda=1$.
In particular,  $A^t$ for $t=\mathcal{O}(\Delta^{-1})$ becomes approximately a projector over the state $\sqrt{\bpi}$ as we state
in the following lemma (here $\Delta=\min_{\{\lambda\}/1}(1-|\lambda|)$ is  the spectral gap of $A$). 

\begin{lemma}[Determining the monomial degree]\label{lemma_partition_function_power_method}
Let $\nu'>0$, $A$ a Markov-chain discriminant matrix with spectral gap $0<\Delta<1$, and $\ket{\vv{y}}$ a spin configuration with energy $E_{\vv{y}}$. For any integer $t\ge\frac{\log(1/\nu')}{\Delta}$, it holds that 
\begin{align}
\left|\!\mel{\vv{y}}{A^t}{\vv{y}} - \mel{\vv{y}}{\Pi_{\bpi}}{\vv{y}}\right| \le \nu'
\end{align}
\end{lemma}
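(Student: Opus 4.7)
The plan is a short spectral-decomposition argument. Since $A$ is Hermitian with simple eigenvalue $1$ (eigenvector $\ket{\sqrt{\bpi}}$) and all other eigenvalues $\mu$ satisfying $|\mu|\le 1-\Delta$, I would expand
\begin{equation}
A = \Pi_{\bpi} + \sum_{\mu\neq 1} \mu\,\ketbra{\mu},
\end{equation}
so that, by orthogonality of the eigenprojectors,
\begin{equation}
A^{t} - \Pi_{\bpi} = \sum_{\mu\neq 1} \mu^{t}\,\ketbra{\mu}.
\end{equation}

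Sandwiching between $\bra{\vv{y}}$ and $\ket{\vv{y}}$, taking absolute values, and pulling $|\mu|^{t}$ out of the sum gives
\begin{equation}
\abs{\mel{\vv{y}}{A^{t}}{\vv{y}} - \mel{\vv{y}}{\Pi_{\bpi}}{\vv{y}}} \le \max_{\mu\neq 1}|\mu|^{t} \sum_{\mu\neq 1} \abs{\!\braket{\vv{y}}{\mu}\!}^{2} \le (1-\Delta)^{t},
\end{equation}
where the last inequality uses the spectral gap bound and the completeness relation $\sum_{\mu}|\braket{\vv{y}}{\mu}|^{2}=1$.

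It then remains to show that $t\ge \log(1/\nu')/\Delta$ implies $(1-\Delta)^{t}\le \nu'$. Taking logarithms, this reduces to $t\log\!\big(1/(1-\Delta)\big)\ge \log(1/\nu')$, which follows from the elementary inequality $\log\!\big(1/(1-\Delta)\big)\ge \Delta$ valid for $\Delta\in[0,1)$. No step looks like a serious obstacle; the only thing to be a little careful about is noting that $A$ is Hermitian (so its eigenprojectors are orthogonal and the spectral expansion of $A^{t}$ is clean) and that the bound $|\mu|\le 1-\Delta$ holds uniformly over all non-stationary eigenvalues by the definition of the spectral gap given in the paragraph preceding the lemma.
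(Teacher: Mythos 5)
Your proposal is correct and follows essentially the same route as the paper's own proof: a spectral decomposition of $A$, the bound $\abs{\mel{\vv{y}}{A^t-\Pi_{\bpi}}{\vv{y}}}\le(1-\Delta)^t$ via completeness of the eigenbasis, and the elementary inequality $\log\big(1/(1-\Delta)\big)\ge\Delta$ to convert the condition on $t$. If anything, you spell out the final logarithmic step slightly more explicitly than the paper does.
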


\begin{proof}
Let $A=\sum_{\lambda}\lambda\ketbra{\lambda}{\lambda}=\Pi_{\vv{\pi}}+\sum_{\lambda\ne1}\lambda\ketbra{\lambda}{\lambda}$ be the spectral decomposition of $A$, where $\lambda$ denote its eigenvalues, $\{\ket{\lambda}\}$ the corresponding eigenstates, and $\Pi_{\vv{\pi}}=\ketbra{\sqrt{\bpi}}$ is the projector onto the "purified" Gibbs state. It follows that
\begin{align}
\left|\!\mel{\vv{y}}{A^t-\Pi_{\vv{\pi}}}{\vv{y}}\right| 
&\le \sum_{\lambda\ne1}|\lambda|^t\!\cdot\!|\!\braket{\vv{y}}{\lambda}\!|^2 \notag\\
&\le (1-\Delta)^t.
\end{align}
Hence, given some tolerated error $\nu'$, one can ensure that $\left|\!\mel{\vv{y}}{A^t}{\vv{y}} - \mel{\vv{y}}{\Pi_{\vv{\pi}}}{\vv{y}}\right| \le \nu'$ for any
\begin{align}\label{eq:tbound}
t \ge \frac{\log 1/\nu'}{\Delta}\ge \frac{\log 1/\nu'}{\log(1-\Delta)^{-1}} \,.
\end{align}
\end{proof}

\begin{lemma}\label{lem:estimate_z} Let $\tilde{z}^{(1)}$ be an estimate to $\mel{\vv{y}}{\Pi_{\bpi}}{\vv{y}}$ up to additive error $\epsilon<\mel{\vv{y}}{\Pi_{\bpi}}{\vv{y}}/2$.  If $\epsilon\leq \epsrel\big(2\,e^{\beta E_{\vv{y}}}Z_\beta\big)$, then $\Tilde{Z}_\beta:=\frac{e^{-\beta E_{\vv{y}}}}{\tilde{z}^{(1)}}$ is an estimate for the partition function $Z_\beta$ such that
\begin{equation}\label{eq:rel_error}
    \big|\Tilde{Z}_\beta-Z_\beta\big|<\epsrel Z_\beta.
\end{equation}
\end{lemma}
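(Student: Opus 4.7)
The plan is a direct propagation-of-error computation starting from the identity \eq{Zidentity}. Setting $w:=\mel{\vv{y}}{\Pi_{\bpi}}{\vv{y}}$, that identity reads $Z_\beta=e^{-\beta E_{\vv{y}}}/w$, while by definition $\Tilde{Z}_\beta=e^{-\beta E_{\vv{y}}}/\tilde{z}^{(1)}$. Subtracting and placing the two reciprocals over a common denominator yields
\begin{equation*}
\Tilde{Z}_\beta-Z_\beta \;=\; e^{-\beta E_{\vv{y}}}\,\frac{w-\tilde{z}^{(1)}}{\tilde{z}^{(1)}\,w},
\end{equation*}
so the entire problem reduces to bounding the numerator from above and the denominator from below.

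The numerator is controlled directly by the additive-error assumption $|\tilde{z}^{(1)}-w|\le\epsilon$. For the denominator, I would invoke the hypothesis $\epsilon<w/2$ together with the reverse triangle inequality to get $|\tilde{z}^{(1)}|\ge w-\epsilon>w/2$, which both guarantees $\tilde{z}^{(1)}\neq 0$ and gives $1/|\tilde{z}^{(1)}|<2/w$. Substituting these two bounds back into the displayed identity yields $|\Tilde{Z}_\beta-Z_\beta|\le 2\,e^{-\beta E_{\vv{y}}}\,\epsilon/w^{2}$, and using $1/w=e^{\beta E_{\vv{y}}}\,Z_\beta$ once more simplifies this to $|\Tilde{Z}_\beta-Z_\beta|\le 2\,\epsilon\,e^{\beta E_{\vv{y}}}\,Z_\beta^{2}$.

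The claimed relative-error bound \eq{rel_error} then follows immediately by imposing $\epsilon\le\epsrel/(2\,e^{\beta E_{\vv{y}}}\,Z_\beta)$, which is the form of the threshold consistent with the additive precision already quoted in the main text immediately before Lemma \ref{lem:estimate_z}. No step poses a genuine obstacle; the only subtle point worth flagging is that the assumption $\epsilon<w/2$ is exactly what is needed to keep $\tilde{z}^{(1)}$ bounded away from zero and hence to make the reciprocal map stable, without which a small additive error on $\mel{\vv{y}}{\Pi_{\bpi}}{\vv{y}}$ could translate into an arbitrarily large error on $\Tilde{Z}_\beta$.
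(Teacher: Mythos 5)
Your proof is correct and follows essentially the same route as the paper's: both propagate the additive error $\epsilon$ on $w=\mel{\vv{y}}{\Pi_{\bpi}}{\vv{y}}$ through the reciprocal map, using $\epsilon<w/2$ to bound the perturbed denominator away from zero at the cost of a factor of $2$, arriving at $|\Tilde{Z}_\beta-Z_\beta|\le 2\epsilon\, Z_\beta/w$ (the paper sandwiches $\Tilde{Z}_\beta$ via $(1\pm\xi)^{-1}$ inequalities rather than writing the difference over a common denominator, but the content is identical). You also correctly read the threshold as $\epsilon\le\epsrel/\big(2\,e^{\beta E_{\vv{y}}}Z_\beta\big)$, consistent with the main text, rather than the product form as typeset in the lemma statement.
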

\begin{proof}
The estimate $\tilde{z}^{(1)}$ satisfies $|\tilde{z}^{(1)}-\mel{\vv{y}}{\Pi_{\bpi}}{\vv{y}}|\leq \epsilon$. It follows straightforwardly that 
    \begin{equation}
\begin{split}
\frac{e^{-\beta E_{\vv{y}}}}{\mel{\vv{y}}{\Pi_{\bpi}}{\vv{y}}}&\left(1+\frac{\epsilon}{\mel{\vv{y}}{\Pi_{\bpi}}{\vv{y}}}\right)^{-1}
<\frac{e^{-\beta E_{\vv{y}}}}{\Tilde{z}^{(1)}}\\
&\qquad<\frac{e^{-\beta E_{\vv{y}}}}{\mel{\vv{y}}{\Pi_{\bpi}}{\vv{y}}}\left(1-\frac{\epsilon}{\mel{\vv{y}}{\Pi_{\bpi}}{\vv{y}}}\right)^{-1}.
\end{split}    
\end{equation}
Now, using that $1-2\xi\leq(1+\xi)^{-1}$ and $1+2\xi\geq(1-\xi)^{-1}$ for $0\leq \xi\leq1/2$, we can derive the inequality:
\begin{equation}
    \begin{split}
\frac{e^{-\beta E_{\vv{y}}}}{\mel{\vv{y}}{\Pi_{\bpi}}{\vv{y}}}&\left(1-2\frac{\epsilon}{\mel{\vv{y}}{\Pi_{\bpi}}{\vv{y}}}\right)
<\frac{e^{-\beta E_{\vv{y}}}}{\Tilde{z}^{(1)}}\\
&\qquad<\frac{e^{-\beta E_{\vv{y}}}}{\mel{\vv{y}}{\Pi_{\bpi}}{\vv{y}}}\left(1+2\frac{\epsilon}{\mel{\vv{y}}{\Pi_{\bpi}}{\vv{y}}}\right),
\end{split}    
\end{equation}
from which we obtain \eqref{eq:rel_error} by using $Z_\beta=\frac{e^{-\beta E_{\vv{y}}}}{\mel{\vv{y}}{\Pi_{\bpi}}{\vv{y}}}$ and setting $\epsilon\leq \epsrel\big(2\,e^{\beta E_{\vv{y}}}Z_\beta\big)$.
\end{proof}

\begin{lemma}[Complexity]\label{lem:complexity_mon}
    Let $\epsrel>0$ and $\nu'=\nu/2=\frac{\epsrel}{12\,e^{\beta E_{\vv{y}}}Z_\beta}$. Let $\tilde{z}^{(1)}$ be the estimate obtained by Algorithm 1 with inputs $\ket{\psi}=\ket{\phi}=\ket{\vv{y}}$, error $\frac{\epsrel}{2\,e^{\beta E_{\vv{y}}}Z_\beta}$, and $\tilde{f}(A)$ a $\nu$-approximation to $A^t$ with $t=\frac{\log(1/\nu')}{\Delta}$. Then $\Tilde{Z}_\beta:=\frac{e^{-\beta E_{\vv{y}}}}{\tilde{z}^{(1)}}$ is an estimate to the partition function $Z_\beta$ up to relative error $\epsrel$. Moreover, the algorithm has maximal query depth $k=\sqrt{\frac{2}{\Delta}}\Big(\beta E_{\vv{y}}+\log\frac{12\,Z_\beta}{\epsrel}\Big)$ and sample complexity $S^{(1)}=64\,e^{2\beta E_{\vv{y}}} Z_\beta^2\,\frac{\log(2/\delta)}{\epsrel^2}$.
\end{lemma}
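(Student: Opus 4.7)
The proof will proceed by a triangle-inequality decomposition of the total error into three sources and then tallying the budgets. The target is $\mel{\vv{y}}{\Pi_{\bpi}}{\vv{y}}$, but Algorithm 1 actually estimates $\mel{\vv{y}}{\Tilde{f}(A)}{\vv{y}}$, where $\Tilde{f}$ is a Chebyshev truncation of the monomial $A^t$. Thus, writing $\varepsilon_{\text{tot}} := |\tilde z^{(1)} - \mel{\vv{y}}{\Pi_{\bpi}}{\vv{y}}|$, I would split
\begin{equation*}
\varepsilon_{\text{tot}} \le \underbrace{|\mel{\vv{y}}{A^t - \Pi_{\bpi}}{\vv{y}}|}_{\text{power-method bias}} + \underbrace{|\mel{\vv{y}}{A^t - \Tilde{f}(A)}{\vv{y}}|}_{\text{Chebyshev bias}} + \underbrace{|\tilde z^{(1)} - \mel{\vv{y}}{\Tilde{f}(A)}{\vv{y}}|}_{\text{statistical}}.
\end{equation*}
Lemma~\ref{lemma_partition_function_power_method} with $t=\log(1/\nu')/\Delta$ bounds the first term by $\nu'$, and the Chebyshev polynomial from Lemma~\ref{lem:approx_mon} makes the second term at most $\nu=2\nu'$. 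With the choice $\nu' = \epsrel/(12\,e^{\beta E_{\vv{y}}}Z_\beta)$ in the hypothesis, the combined bias is $\nu + \nu' = 3\nu'= \epsrel/(4\,e^{\beta E_{\vv{y}}}Z_\beta)$.

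Next, I would invoke Theorem~\ref{main_lemma2} for Problem 1 with the target function $f(A)=A^t$, the polynomial approximation $\Tilde{f}$ and the tolerated additive error $\epsilon = \epsrel/(2\,e^{\beta E_{\vv{y}}}Z_\beta)$. One checks the theorem's hypothesis $\nu \le \epsilon/2$, which holds since $\nu = \epsilon/3$. The theorem then guarantees that the statistical error in estimating $\mel{\vv{y}}{\Tilde{f}(A)}{\vv{y}}$ is at most $\epsilon/2 = \epsrel/(4\,e^{\beta E_{\vv{y}}}Z_\beta)$ with confidence $1-\delta$. Adding the bias and statistical contributions yields $\varepsilon_{\text{tot}} \le \epsilon/2 + \epsilon/2 = \epsrel/(2\,e^{\beta E_{\vv{y}}}Z_\beta)$, which is precisely the hypothesis of Lemma~\ref{lem:estimate_z}. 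Applying that lemma concludes the correctness claim $|\Tilde Z_\beta - Z_\beta| < \epsrel Z_\beta$. (A minor point: since $A$ is Hermitian and $\ket{\phi}=\ket{\psi}$, the target amplitude is real, so only one of the two Hadamard variants is needed, which accounts for the $\log(2/\delta)$ instead of $\log(4/\delta)$ in $S^{(1)}$.)

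For the maximal query depth, I would simply substitute into Lemma~\ref{lem:approx_mon}: with $t = \log(1/\nu')/\Delta = \log(2/\nu)/\Delta$, one gets
\begin{equation*}
k = \sqrt{2t\log(2/\nu)} = \sqrt{\tfrac{2}{\Delta}}\,\log(2/\nu) = \sqrt{\tfrac{2}{\Delta}}\Bigl(\beta E_{\vv{y}} + \log\tfrac{12\,Z_\beta}{\epsrel}\Bigr),
\end{equation*}
using $2/\nu = 12\,e^{\beta E_{\vv{y}}}Z_\beta/\epsrel$. For the sample complexity, I would plug $\norm{\vv{a}}_1 = 1-\nu \le 1$ and $\epsilon = \epsrel/(2\,e^{\beta E_{\vv{y}}}Z_\beta)$ into Eq.~\eqref{eq:SP1} (adapted to one Hadamard variant), giving the stated $S^{(1)} = 64\,e^{2\beta E_{\vv{y}}}Z_\beta^2\,\log(2/\delta)/\epsrel^2$.

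The only delicate part is the careful allocation of the error budgets: one must verify simultaneously that the Chebyshev truncation order $k$ obtained from $\nu = 2\nu'$ is compatible with the required $\nu^{(1)} \le \epsilon/2$ of Theorem~\ref{main_lemma2}, and that the resulting sum of three errors (two biases plus statistical) fits the hypothesis of Lemma~\ref{lem:estimate_z}. Everything else is a direct substitution, so I do not anticipate any additional technical obstacle.
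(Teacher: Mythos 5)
Your proposal is correct and follows essentially the same route as the paper: a triangle-inequality split into power-method bias ($\nu'$, via Lemma~\ref{lemma_partition_function_power_method}), Chebyshev-truncation bias ($\nu=2\nu'$), and statistical error, with the same budget allocation summing to $\epsrel/(2\,e^{\beta E_{\vv{y}}}Z_\beta)$ so that Lemma~\ref{lem:estimate_z} applies, and the same substitutions for $k$ and $S^{(1)}$. Your explicit check of the $\nu\le\epsilon/2$ hypothesis of Theorem~\ref{main_lemma2} and the remark explaining the $\log(2/\delta)$ (one Hadamard variant suffices since the amplitude is real) are correct details the paper leaves implicit.
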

\begin{proof}
    The estimate $\tilde{z}^{(1)}$ has three sources of error relatively to $\mel{\vv{y}}{\Pi_{\bpi}}{\vv{y}}$. First, according to Lemma \ref{lemma_partition_function_power_method}, the fact that $t$ is finite causes an error $\nu'$. Second, only an approximation of $A^t$ is implemented, causing an error $\nu$. Finally, the finite number of samples incurs a statistical error $\epsilon_{st}$. Therefore, by triangular inequality, we have $|\tilde{z}^{(1)}-\mel{\vv{y}}{\Pi_{\bpi}}{\vv{y}}|\leq(\nu'+\nu+\epsilon_{st}):=\epsilon$. According to Lemma \ref{lem:estimate_z}, $\Tilde{Z}_\beta=\frac{e^{-\beta E_{\vv{y}}}}{\tilde{z}^{(1)}}$ is an estimate of $Z_\beta$ with relative error $\epsrel$ if the total error is chosen to be $\epsilon= \epsrel/\big(2\,e^{\beta E_{\vv{y}}}Z_\beta\big)$. Since each source of error is controlled individually, for convenience, we take $\nu'=\nu/2=\frac{\epsrel}{12\,e^{\beta E_{\vv{y}}}Z_\beta}$ and $\epsilon_{st}=\frac{\epsrel}{4\,e^{\beta E_{\vv{y}}}Z_\beta}$. The sample complexity is then given by Eq.\ \eqref{eq:SP1} with $\|\vv{a}\|_1=1-\nu$ and error $\epsilon$, while the maximal query depth is given by substituting the expression for $t$ in the expression for the truncation order $k=\sqrt{2\,t\,\log(2/\nu)}$.
\end{proof}

\subsection{Exponential function} \label{sec:explicit_examples_qite}

The following Lemma gives the explicit coefficients of the approximation used here and its maximal degree. It is obtained from Ref.\ \cite[Lemma 4.2]{sachdeva_approximation_2013}.

\begin{lemma}[Chebyshev approximation]\label{lem:approx_exp}
Let $f:\mathbb{R}\rightarrow\mathbb{R}$ be such that $f(x)=e^{-\beta x/2}$. Then, for $\nu\leq \frac{e^{\beta/2}}{2}$, the  polynomial  $\Tilde{f}_\text{exp}:[-1,1]\rightarrow\mathbb{R}$ defined by 
\begin{equation}\label{eq:serie_exp}
\Tilde{f}_\text{exp}(x):=\sideset{}{'}\sum^{k}_{j=0}\left[\sum^{t}_{\substack{l=j\\ l-j\text{ even}}}\frac{(-\beta/2)^{l}}{l!}2^{1-l}\binom{l}{(l-j)/2}\right] \mathcal{T}_{j}(x),
\end{equation}
is a $\nu$-approximation of $f(x)$ on $[-1,1]$ if
$t\geq\left\lceil\max\left\{\frac{\beta e^{2}}{2},\log(\frac{2\,e^{\beta/2}}{\nu})\right\}\right\rceil$, and $k\geq\left\lceil\sqrt{2\,t\log(\frac{4\,e^{\beta/2}}{\nu})}\right\rceil.$
\end{lemma}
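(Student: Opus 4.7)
The plan is to split the total approximation error into two pieces via an intermediate Taylor polynomial $f_t(x):=\sum_{l=0}^{t}\frac{(-\beta/2)^{l}}{l!}x^{l}$ of degree $t$, and to bound each piece by $\nu/2$. Writing $\Tilde{f}_\text{exp}$ as the Chebyshev truncation to degree $k$ of $f_t$ (this is exactly what Eq.~\eqref{eq:serie_exp} does, by substituting the monomial-to-Chebyshev identity $x^{l}=2^{1-l}\sideset{}{'}\sum_{\substack{j\leq l\\ l-j\text{ even}}}\binom{l}{(l-j)/2}\mathcal{T}_{j}(x)$ into $f_t$ and then dropping every Chebyshev contribution of order $>k$), the triangle inequality gives
\begin{equation}
\max_{x\in[-1,1]}\bigl|f(x)-\Tilde{f}_\text{exp}(x)\bigr|\;\leq\;\max_{x\in[-1,1]}\bigl|f(x)-f_t(x)\bigr|\;+\;\max_{x\in[-1,1]}\bigl|f_t(x)-\Tilde{f}_\text{exp}(x)\bigr|.
\end{equation}

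For the first (Taylor) piece, I would bound $|f(x)-f_t(x)|\leq\sum_{l>t}(\beta/2)^{l}/l!$ on $[-1,1]$. Using $l!\geq (l/e)^{l}$ (Stirling lower bound), the condition $t\geq e^{2}\beta/2$ makes the ratio of successive terms at most $1/e$, so the tail is at most a constant multiple of $(\beta/(2t))^{t+1}\,e^{t+1}$, which decays faster than $2^{-t}$. Imposing in addition $t\geq\log(2e^{\beta/2}/\nu)$ then forces this tail to be $\leq \nu/2$ (with the $e^{\beta/2}$ factor absorbing any slack from the $e^{-\beta x/2}$ factor on the negative side of the interval). The main care needed here is picking the right geometric-ratio bound so that the two conditions on $t$ in the lemma come out exactly, without hidden constants.

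For the second (Chebyshev) piece, I would invoke Lemma \ref{lem:approx_mon} monomial-by-monomial. Truncating the Chebyshev expansion of $x^{l}$ at degree $k$ introduces, by that lemma, an error at most $2e^{-k^{2}/(2l)}\leq 2e^{-k^{2}/(2t)}$ for every $l\leq t$ (since the approximation quality is monotone in $l$). Multiplying by $|(-\beta/2)^{l}/l!|$ and summing over $l\leq t$ yields
\begin{equation}
\max_{x\in[-1,1]}\bigl|f_t(x)-\Tilde{f}_\text{exp}(x)\bigr|\;\leq\;2e^{-k^{2}/(2t)}\sum_{l=0}^{t}\frac{(\beta/2)^{l}}{l!}\;\leq\;2\,e^{\beta/2}\,e^{-k^{2}/(2t)}.
\end{equation}
Requiring the right-hand side to be $\leq \nu/2$ yields $k^{2}\geq 2t\log(4e^{\beta/2}/\nu)$, which is exactly the stated condition on $k$.

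The main obstacle I anticipate is making the Taylor tail bound tight enough that the critical crossover $t\geq e^{2}\beta/2$ drops out cleanly, as opposed to a looser condition like $t\geq C\beta$ with an unspecified constant; this is essentially a bookkeeping exercise with Stirling and geometric series, but it determines whether Eq.~\eqref{eq:serie_exp} matches Lemma 4.2 of Ref.~\cite{sachdeva_approximation_2013} coefficient-for-coefficient. The rest of the argument is assembly: combine the two $\nu/2$ bounds, verify that the conditions on $t$ and $k$ stated in the lemma are precisely those obtained above, and check that the closed-form Chebyshev-coefficient formula inside the brackets of Eq.~\eqref{eq:serie_exp} is the result of substituting the monomial expansion into $f_t$ and regrouping by $\mathcal{T}_j$.
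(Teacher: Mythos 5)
Your proof is correct and follows essentially the same route as the paper, whose own proof simply cites Lemma 3.4 of Ref.~\cite{sachdeva_approximation_2013} (Taylor truncation of the exponential followed by monomial-wise Chebyshev truncation of the Taylor polynomial, rescaled by $e^{\beta/2}$); your two $\nu/2$ bounds recover exactly the stated conditions on $t$ and $k$. The only difference is that you carry out in full the details the paper delegates to the reference, which makes the argument self-contained.
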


\begin{proof}
Lemma 3.4 in Ref.\ \cite{sachdeva_approximation_2013} builds an approximation for the function $e^{\frac{-\beta-\beta x}{2}}$ with error $\nu'\leq 1/2$ in the interval $[-1,1]$. From this construction, it is straightforward to obtain the approximation for $f(x)=e^{-\beta x/2}$ with error $\nu=e^{\beta/2}\,\nu'\leq \frac{e^{\beta/2}}{2}$. The explicit expression for the coefficients is obtained from the truncated Taylor series of $e^{-\beta x}$ combined with the truncated Chebyshev representation of monomials $x^l$. 
\end{proof}

\begin{lemma}[$l_1$-norm of coefficients]\label{lem:1norm_exp}
Let $\vv{a}_\text{exp}$ be the vector of coefficients of the Chebyshev polynomial $\Tilde{f}_\text{exp}$. Then the one-norm satisfies
\begin{equation}
e^{\beta/2}-\nu\leq\norm{\vv{a}_\text{exp}}_{1}\leq e^{\beta/2}+\nu,
\end{equation}
\end{lemma}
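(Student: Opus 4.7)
The plan is to apply the auxiliary Lemma~\ref{lemma_1norm}, which identifies the $\ell_1$-norm of a Chebyshev coefficient vector with the value of the polynomial at $\pm 1$ whenever the coefficients have a definite sign pattern. So the first task is to establish that the coefficients $a_j$ of $\tilde{f}_\text{exp}$ alternate in sign.

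Inspecting the explicit expression in Eq.~\eqref{eq:serie_exp}, the coefficient of $\mathcal{T}_j(x)$ is
\begin{equation*}
a_j = \sum_{\substack{l=j\\ l-j\text{ even}}}^{t}\frac{(-\beta/2)^{l}}{l!}\,2^{1-l}\binom{l}{(l-j)/2},
\end{equation*}
(up to the standard factor of $\tfrac12$ for $j=0$ implied by the primed sum). The index $l$ ranges over $j, j+2, j+4,\ldots$, so all contributing $l$ have the same parity as $j$. Hence every factor $(-\beta/2)^l$ shares the common sign $(-1)^j$ (for $\beta>0$), while the combinatorial factors $2^{1-l}\binom{l}{(l-j)/2}/l!$ are strictly positive. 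Therefore $\mathrm{sgn}(a_j)=(-1)^j$ and the coefficients alternate in sign, triggering the second case of Lemma~\ref{lemma_1norm}:
\begin{equation*}
\|\vv{a}_\text{exp}\|_1 = \bigl|\tilde{f}_\text{exp}(-1)\bigr|.
\end{equation*}

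The second task is to evaluate $\tilde{f}_\text{exp}(-1)$. Since by Lemma~\ref{lem:approx_exp} the polynomial $\tilde{f}_\text{exp}$ is a $\nu$-approximation of $f(x)=e^{-\beta x/2}$ on $[-1,1]$, we have in particular $|\tilde{f}_\text{exp}(-1)-e^{\beta/2}|\le\nu$, i.e.
\begin{equation*}
e^{\beta/2}-\nu \le \tilde{f}_\text{exp}(-1) \le e^{\beta/2}+\nu.
\end{equation*}
Under the standing hypothesis $\nu\le e^{\beta/2}/2$ of Lemma~\ref{lem:approx_exp}, the left-hand side is positive, so $|\tilde{f}_\text{exp}(-1)|=\tilde{f}_\text{exp}(-1)$ and the claim follows directly. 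The main (and really only) obstacle is verifying the alternating-sign pattern of the coefficients, which I expect to be a short parity argument as sketched above; the rest is an immediate application of the approximation bound and Lemma~\ref{lemma_1norm}.
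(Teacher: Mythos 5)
Your proof is correct and follows essentially the same route as the paper's: establish that the coefficients alternate in sign (you do this by a direct parity observation on the exponent $l\equiv j \pmod 2$, the paper by relabeling the sum to factor out $(-\beta/4)^j$ — the same fact), then invoke Lemma~\ref{lemma_1norm} and the $\nu$-approximation at the endpoint. Your version is in fact slightly cleaner, since you correctly evaluate at $x=-1$ (where $f(-1)=e^{\beta/2}$) and justify dropping the absolute value via $\nu\le e^{\beta/2}/2$, whereas the paper's write-up contains a small typo evaluating at $x=1$.
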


\begin{proof}
 First, one can see that, by relabeling the summation index, the coefficients of $\Tilde{f}_{\text{exp}}$ in Eq.\ \eqref{eq:serie_exp} can be rewritten  as
 \begin{equation}\label{eq:exp_coef}
     a_j= 2^{1-\delta_{j,0}}\left(\frac{-\beta}{4}\right)^j \sum_{l=0}^{\left\lfloor\frac{t-j}{2}\right \rfloor} \frac{(\beta/16)^l}{l!(j+l)!}
 \end{equation}
  This form makes it more clear that the coefficients alternate sign. Hence,  Lem.\ref{lemma_1norm} implies the inequality because by definition $|\Tilde{f}_{\text{exp}}(1)-f(1)|=|\Tilde{f}_{\text{exp}}(1)-e^{\beta/2}|\leq\nu$.
\end{proof}

\begin{lemma}[Average query depth]\label{lem:query_exp}
 The average query complexity of estimating $\Tilde{z}^{(2)}$ with $\Tilde{f}_{\text{exp}}$ satisfies $2\,\mathbb{E}[j]\leq \sqrt{2\beta}$. 
\end{lemma}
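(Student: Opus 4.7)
The plan is to use Cauchy--Schwarz to reduce the first-moment bound to a second-moment bound, and then to evaluate the second moment by a direct double-sum manipulation of the explicit coefficients. Applied to the probability distribution $p(j)=|a_j|/\norm{\vv{a}}_1$, Cauchy--Schwarz yields
\begin{equation*}
\mathbb{E}[j]=\sum_j j\,p(j)\leq \sqrt{\sum_j j^2\,p(j)}=\sqrt{\mathbb{E}[j^2]},
\end{equation*}
so it suffices to establish $\mathbb{E}[j^2]\leq \beta/2$, which immediately implies $2\,\mathbb{E}[j]\leq\sqrt{2\beta}$.

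To estimate $\mathbb{E}[j^2]=\norm{\vv{a}}_1^{-1}\sum_j j^2|a_j|$, I would expand each $|a_j|$ using the factorized form of Eq.\ \eqref{eq:exp_coef}. Since the Taylor-like series inside has strictly positive terms, the absolute value is transparent; swapping the order of summation (first over the Taylor index $l$, then over $j$) yields an outer sum weighted by $(\beta/2)^l/l!$ times an inner combinatorial sum of the form $\sum_{l-j\,\mathrm{even}} j^2 \binom{l}{(l-j)/2}\,2^{-\delta_{j,0}}$. The key observation is that, after the change of variables $m=(l-j)/2$ and use of the symmetry $\binom{l}{m}=\binom{l}{l-m}$, this inner sum coincides with the (appropriately rescaled) second central moment of a $\mathrm{Bin}(l,1/2)$ random variable, and evaluates to $2^{l-1}\,l$ for both parities of $l$. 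Substituting back, the outer sum collapses cleanly to
\begin{equation*}
\sum_j j^2|a_j|\leq \sum_{l=1}^t l\,\frac{(\beta/2)^l}{l!}=\frac{\beta}{2}\sum_{l=0}^{t-1}\frac{(\beta/2)^l}{l!}\leq \frac{\beta\,e^{\beta/2}}{2}.
\end{equation*}

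Combining this with the lower bound $\norm{\vv{a}}_1\geq e^{\beta/2}-\nu$ from Lemma \ref{lem:1norm_exp} yields $\mathbb{E}[j^2]\leq \beta/2$, up to a sub-leading multiplicative factor $(1-\nu\,e^{-\beta/2})^{-1}$ that is absorbed in the stated bound under the natural restriction $\nu\leq e^{\beta/2}/2$ of Lemma \ref{lem:approx_exp}. Plugging this into Cauchy--Schwarz completes the argument. The main technical obstacle is the identification of the inner sum with a binomial second moment, which requires separate bookkeeping for even and odd $l$ but is otherwise a routine symmetry calculation; everything else is a mechanical series manipulation.
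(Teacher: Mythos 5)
Your proof is correct in its essentials but takes a genuinely different route from the paper's. The paper bounds the first moment $\sum_j j\,|a_j|$ directly: it bounds each $|a_j|$ by $2\,I_j(\beta/2)$ (the modified Bessel function), telescopes the sum via the recurrence $j\,I_j(z)=\tfrac{z}{2}\big(I_{j-1}(z)-I_{j+1}(z)\big)$, and finishes with the asymptotic estimate $I_0(\beta/2)\lesssim e^{\beta/2}/\sqrt{\pi\beta}$. You instead pass to the second moment via Cauchy--Schwarz and evaluate $\sum_j j^2|a_j|$ by swapping the order of summation and recognizing the inner combinatorial sum as the second central moment of a $\mathrm{Bin}(l,1/2)$ variable; I checked the identity (the sum over half the range equals $2^{l-1}l$ for both parities of $l$), and it yields the clean closed form $\sum_j j^2|a_j|\le \tfrac{\beta}{2}e^{\beta/2}$ with no special-function asymptotics --- arguably more elementary and self-contained than the paper's argument. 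The one place where you are too quick is the normalization: dividing by $\norm{\vv{a}}_1\ge e^{\beta/2}-\nu$ gives $\mathbb{E}[j^2]\le\tfrac{\beta}{2}\,(1-\nu e^{-\beta/2})^{-1}$, and this factor is \emph{not} absorbed at the extreme value $\nu=e^{\beta/2}/2$ permitted by Lemma \ref{lem:approx_exp}: there it equals $2$ and degrades your final bound to $2\sqrt{\beta}$ rather than $\sqrt{2\beta}$. This is only a constant-factor issue, and the paper's own proof incurs exactly the same worst-case factor of $2$ from $\norm{\vv{a}}_1$ (it explicitly uses ``the worst case of $\nu$'' to write $2/\norm{\vv{a}}_1\le 4e^{-\beta/2}$), so neither argument is airtight at that extreme; for any $\nu\ll e^{\beta/2}$, which is the regime of interest, your computation delivers the stated $\sqrt{2\beta}$ exactly.
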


\begin{proof}
    Notice that 
    \begin{align}
        2\,\mathbb{E}[j]&=\frac{2}{\norm{\vv{a}_\text{exp}}_{1}} \sum_{j=1}^{k}j\,|a_j| \notag\\
        &\leq 4 \,e^{-\beta/2} \sum^{k}_{j=1}\left[2\left(\frac{\beta}{4}\right)^j \sum_{l=0}^{\left\lfloor\frac{t-j}{2}\right \rfloor} \frac{(\beta/16)^l}{l!(j+l)!} \right] j \notag\\
        &\leq 4\, e^{-\beta/2} \sum^{k}_{j=1}\left[2\left(\frac{\beta}{4}\right)^j \sum_{l=0}^{\infty} \frac{(\beta/16)^l}{l!(j+l)!} \right] j \notag\\
        & = 4\, e^{-\beta/2} \sum^{k}_{j=1}\left[2\,I_j(\beta/2) \right] j,
    \end{align}
    where the first inequality comes from applying Lemma\ \ref{lem:1norm_exp} with the worst case of $\nu$ and Eq.\ \eqref{eq:exp_coef}. The second inequality is due to the addition of positive terms to the inner sum. In the last equality, we identified the coefficients of the outer sum with the power series of the modified Bessel functions $I_j(\beta/2)$. This also has the interesting feature of connecting the Chebyshev series in \cite{sachdeva_approximation_2013} with the widely known Jacobi-Anger expansion \cite{abramowitz1966}. Now, we  can use the identity $j\, I_j(\beta/2)=\frac{\beta}{4}(I_{j-1}-I_{j+1})$ to obtain:
    \begin{align}
        2\,\mathbb{E}[j]&\leq 2\,e^{-\beta/2} \beta \left[I_0(\beta/2)-I_{k+1}(\beta/2)\right] \notag\\
        &\leq  2\,e^{-\beta/2} \beta \,I_0(\beta/2)\leq \sqrt{2\,\beta}.
    \end{align}
\end{proof}

\subsection{Inverse function} \label{sec:explicit_examples_inverse}

The following Lemma gives the explicit coefficients of
the approximation used here and its maximal degree. It
is obtained from Ref. \cite{childs_quantum_2017}.

\begin{lemma}\label{lem:approx_inv}
Let $f:\mathbb{R}\rightarrow\mathbb{R}$ be such that $f(x)=1/x$. Then, the polynomial $\Tilde{f}_{inv}:[-1,1]\rightarrow\mathbb{R}$ given by 
\begin{equation}
\Tilde{f}_\text{inv}(x)=4\sum_{j=0}^{k}(-1)^{j}\left[\frac{{\sum}^{b}_{i=j+1}\binom{2b}{b+i}}{2^{2b}}\right]\mathcal{T}_{2j+1}(x),
\end{equation}
is $\nu$-approximation of $1/x$ on $[-1,-1/\kappa]\cup[1/\kappa,1]$ for any 
\begin{equation}
    k\geq\sqrt{b\log(\frac{4b}{\nu})}\quad\text{and}\quad b\geq\kappa^{2}\log(\frac{\kappa}{\nu}) \,.
\end{equation}
\end{lemma}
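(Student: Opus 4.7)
The plan is to interpose a degree-$(2b-1)$ auxiliary polynomial
\begin{equation}
p(x):=\frac{1-(1-x^{2})^{b}}{x},
\end{equation}
which is odd (hence a linear combination of $\mathcal{T}_{2j+1}$'s only), and to bound separately the two approximation gaps $1/x\to p(x)$ and $p(x)\to\Tilde{f}_\text{inv}(x)$. For the first gap, on $\mathcal{I}_\kappa := [-1,-1/\kappa]\cup[1/\kappa,1]$ one has $|1-x^{2}|\le 1-1/\kappa^{2}$, so
\begin{equation}
\big|1/x-p(x)\big|=\frac{(1-x^{2})^{b}}{|x|}\le\kappa\,e^{-b/\kappa^{2}},
\end{equation}
which is at most $\nu/2$ whenever $b\gtrsim\kappa^{2}\log(\kappa/\nu)$, matching the lemma's hypothesis up to absolute constants.

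For the second gap, I would first establish the exact finite Chebyshev expansion
\begin{equation}\label{eq:exactpexp_plan}
p(x)=4\sum_{j=0}^{b-1}(-1)^{j}\,c_{j}\,\mathcal{T}_{2j+1}(x),\qquad c_{j}:=2^{-2b}\sum_{i=j+1}^{b}\binom{2b}{b+i}.
\end{equation}
To derive it, set $x=\cos\theta$ and use the standard power-reduction identity $\sin^{2b}\theta=\tfrac{1}{4^{b}}\binom{2b}{b}+\tfrac{2}{4^{b}}\sum_{i=1}^{b}(-1)^{i}\binom{2b}{b+i}\cos(2i\theta)$ (immediate from expanding $(e^{i\theta}-e^{-i\theta})^{2b}$), together with the trivial identity $\sum_{i=-b}^{b}\binom{2b}{b+i}=4^{b}$, to rewrite $1-\sin^{2b}\theta$ as $\tfrac{2}{4^{b}}\sum_{i=1}^{b}\binom{2b}{b+i}\bigl[1-(-1)^{i}\cos(2i\theta)\bigr]$. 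A short coefficient-matching computation (multiply both sides by $\cos\theta$ and use $2\cos(a)\cos(b)=\cos(a+b)+\cos(a-b)$) yields
\begin{equation}
\frac{1-(-1)^{i}\cos(2i\theta)}{\cos\theta}=2\sum_{j=0}^{i-1}(-1)^{j}\cos\bigl((2j+1)\theta\bigr),
\end{equation}
and swapping the order of the $(i,j)$ summation produces \eqref{eq:exactpexp_plan}. The stated $\Tilde{f}_\text{inv}$ is then exactly the truncation of \eqref{eq:exactpexp_plan} at $j=k$.

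It remains to bound the truncation tail. Using $|\mathcal{T}_{2j+1}|\le 1$, one gets $|p(x)-\Tilde{f}_\text{inv}(x)|\le 4\sum_{j=k+1}^{b-1}c_{j}$. Each $c_{j}$ is the upper tail probability $\Pr[X\ge b+j+1]$ for $X\sim\mathrm{Bin}(2b,1/2)$, so Hoeffding's inequality gives $c_{j}\le e^{-(j+1)^{2}/b}$, and a Gaussian-type integral estimate bounds $\sum_{j\ge k+1}c_{j}$ by $\mathcal{O}\bigl(b/(k+1)\bigr)\,e^{-(k+1)^{2}/b}$. Demanding this to be at most $\nu/2$ yields the lemma's condition $k\ge\sqrt{b\log(4b/\nu)}$ up to logarithmic slack. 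Combining the two gaps via the triangle inequality gives the claimed $\nu$-approximation on $\mathcal{I}_\kappa$.

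The main obstacle I expect is the algebraic bookkeeping in \eqref{eq:exactpexp_plan}: parities, index shifts, and the summation swap must all be executed carefully to land on exactly the coefficients $c_j$ in the lemma. Once that identity is in hand, the remaining tail estimates are standard binomial/Gaussian-type bounds and pose no conceptual difficulty.
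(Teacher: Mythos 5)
Your proposal is correct and is essentially the paper's own route: the paper gives no in-text proof of this lemma but defers to Childs--Kothari--Somma \cite{childs_quantum_2017}, whose Lemma 14 argument is exactly what you reconstruct (the auxiliary odd polynomial $\frac{1-(1-x^2)^b}{x}$, its exact Chebyshev expansion with coefficients given by binomial upper tails, and a Chernoff/Hoeffding bound on the truncation tail). The only caveat is the factor-of-two bookkeeping you already flag: with the parameters exactly as stated each of the two gaps is bounded by $\nu$ rather than $\nu/2$, so a strict $\nu$-approximation requires replacing $\nu$ by $\nu/2$ inside the logarithms, a constant-level sloppiness inherited from the cited statement.
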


\begin{lemma}\label{lem:1norm_inv}
Let $\vv{a}_\text{inv}$ be the vector of coefficients of the polynomial $\Tilde{f}_{inv}$. Then $\frac{1}{2} \sqrt{b} - \frac{2bk}{2^{2b}} \binom{2b}{k}\le \norm{\vv{a}_\text{inv}}_{1} \le \frac{2}{\sqrt\pi}\sqrt{b}$. 
In particular, $\norm{\vv{a}_\text{inv}}_{1} \in\Theta\big(\sqrt{b}\big)$. 
\end{lemma}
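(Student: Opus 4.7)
The plan is to reinterpret the coefficients probabilistically so that $\norm{\vv{a}_\text{inv}}_1$ becomes the expectation of an explicit function of a simple random walk, and then bound that expectation from above and below. Let $S = X_1 + \cdots + X_{2b}$ with $X_i$ i.i.d.\ Rademacher, so that $P(S = 2i) = \binom{2b}{b+i}/2^{2b}$ for $|i| \le b$. The explicit formula in Lemma~\ref{lem:approx_inv} gives $|a_j| = 4\,P(S \ge 2(j+1))$, and interchanging the order of summation in $\sum_{j=0}^{k}\sum_{i=j+1}^{b}$ yields the compact identity
\begin{equation*}
\norm{\vv{a}_\text{inv}}_1
= \frac{4}{2^{2b}}\sum_{i=1}^{b}\min(i,k+1)\,\binom{2b}{b+i}
= 2\,\mathbb{E}\bigl[\min(|S|/2,\,k+1)\bigr].
\end{equation*}

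For the upper bound I would simply drop the cap via $\min(i,k+1) \le i$, giving $\norm{\vv{a}_\text{inv}}_1 \le \mathbb{E}[|S|] = 2b\,\binom{2b}{b}/2^{2b}$, and then invoke the sharp Stirling estimate $\binom{2b}{b} \le 2^{2b}/\sqrt{\pi b}$ to conclude $\norm{\vv{a}_\text{inv}}_1 \le 2\sqrt{b}/\sqrt{\pi}$.

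For the lower bound I would use the exact splitting $\min(i,k+1) = i - (i-k-1)^+$, which separates the norm into an ``untruncated'' piece and a truncation correction:
\begin{equation*}
\norm{\vv{a}_\text{inv}}_1
= \mathbb{E}[|S|] \;-\; \frac{4}{2^{2b}}\sum_{i=k+2}^{b}(i-k-1)\binom{2b}{b+i}.
\end{equation*}
The first term is handled with the complementary Stirling bound $\binom{2b}{b}/2^{2b} \ge 1/(2\sqrt{b})$ (easy by induction or direct Stirling), giving $\mathbb{E}[|S|] \ge \sqrt{b}$, comfortably above the target $\sqrt{b}/2$. The truncation correction I would reexpress via the binomial symmetry $\binom{2b}{b+i} = \binom{2b}{b-i}$ as a \emph{lower-tail} partial sum $\sum_{j=0}^{b-k-2}(b-k-1-j)\binom{2b}{j}$, then apply monotonicity of $\binom{2b}{j}$ below the mode together with a hockey-stick identity to produce a bound of the form $\mathcal{O}(bk)\cdot\binom{2b}{k}/2^{2b}$, matching the stated correction.

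The main technical obstacle is matching the precise form $(2bk/2^{2b})\binom{2b}{k}$ of the subtracted term: a naive monotonicity bound applied directly to the \emph{upper} tail would only yield the much larger factor $\binom{2b}{b+k+1}$, so the symmetry rewriting is essential and must be carefully combined with the linear weight $(i-k-1)$ to keep the $k$-dependence linear rather than quadratic. Once the two bounds are established, the conclusion $\norm{\vv{a}_\text{inv}}_1 \in \Theta(\sqrt{b})$ follows at once, since in the regime prescribed by Lemma~\ref{lem:approx_inv} the correction $(2bk/2^{2b})\binom{2b}{k}$ is exponentially small in $b$ and therefore negligible against $\sqrt{b}/2$.
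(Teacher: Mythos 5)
Your probabilistic reformulation $\norm{\vv{a}_\text{inv}}_1=\frac{4}{2^{2b}}\sum_{i=1}^{b}\min(i,k+1)\binom{2b}{b+i}=2\,\mathbb{E}[\min(|S|/2,k+1)]$ is correct and is exactly the paper's interchange of summation, and your upper bound (drop the cap, $\mathbb{E}|S|=\tfrac{2b}{2^{2b}}\binom{2b}{b}\le\tfrac{2}{\sqrt{\pi}}\sqrt{b}$) coincides with the paper's. The lower bound, however, has a genuine gap. Your truncation correction is $C=\frac{4}{2^{2b}}\sum_{i=k+2}^{b}(i-k-1)\binom{2b}{b+i}=2\,\mathbb{E}[(|S|/2-k-1)^{+}]$, and you claim it can be forced into the form $\mathcal{O}(bk)\binom{2b}{k}/2^{2b}$ via the symmetry rewrite $\sum_{j=0}^{b-k-2}(b-k-1-j)\binom{2b}{j}$ plus monotonicity. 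This cannot work: that sum is dominated by the terms with $j$ near $b-k-2$, where the binomials are close to central; already the single term $j=b-k-2$ contributes $\binom{2b}{b-k-2}\ge 2^{2b}\,e^{-\mathcal{O}(k^{2}/b)}/\mathcal{O}(\sqrt{b})$, which in the operative regime $k=\Theta\big(\sqrt{b\log(b/\nu)}\big)$ is only polynomially small after dividing by $2^{2b}$, whereas $2bk\binom{2b}{k}/2^{2b}$ is $e^{-\Omega(b)}$. So $C$ genuinely exceeds the claimed bound by an exponential factor, and no monotonicity or hockey-stick manipulation of that lower-tail sum will produce the index $k$ in a binomial coefficient. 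The paper's $\binom{2b}{k}$ arises from a different set of discarded terms: it first drops $i\in[b-k,b]$ from the positive sum, pairs each surviving $i\,\binom{2b}{b+i}$ with the shifted $i\,\binom{2b}{b+i+k+1}$ to retain a constant fraction of it, and the only loss is $\frac{2}{2^{2b}}\sum_{i=b-k}^{b}i\binom{2b}{b+i}$, whose binomials sit in the extreme tail and are each at most $\binom{2b}{2b-k}=\binom{2b}{k}$.

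Your route can be repaired, but only by using the degree condition of Lemma~\ref{lem:approx_inv} rather than combinatorics: since $(|S|/2-k-1)^{+}\le b\,\mathds{1}[|S|\ge 2(k+2)]$, Hoeffding gives $C\le 4b\,e^{-(k+2)^{2}/b}\le \nu$ when $k\ge\sqrt{b\log(4b/\nu)}$, whence $\norm{\vv{a}_\text{inv}}_1\ge\mathbb{E}|S|-C\ge\sqrt{b}-\nu\ge\frac{1}{2}\sqrt{b}$ and the $\Theta(\sqrt{b})$ conclusion follows. That is arguably a cleaner argument than the paper's pairing step (whose intermediate bound $\prod_{\ell=0}^{k}\frac{b-i+1-\ell}{b+i+\ell}\le(\frac{b}{2b-1})^{k+1}$ is itself suspect for small $i$), but it proves the displayed inequality only under the degree condition and not via the stated correction term; as written, your step bounding $C$ is false.
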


\begin{proof}
First notice that
\begin{align}
    \norm{\vv{a}_\text{inv}}_1 
    &= \sum_{j=0}^k \left[ \frac{4}{2^{2b}}\sum_{i=j+1}^b \binom{2b}{b+i} \right] \notag\\
    &= \frac{4}{2^{2b}} \left[ \sum_{i=1}^{k+1} \sum_{j=0}^{i-1} \binom{2b}{b+i} + \sum_{i=k+2}^{b} \sum_{j=0}^k \binom{2b}{b+i} \right] \notag\\
    &= \frac{4}{2^{2b}} \left[ \sum_{i=1}^{k+1} i\,\binom{2b}{b+i} + \sum_{i=k+2}^{b} (k+1) \binom{2b}{b+i} \right] \notag\\      
    &= \frac{4}{2^{2b}} \left[ \sum_{i=1}^{b} i\,\binom{2b}{b+i} - \sum_{i=1}^{b-k-1} i\,\binom{2b}{b+i+k+1} \right] \,.
\end{align} 
One immediately gets the upper bound
\begin{align}\label{eq:norm_a_inv_upper}
    \norm{\vv{a}_\text{inv}}_1 
    &\le \frac{4}{2^{2b}} \sum_{i=1}^{b} i\, \binom{2b}{b+i} 
    =\frac{2b}{2^{2b}}\binom{2b}{b}
    \le \frac{2}{\sqrt{\pi}}\sqrt{b}\,,
\end{align} 
where we have used the inequality $\binom{2b}{b}\le \frac{2^{2b}}{\sqrt{\pi b}}$. 
Similarly, for the lower bound we get
\begin{align}
   \norm{\vv{a}_\text{inv}}_1 
    & \ge \frac{4}{2^{2b}}  \sum_{i=1}^{b-k-1} i\left[\binom{2b}{b+i} - \binom{2b}{b+i+k+1} \right] \notag\\
    &= \frac{4}{2^{2b}}  \sum_{i=1}^{b-k-1} i\,\binom{2b}{b+i}\left[1 - \prod_{\ell=0}^k\frac{b-i+1-\ell}{b+i+\ell} \right] \notag\\  
    &\ge \frac{4}{2^{2b}}  \sum_{i=1}^{b-k-1} i\,\binom{2b}{b+i}\left[1 - \left(\frac{b}{2b-1}\right)^{k+1} \right] \notag\\  
    &\ge \frac{2}{2^{2b}}  \sum_{i=1}^{b-k-1} i\,\binom{2b}{b+i} \notag\\
    &= \frac{b}{2^{2b}}\binom{2b}{b} - \frac{2}{2^{2b}}\sum_{i=b-k}^{b} i\,\binom{2b}{b+i} \notag\\
    &\ge \frac{1}{2} \sqrt{b} - \frac{2bk}{2^{2b}} \binom{2b}{k}\,,
\end{align}
where in the second line we have made repeated use of the identity $\binom{n}{m}=\frac{n-m+1}{m}\binom{n}{m-1}$ and the last inequality follows from $\binom{2b}{b}\ge \frac{2^{2b-1}}{\sqrt{b}}$. Since $\binom{2b}{k}\le \left(\frac{2eb}{k}\right)^k$, it follows that in the asymptotic limit $b\gg k\gg 1$ of interest here the second term is is exponentially small in $b$ and, therefore, $\norm{\vv{a}_\text{inv}}_1 \in\Omega\big(\sqrt{b}\big)$. Together with the upper bound \eq{norm_a_inv_upper}, this implies $\norm{\vv{a}_\text{inv}}_1 \in\Theta\big(\sqrt{b}\big)$. 
\end{proof}

\begin{lemma}
Let $\vv{a}_\text{inv}$ be the vector of coefficients of the Chebyshev polynomial $\Tilde{f}_\text{inv}$. Then, the average query complexity of estimating $\Tilde{z}^{(1)}$ satisfies $\mathbb{E}[j]\le b/\norm{\vv{a}_{inv}}_1$. In particular, $\mathbb{E}[j]\in\mathcal{O}\big(\sqrt{b}\big)$.
\end{lemma}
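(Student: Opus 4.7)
The plan is to write out the expectation explicitly, swap the order of summation, and then reduce the resulting bound to a binomial-variance identity.

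First, I would use that the Chebyshev polynomial in Lemma \ref{lem:approx_inv} is supported only on odd indices $m=2j+1$, so
\begin{equation*}
\mathbb{E}[j]\;=\;\frac{1}{\norm{\vv{a}_\text{inv}}_1}\sum_{j=0}^{k}(2j+1)\,|a_{2j+1}|
\;=\;\frac{4}{2^{2b}\,\norm{\vv{a}_\text{inv}}_1}\sum_{j=0}^{k}(2j+1)\sum_{i=j+1}^{b}\binom{2b}{b+i}\,,
\end{equation*}
with exactly the coefficient magnitudes used in Lemma \ref{lem:1norm_inv}. The next step is to swap the order of summation: for fixed $i$, the variable $j$ runs from $0$ to $\min(i-1,k)$, so that the inner sum evaluates either to $\sum_{j=0}^{i-1}(2j+1)=i^{2}$ (when $i\le k+1$) or to $(k+1)^{2}$ (when $i>k+1$). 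Using $(k+1)^2 \le i^2$ in the second range, I can upper-bound the whole expression uniformly by replacing the inner sum with $i^{2}$ for all $i=1,\dots,b$.

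The key step is then the identity
\begin{equation*}
\sum_{i=-b}^{b} i^{2}\binom{2b}{b+i} \;=\; 2^{2b}\,\mathrm{Var}(X) \;=\; b\cdot 2^{2b-1}\,,\qquad X\sim\mathrm{Bin}(2b,1/2),
\end{equation*}
which by symmetry gives $\sum_{i=1}^{b} i^{2}\binom{2b}{b+i} = b\,2^{2b-2}$. Plugging this back yields
\begin{equation*}
\sum_{j=0}^{k}(2j+1)\,|a_{2j+1}| \;\le\; \frac{4}{2^{2b}}\cdot b\cdot 2^{2b-2} \;=\; b\,,
\end{equation*}
and hence $\mathbb{E}[j]\le b/\norm{\vv{a}_\text{inv}}_1$, which is the first claim. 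The asymptotic bound $\mathbb{E}[j]\in\mathcal{O}(\sqrt{b})$ then follows immediately by combining this with the lower bound $\norm{\vv{a}_\text{inv}}_1\in\Omega(\sqrt{b})$ established in Lemma \ref{lem:1norm_inv}.

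The only nontrivial move is recognizing the binomial-variance identity after the sum-swap — everything else is bookkeeping. An alternative route, in case one prefers to avoid quoting the variance formula, is to use the identity $i\binom{2b}{b+i}=b\left[\binom{2b-1}{b+i-1}-\binom{2b-1}{b+i}\right]$ twice to telescope $\sum_i i^2\binom{2b}{b+i}$ down to a central binomial coefficient, giving the same $b\,2^{2b-2}$; this avoids any probabilistic reasoning and stays within the combinatorial style already used in Lemma \ref{lem:1norm_inv}.
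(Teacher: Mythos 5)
Your proposal is correct and follows essentially the same route as the paper's proof: write out $\mathbb{E}[j]$, swap the order of summation so the inner sum becomes $i^2$ (or $(k+1)^2$ in the tail), bound the tail by $i^2$, and evaluate $\sum_{i=1}^{b} i^2\binom{2b}{b+i} = b\,2^{2b-2}$ before invoking $\norm{\vv{a}_\text{inv}}_1\in\Omega(\sqrt{b})$ from Lemma \ref{lem:1norm_inv}. The only difference is cosmetic: you justify the final binomial identity explicitly (via the variance of $\mathrm{Bin}(2b,1/2)$ or a telescoping argument), whereas the paper states the resulting value without derivation.
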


\begin{proof}
The proof parallels that of Lemma \ref{lem:1norm_inv}. Namely, 
\begin{align*}
    \mathbb{E}[j] &= \frac{1}{\norm{\vv{a}_{inv}}_1}\sum_{j=0}^k (2j+1)\left[ \frac{4}{2^{2b}}\sum_{i=j+1}^b \binom{2b}{b+i} \right] \notag\\
    &= \frac{4}{2^{2b}\norm{\vv{a}_{inv}}_1} \left[ \sum_{i=1}^{k+1} i^2\,\binom{2b}{b+i} + \sum_{i=k+2}^{b} (k+1)^2 \binom{2b}{b+i} \right] \notag\\      
    &\le \frac{4}{2^{2b}\norm{\vv{a}_{inv}}_1} \sum_{i=1}^{b} i^2\, \binom{2b}{b+i} \notag\\
    &=\frac{b}{\norm{\vv{a}_{inv}}_1}\,.
\end{align*}      
Lemma \ref{lem:1norm_inv} then implies that $\mathbb{E}[j]\in\mathcal{O}\big(\sqrt{b}\big)$. 
\end{proof}

\subsection{Heaviside step function} \label{sec:explicit_examples_step}

\begin{lemma}\label{lemma_theta}
Let $f:\mathbb{R}\rightarrow\mathbb{R}$ be such that $f(x)=\theta(x)$, where $\theta(x)$ is the unit step function. Then, the function $\Tilde{f}_{s}:[-1,1]\cross[-1,1]\rightarrow\mathbb{R}$ given by the Chebyshev expansion
\begin{equation}
\Tilde{f}_{s}(y-x)=\frac{1}{2}+\sum^{k}_{j=0}a_{j}(y)\mathcal{T}_{2j+1}(x)+b_{j}(y)\sqrt{1-x^{2}}U_{2j}(x)
\end{equation}
where
\begin{equation}
\begin{split}
a_{j}(y)&=-\sqrt{\frac{2\sigma}{\pi}}e^{-\sigma}\frac{I_{j}(\sigma)+I_{j+1}(\sigma)}{2j+1}\sqrt{1-y^{2}}U_{2j}(y),\\
b_{j}(y)&=\sqrt{\frac{2\sigma}{\pi}}e^{-\sigma}\frac{I_{j}(\sigma)+I_{j+1}(\sigma)}{2j+1}\mathcal{T}_{2j+1}(y),
\end{split}
\end{equation}
with $I_{n}$ being the modified Bessel function of the first kind and $U_{2j}$ is a Chebyshev polynomial of the second-kind, is a $\nu$-approximation of $f(y-x)$ on all $(x,y)\in[-1,1]\cross[-1,1]$ such that $x\in[-1,-\xi+y]\cup[\xi+y,1]$ if 
\begin{equation}
k=\mathcal{O}\left(\tfrac{1}{\xi}\log\big(\tfrac{1}{\nu}\big)\right)\text{, and }\sigma=\mathcal{O}\left(\tfrac{1}{\xi^2}\log\big(\tfrac{1}{\nu}\big)\right).
\end{equation}
\end{lemma}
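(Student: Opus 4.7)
The plan is to pass to trigonometric variables, recognize the stated expansion as a truncated Fourier series of a mollified step function on the circle, and then unpack the mollifier/truncation error estimates.

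First, I would parameterize $x=\cos\phi$ and $y=\cos\psi$ with $\phi,\psi\in[0,\pi]$. Since $\cos$ is monotonically decreasing on this interval, $y-x>0$ iff $\phi-\psi>0$, so that $\theta(y-x)=\theta(\phi-\psi)$, with the right-hand side understood as a $2\pi$-periodic odd extension. Under this identification,
\begin{equation*}
\mathcal{T}_{2j+1}(x)=\cos((2j+1)\phi),\qquad \sqrt{1-x^{2}}\,\mathcal{U}_{2j}(x)=\sin((2j+1)\phi),
\end{equation*}
and analogously in $\psi$. Using the addition formula $\sin((2j+1)(\phi-\psi))=\sin((2j+1)\phi)\cos((2j+1)\psi)-\cos((2j+1)\phi)\sin((2j+1)\psi)$, the proposed $\tilde f_s$ reduces to a pure sine series in $\phi-\psi$,
\begin{equation*}
\tilde f_s=\tfrac{1}{2}+\sqrt{\tfrac{2\sigma}{\pi}}\,e^{-\sigma}\sum_{j=0}^{k}\frac{I_{j}(\sigma)+I_{j+1}(\sigma)}{2j+1}\,\sin\bigl((2j+1)(\phi-\psi)\bigr).
\end{equation*}
This is the key simplification: separating $x$ and $y$ is automatic, and the only content left to verify is that the truncated series above approximates $\theta(\phi-\psi)$ uniformly outside an $\xi$-neighborhood of the discontinuity.

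Next, I would identify the untruncated infinite series ($k\to\infty$) with the convolution of $\tfrac{1}{2}\mathrm{sign}(\phi-\psi)$ against a von-Mises-type (wrapped-Gaussian) mollifier whose Fourier coefficients are precisely the modified Bessel functions $I_n(\sigma)$. Concretely, from the Jacobi–Anger identity $e^{\sigma\cos\theta}=I_{0}(\sigma)+2\sum_{n\ge1}I_{n}(\sigma)\cos(n\theta)$ one verifies that the weights $\sqrt{2\sigma/\pi}\,e^{-\sigma}(I_{j}+I_{j+1})$ are exactly the ratio between the Fourier coefficients of the mollified $\mathrm{sign}$ and those of the plain $\mathrm{sign}$ (whose Fourier series is $\tfrac{4}{\pi}\sum_{n\ge0}\sin((2n+1)\theta)/(2n+1)$), after absorbing the half-angle factor $\cos(\theta/2)$ via the product-to-sum identity $\cos(j\theta)+\cos((j+1)\theta)=2\cos((j+\tfrac12)\theta)\cos(\theta/2)$. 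Hence, modulo the truncation at $k$, $\tilde f_s$ equals a Gaussian-smoothed step of width $\sim 1/\sqrt\sigma$.

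The remaining work splits into two error bounds. The smoothing error $|\tilde\theta_\sigma(\phi-\psi)-\theta(\phi-\psi)|$ away from the jump is controlled by the tail of the mollifier; standard estimates on von-Mises densities give a bound of order $e^{-c\sigma\xi^{2}}$ for $|y-x|\ge\xi$ (i.e., $|\phi-\psi|\gtrsim\xi$), which is $\le\nu/2$ for $\sigma=\Theta(\xi^{-2}\log(1/\nu))$. The truncation error is controlled by the Bessel tail: for $n\gtrsim\sigma$ one has the Poisson-type bound $e^{-\sigma}I_{n}(\sigma)\le (\sigma/2)^{n}/n!\le (e\sigma/(2n))^{n}$, which decays super-exponentially once $n\gtrsim e\sigma$. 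Summing $\sum_{j\ge k}(I_j+I_{j+1})/(2j+1)$ and requiring it $\le\nu/2\sqrt{2\sigma/\pi}\,e^{\sigma}$ yields $k=\Theta(\sqrt{\sigma\log(1/\nu)})=\mathcal{O}(\xi^{-1}\log(1/\nu))$ after substituting the chosen $\sigma$. Combining the two gives the claimed rates.

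The main obstacle will be the Bessel-function tail estimate in the transitional regime $k\sim\sqrt{\sigma}$, where neither the small-argument series nor the Debye asymptotics is sharp; I would use the uniform bound $e^{-\sigma}I_{n}(\sigma)\le e^{-(n-\sigma)^{2}/(2\sigma)}$ (valid in the relevant window) to interpolate and obtain a truncation error of the form $e^{-k^{2}/(2\sigma)}$, which together with the mollification bound closes the argument at the advertised $k$ and $\sigma$.
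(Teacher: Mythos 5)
Your proposal is, at its core, the same argument as the paper's: the paper separates the two variables via the identity $\theta(y-x)=\theta(\sin(\arcsin y-\arcsin x))$ together with trigonometric expansion of $\mathcal{T}_{2j+1}$, and since $\arccos x-\arccos y=\arcsin y-\arcsin x$ this is precisely your substitution $x=\cos\phi$, $y=\cos\psi$ followed by the sine addition formula; both routes collapse $\Tilde{f}_s$ to the same one-variable sine series in $u=\arcsin y-\arcsin x$. Where you genuinely diverge is in what you do next: the paper simply imports the single-variable $\nu$-approximation of the step function (and its requirements on $k$ and $\sigma$) from Ref.~\cite{wan_randomized_2022}, whereas you re-derive it by recognizing the weights $\sqrt{2\sigma/\pi}\,e^{-\sigma}(I_j(\sigma)+I_{j+1}(\sigma))/(2j+1)$ as the Fourier data of a wrapped-Gaussian-mollified square wave and splitting the error into a mollification term of order $e^{-c\sigma\xi^2}$ and a Bessel-tail truncation term of order $e^{-k^2/(2\sigma)}$. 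That buys a self-contained proof that makes the origin of both scalings transparent, at the cost of having to nail down the Bessel tail estimates in the transitional regime, which the citation already packages. Your reduction $|y-x|\ge\xi\Rightarrow|\phi-\psi|\ge\xi$ is also sound, since $|y-x|=2|\sin(\tfrac{\phi+\psi}{2})\sin(\tfrac{\phi-\psi}{2})|\le|\phi-\psi|$.

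One caveat, which your mollifier picture exposes more clearly than the paper's proof does: the periodized square wave in $\phi-\psi$ has a second jump at $|\phi-\psi|=\pi$, and your smoothing bound $e^{-c\sigma\xi^2}$ only controls the image of the jump at $0$. Near the corners $(x,y)\approx(\mp1,\pm1)$ the approximation genuinely fails; for instance at $(x,y)=(-1,1)$ one has $a_j(1)=0$ and $\sqrt{1-x^2}\,U_{2j}(-1)=0$, so $\Tilde{f}_s(2)=\tfrac12$ while $\theta(2)=1$, even though this point lies in the stated domain $x\in[-1,-\xi+y]\cup[\xi+y,1]$. The paper's proof has the identical blind spot (there, $\sin(\arcsin y-\arcsin x)\to0$ as the angular difference approaches $\pi$, leaving the validity region of the single-variable approximation), so this is a gap in the lemma as stated rather than in your argument specifically; to be airtight you would additionally exclude a neighborhood of those corners or restrict $y-x$ away from $\pm2$, which the intended GSEE application permits.
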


\begin{proof}
We take the Chebyshev approximation 
\begin{equation}
\Tilde{f}(x)=\frac{1}{2}+\sqrt{\frac{2\sigma}{\pi}}e^{-\sigma}\sum^{k}_{j=0}(-1)^{j}\frac{I_{j}(\sigma)+I_{j+1}(\sigma)}{2j+1}\mathcal{T}_{2j+1}(x)
\end{equation}
described in \cite{wan_randomized_2022} and extend it to the two-variable case using the identity
\begin{equation}
\theta(y-x)=\theta(\sin(\arcsin{y}-\arcsin{x})),
\end{equation}     
that follows from fact that $\sin$ and $\arcsin$ are monotonous functions, and expand
\begin{equation}
\mathcal{T}_{2j+1}(\sin(\arcsin{y}-\arcsin{x}))
\end{equation}
using trigonometric identities and the trigonometric definitions of $\mathcal{T}_{2j+1}$ and $U_{2j}$. The conditions on $k$ and $\sigma$ needed for obtaining the $\nu$-approximation end up being the same as those for the single-variable version.
\end{proof}

\begin{lemma}
[$l_1$-norm of coefficients]\label{lem:1norm_step}
Let $\vv{a}_{s}(y_*)$ and $\vv{b}_{s}(y_*)$ be the vector of coefficients of the Chebyshev expansion $\Tilde{f}_{s}(y_*-x)$ for $y_*=\frac{1}{\sqrt{2}}$. Then 
\begin{equation}
\begin{split}
\norm{\vv{a}_{s}(y_*)}_1=\norm{\vv{b}_{s}(y_*)}_1 \in \Theta(\log(k)).
\end{split}
\end{equation}
\end{lemma}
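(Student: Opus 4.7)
My plan is to reduce the computation of $\norm{\vv{a}_s(y_*)}_1$ and $\norm{\vv{b}_s(y_*)}_1$ to a single Bessel-function series and then bound that series. The starting point is to evaluate the $y$-dependent factors in the coefficients at $y_*=1/\sqrt{2}=\cos(\pi/4)$ using the trigonometric identities $\mathcal{T}_{2j+1}(\cos\theta)=\cos\big((2j+1)\theta\big)$ and $\sqrt{1-\cos^2\theta}\,\mathcal{U}_{2j}(\cos\theta)=\sin\big((2j+1)\theta\big)$. At $\theta=\pi/4$ the angle $(2j+1)\pi/4$ is always an odd multiple of $\pi/4$, so $|\cos((2j+1)\pi/4)|=|\sin((2j+1)\pi/4)|=1/\sqrt{2}$ for every $j\in\mathbb{N}$. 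Substituting this into the definitions of $a_j(y_*)$ and $b_j(y_*)$ immediately yields the identity
\begin{equation*}
|a_j(y_*)|=|b_j(y_*)|=\sqrt{\frac{\sigma}{\pi}}\,e^{-\sigma}\,\frac{I_j(\sigma)+I_{j+1}(\sigma)}{2j+1},
\end{equation*}
so that $\norm{\vv{a}_s(y_*)}_1=\norm{\vv{b}_s(y_*)}_1=\sqrt{\sigma/\pi}\,e^{-\sigma}\sum_{j=0}^k (I_j(\sigma)+I_{j+1}(\sigma))/(2j+1)$.

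For the upper bound, I will use the elementary fact that $I_j(\sigma)\le I_0(\sigma)$ for all $j\ge0$ and $\sigma\ge0$, together with $\sum_{j=0}^k 1/(2j+1)=O(\log k)$, obtaining
\begin{equation*}
\norm{\vv{a}_s(y_*)}_1\le 2\sqrt{\sigma/\pi}\,e^{-\sigma}I_0(\sigma)\,O(\log k)=O(\log k),
\end{equation*}
where in the last step I apply the classical asymptotic $e^{-\sigma}I_0(\sigma)\sim 1/\sqrt{2\pi\sigma}$ as $\sigma\to\infty$, so that the Bessel prefactor tends to the constant $\sqrt{2}/\pi$.

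For the matching lower bound, I will restrict the sum to the range $0\le j\le\lfloor\sqrt{\sigma}/2\rfloor$, where the uniform asymptotic expansion (or, equivalently, the integral representation $I_j(\sigma)=\frac{1}{\pi}\int_0^\pi e^{\sigma\cos\theta}\cos(j\theta)\,d\theta$ combined with $\cos(j\theta)\ge 1-(j\theta)^2/2$) gives $I_j(\sigma)\ge c\,I_0(\sigma)$ for an absolute constant $c>0$. Noting from the scalings in Lemma~\ref{lemma_theta} that $\sigma=\Theta(k^2\xi^2\cdot\log(1/\nu)/\log(1/\nu))=\Theta(k/\xi)$ and hence $\log\sqrt{\sigma}=\Theta(\log k)$, the truncated sum $\sum_{j=0}^{\lfloor\sqrt\sigma/2\rfloor}1/(2j+1)=\Omega(\log\sqrt\sigma)=\Omega(\log k)$, and combining with $\sqrt{\sigma/\pi}\,e^{-\sigma}I_0(\sigma)\to 1/\pi$ yields $\norm{\vv{a}_s(y_*)}_1=\Omega(\log k)$. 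Together with the upper bound, this proves the $\Theta(\log k)$ claim.

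The main technical obstacle I expect is the lower bound step, specifically establishing $I_j(\sigma)\ge c\,I_0(\sigma)$ uniformly for $j$ up to $O(\sqrt\sigma)$; the integral-representation approach above should make this rigorous without invoking the full Debye/uniform asymptotic machinery. A secondary, more mundane care point is checking that the regime $k\ge \sqrt{\sigma}/2$ indeed holds under the scalings in Lemma~\ref{lemma_theta}, so that the truncated index range is a valid sub-sum of $0,\ldots,k$; this follows from $k/\sqrt{\sigma}=\Theta(\sqrt{\log(1/\nu)})\to\infty$.
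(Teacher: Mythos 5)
Your proposal is correct, and it follows the same overall strategy as the paper: evaluate the $y$-dependent factors at $y_*=\cos(\pi/4)$ to get $|a_j(y_*)|=|b_j(y_*)|=\sqrt{\sigma/\pi}\,e^{-\sigma}\big(I_j(\sigma)+I_{j+1}(\sigma)\big)/(2j+1)$, and then control the resulting Bessel sum using the large-$\sigma$ behavior of $e^{-\sigma}I_j(\sigma)$ together with $\sum_{j=0}^k 1/(2j+1)=\Theta(\log k)$. Where you diverge is in how the Bessel asymptotics are invoked, and your version is actually the more careful one. The paper fixes $k$, takes the pointwise limit $\sigma\to\infty$ term by term (using $\sqrt{\sigma}e^{-\sigma}I_l(\sigma)\to 1/\sqrt{2\pi}$), and then bounds the limiting sum $\bar N(k)=\frac{1}{\pi\sqrt2}\sum_{j=0}^k\frac{1}{2j+1}$ by harmonic numbers. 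This glosses over the fact that in the relevant regime $k$ and $\sigma$ diverge together with $k\gtrsim\sqrt{\sigma}$, and the convergence of $\sqrt{\sigma}e^{-\sigma}I_j(\sigma)$ is not uniform over $j\le k$: for $j\gtrsim\sqrt{\sigma}$ the terms are suppressed by a factor like $e^{-j^2/2\sigma}$, so $\bar N(k)$ overestimates $N(k,\sigma)$ and does not by itself yield the lower bound. Your proof repairs exactly this: the upper bound follows from the uniform inequality $I_j(\sigma)\le I_0(\sigma)$, and the lower bound from truncating to $j\le\lfloor\sqrt{\sigma}/2\rfloor$, where $I_j(\sigma)\ge c\,I_0(\sigma)$ can be established via the integral representation, followed by the observation that $\log\sqrt{\sigma}=\Theta(\log k)$ under the scalings of Lemma~\ref{lemma_theta} (and that $k\ge\sqrt{\sigma}/2$ so the truncated range is admissible). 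The only blemishes are cosmetic: the limiting prefactor in your lower-bound step should be $1/(\pi\sqrt{2})$ rather than $1/\pi$, and the constant $c$ in $I_j\ge c\,I_0$ may require shrinking the truncation to $j\le\epsilon\sqrt{\sigma}$ for a small enough $\epsilon$; neither affects the $\Theta(\log k)$ conclusion.
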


\begin{proof}
Let $\norm{\vv{a}_{s}(y_*)}_1=\norm{\vv{b}_{s}(y_*)}_1=N(k,\sigma)$. To prove the asymptotic growth orders we need to investigate the behavior of the one-norms when $\xi,\nu\rightarrow0$, which is equivalent to $k,\sigma\rightarrow\infty$. Since, by Lemma \ref{lemma_theta}, $\sigma$ grows quadratically faster then $k$ as $\xi\rightarrow0$, we first investigate the behavior of $N(k,\sigma)$ as $\sigma$ grows. Computing the limit
\begin{equation}
\Bar{N}(k)=\lim_{\sigma\rightarrow\infty}N(k,\sigma)=\lim_{\sigma\rightarrow\infty}\sum^{k}_{j=0}\sqrt{\frac{\sigma}{\pi}}e^{-\sigma}\frac{I_{j}(\sigma)+I_{j+1}(\sigma)}{2j+1}
\end{equation}
gives us
\begin{equation}
\Bar{N}(k)=\frac{1}{\pi\sqrt{2}}\sum^{k}_{j=0}\frac{1}{2j+1},
\end{equation}
where we have used $\lim_{\sigma\rightarrow\infty}\sqrt{\sigma}e^{-\sigma}I_{l}(\sigma)=\frac{1}{\sqrt{2\pi}}$. 
Now, note that
\begin{equation}
1+\frac{H_{k}}{2}<\sum^{k}_{j=0}\frac{1}{2j+1}<\frac{H_{k+1}}{2},
\end{equation}
where $H_{k}=\sum^{k}_{j=1}\frac{1}{j}$ is the $k$-th harmonic number. 
By the properties of $H_k$, we have that there exists $k_{0}$ such that for all $k>k_{0}$,
\begin{equation}
\frac{1}{2\pi\sqrt{2}}\ln(k+1)<\Bar{N}(k)<\frac{\sqrt{2}}{\pi}\ln(k+1),
\end{equation}
which proves our statements.
\end{proof}

\begin{lemma}\label{lemma_step_query}
Let $\vv{a}_{s}(y_*)$ and $\vv{b}_{s}(y_*)$ be the vector of coefficients of the Chebyshev expansion $\Tilde{f}_{s}(y_*-x)$ for $y_*=\frac{1}{\sqrt{2}}$. Then, the average query complexity of estimating $\Tilde{z}^{(1)}$ is $\mathbb{E}[j]\in\mathcal{O}\left(\tfrac{1}{\xi}\sqrt{\log\big(\tfrac{1}{\nu}\big)}\Big/\!\log\big(\frac{1}{\xi}\log\big(\frac{1}{\nu}\big)\big)\right)$.  
\end{lemma}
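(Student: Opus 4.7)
My plan is to exploit the fact that the algorithm queries $U_A$ to a power equal to the degree of the Chebyshev polynomial sampled, and that in the decomposition of Lemma \ref{lemma_theta} both the first-kind piece $\mathcal{T}_{2j+1}$ and the second-kind piece $\sqrt{1-x^2}\,\mathcal{U}_{2j}$ require depth $2j+1$ (the latter via the modified Hadamard test of Lemma \ref{lemma_hadamardsecondkind}). Thus the expected query depth is
\begin{equation}
\mathbb{E}[j] \;=\; \frac{\sum_{j=0}^{k}(2j+1)\bigl(|a_j(y_*)|+|b_j(y_*)|\bigr)}{\|\vv{a}_s(y_*)\|_1+\|\vv{b}_s(y_*)\|_1}.
\end{equation}
The crucial observation is that each coefficient $|a_j(y_*)|,|b_j(y_*)|$ contains a factor $1/(2j+1)$ which is exactly cancelled by the degree $2j+1$ in the numerator. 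This eliminates the index-dependent suppression and leaves a sum of pure Bessel-function weights.

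Next I would bound the Chebyshev polynomial values at $y_*=1/\sqrt{2}=\cos(\pi/4)$. Using $\mathcal{T}_{2j+1}(\cos\theta)=\cos((2j+1)\theta)$ and $\mathcal{U}_{2j}(\cos\theta)=\sin((2j+1)\theta)/\sin\theta$, one gets $|\mathcal{T}_{2j+1}(y_*)|\le 1$ and $\sqrt{1-y_*^2}\,|\mathcal{U}_{2j}(y_*)|=|\sin((2j+1)\pi/4)|\le 1$ for every $j$. Plugging these constant bounds in gives
\begin{equation}
\sum_{j=0}^{k}(2j+1)\bigl(|a_j|+|b_j|\bigr)\;\le\;2\sqrt{\tfrac{2\sigma}{\pi}}\,e^{-\sigma}\sum_{j=0}^{k}\bigl(I_j(\sigma)+I_{j+1}(\sigma)\bigr).
\end{equation}
To estimate the Bessel sum I would invoke the generating-function identity $\sum_{j=-\infty}^{\infty} I_j(\sigma)=e^{\sigma}$ together with $I_{-j}=I_j$, which implies $\sum_{j=0}^{\infty} I_j(\sigma)\le e^{\sigma}$; hence the partial sum is at most $2e^\sigma$ and the numerator is bounded by a constant times $\sqrt{\sigma}$.

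Finally I would invoke Lemma \ref{lem:1norm_step} to replace the denominator by $\Theta(\log k)$ and substitute the Lemma \ref{lemma_theta} scalings $\sigma=\mathcal{O}\!\bigl(\xi^{-2}\log(1/\nu)\bigr)$ and $k=\mathcal{O}\!\bigl(\xi^{-1}\log(1/\nu)\bigr)$, yielding
\begin{equation}
\mathbb{E}[j]\;=\;\mathcal{O}\!\left(\frac{\sqrt{\sigma}}{\log k}\right)\;=\;\mathcal{O}\!\left(\frac{\tfrac{1}{\xi}\sqrt{\log(1/\nu)}}{\log\bigl(\tfrac{1}{\xi}\log(1/\nu)\bigr)}\right).
\end{equation}
The only delicate point I anticipate is the Bessel-sum bound: one has to verify that the naive $e^{\sigma}$ bound (rather than a concentration estimate around $j\sim\sigma$) is what one actually needs, since combined with the $\sqrt{\sigma}\,e^{-\sigma}$ prefactor it already produces the $\sqrt{\sigma}$ scaling; a more refined estimate would not improve the asymptotics because the truncation index $k$ is well below the Bessel-mass region. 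Everything else is bookkeeping of the $k,\sigma$ dependences already pinned down in Lemmas \ref{lemma_theta} and \ref{lem:1norm_step}.
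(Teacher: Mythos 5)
Your proposal is correct and follows essentially the same route as the paper's proof: the $1/(2j+1)$ factor in the coefficients cancels the degree $2j+1$, the resulting Bessel sum is bounded via $\sum_{j\ge 0}\big(I_j(\sigma)+I_{j+1}(\sigma)\big)=e^\sigma$ (the paper uses the exact identity where you use the bound $2e^\sigma$, which is immaterial for the asymptotics), and the denominator is handled by the $\Omega(\log k)$ lower bound of Lemma \ref{lem:1norm_step} before substituting the $\sigma$ and $k$ scalings from Lemma \ref{lemma_theta}. The paper additionally exploits that $|a_j(y_*)|=|b_j(y_*)|$ exactly at $y_*=1/\sqrt{2}$, whereas you bound the Chebyshev values by $1$; both yield the same $\mathcal{O}(\sqrt{\sigma}/\log k)$ conclusion.
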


\begin{proof}
By definition of $\mathbb{E}[j]$, and using the fact that $\abs{a_{j}(y_*)}=\abs{b_{j}(y_*)}$ for all $j$, we have that
\begin{align}
\mathbb{E}[j] &=\frac{1}{\norm{\vv{a}_{s}(y_*)}_1}\sum^{k}_{j=0}\sqrt{\frac{4\sigma}{\pi}}e^{-\sigma}(I_{j}(\sigma)+I_{j+1}(\sigma))\notag\\
&\le \frac{1}{\norm{\vv{a}_{s}(y_*)}_1}\sum^{\infty}_{j=0}\sqrt{\frac{4\sigma}{\pi}}e^{-\sigma}(I_{j}(\sigma)+I_{j+1}(\sigma))\notag\\
&=\frac{1}{\norm{\vv{a}_{s}(y_*)}_1}\sqrt{\frac{4\sigma}{\pi}},
\end{align}
where in the last step $\sum^{\infty}_{j=0}\big(I_{j}(\sigma)+I_{j+1}(\sigma)\big)=e^\sigma$ follows from the Jacobi-Anger expansion. 
Since $\norm{\vv{a}_{s}(y_*)}=\Omega(\log(k))$ for large $k$, it follows that 
\begin{equation}
    \mathbb{E}[j] = \mathcal{O}\left(\frac{\sqrt{\sigma}}{\log(k)}\right) = \mathcal{O}\left(\frac{\tfrac{1}{\xi}\sqrt{\log\big(\tfrac{1}{\nu}\big)}}{\log\big(\frac{1}{\xi}\log\big(\frac{1}{\nu}\big)\big)}\right).
\end{equation}
\end{proof}

\begin{lemma}
Let $\vv{a}_{s}(y_*)$ and $\vv{b}_{s}(y_*)$ be the vector of coefficients of the Chebyshev expansion $\Tilde{f}_{s}(y_*-x)$ for $y_*=\frac{1}{\sqrt{2}}$. Then, the error damping factor for estimating $\Tilde{z}^{(1)}$ is $\mathcal{O}\left(\tfrac{1}{\xi}\sqrt{\log\big(\tfrac{1}{\nu}\big)}\Big/\!\log\big(\frac{1}{\xi}\log\big(\frac{1}{\nu}\big)\big)\right)$.
\end{lemma}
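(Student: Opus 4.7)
The plan is to apply Theorem~\ref{thm:error_robust} to the step-function case and explicitly bound $E^{(1)}_{\text{sq}}$ using the coefficients of Lemma~\ref{lemma_theta}. Since $\tilde{f}_{s}(y_{*}-x)$ involves both first-kind and second-kind Chebyshev polynomials --- with the $\mathcal{T}_{2j+1}$ contributions realised by the standard Hadamard test of Lemma~\ref{lem:had_circuit_1} and the $\sqrt{\mathds{1}-A^2}\,\mathcal{U}_{2j}$ contributions realised by the modified Hadamard test of Lemma~\ref{lemma_hadamardsecondkind}, both at circuit depth $2j+1$ --- the error damping factor acquires the two-term form
\[
E^{(1)}_{\text{sq}} = \Bigl|\sum_{j=0}^{k}(2j+1)\bigl[a_{j}(y_{*})\,\mel{\phi}{\mathcal{T}_{2j+1}(A)}{\psi} + b_{j}(y_{*})\,\mel{\phi}{\sqrt{\mathds{1}-A^{2}}\,\mathcal{U}_{2j}(A)}{\psi}\bigr]\Bigr|.
\]

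The crucial observation will be that, at $y_{*}=1/\sqrt{2}=\cos(\pi/4)$, one has $|\mathcal{T}_{2j+1}(y_{*})|=|\cos((2j+1)\pi/4)|=1/\sqrt{2}$ and $\sqrt{1-y_{*}^{2}}\,|\mathcal{U}_{2j}(y_{*})|=|\sin((2j+1)\pi/4)|=1/\sqrt{2}$ for every integer $j$, since $(2j+1)\pi/4$ is always an odd multiple of $\pi/4$. Plugging the explicit coefficients from Lemma~\ref{lemma_theta} into the expression above then yields $(2j+1)|a_{j}(y_{*})|=(2j+1)|b_{j}(y_{*})|=\sqrt{\sigma/\pi}\,e^{-\sigma}(I_{j}(\sigma)+I_{j+1}(\sigma))$ --- precisely the same Bessel-function summand that controls $\mathbb{E}[j]$ in the proof of Lemma~\ref{lemma_step_query}.

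The remaining steps will mirror that proof. After using $|\mel{\phi}{\mathcal{T}_{2j+1}(A)}{\psi}|\le 1$ and $|\mel{\phi}{\sqrt{\mathds{1}-A^{2}}\,\mathcal{U}_{2j}(A)}{\psi}|\le 1$, completing the partial sum to an infinite series, and invoking the Jacobi--Anger identity $\sum_{j=0}^{\infty}(I_{j}(\sigma)+I_{j+1}(\sigma))=e^{\sigma}$, the Bessel sum collapses and one obtains $E^{(1)}_{\text{sq}}=\mathcal{O}(\sqrt{\sigma})=\mathcal{O}(\xi^{-1}\sqrt{\log(1/\nu)})$. Normalising by the coefficient $\ell_{1}$-norm $\|\vv{a}_{s}(y_{*})\|_{1}=\Theta(\log k)$ from Lemma~\ref{lem:1norm_step}, as is done when passing from the weighted sum to the average query depth in the proof of Lemma~\ref{lemma_step_query}, produces the claimed scaling $\mathcal{O}\bigl(\xi^{-1}\sqrt{\log(1/\nu)}/\log(\xi^{-1}\log(1/\nu))\bigr)$.

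The main obstacle is the careful bookkeeping of signs and degrees stemming from the simultaneous presence of $\mathcal{T}$ and $\mathcal{U}$ contributions, and in particular the recognition that the oscillatory factors $\sin((2j+1)\pi/4)$ and $\cos((2j+1)\pi/4)$ have equal magnitude at $y_{*}=1/\sqrt{2}$, so that the two types of terms contribute symmetrically. Once this identity is exploited, the estimate collapses to the same Bessel-series bound already carried out in the proof of Lemma~\ref{lemma_step_query}, and the asymptotic form of the statement follows immediately.
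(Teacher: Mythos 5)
Your argument is correct and is essentially the paper's own: the printed proof is a one-liner citing Lemmas \ref{lemma_theta} and \ref{lemma_step_query}, and everything you carry out explicitly --- the identity $|\mathcal{T}_{2j+1}(1/\sqrt{2})|=|\sqrt{1-y_*^2}\,\mathcal{U}_{2j}(1/\sqrt{2})|=1/\sqrt{2}$, the resulting Bessel summand $\sqrt{\sigma/\pi}\,e^{-\sigma}\big(I_j(\sigma)+I_{j+1}(\sigma)\big)$, the Jacobi--Anger collapse to $\mathcal{O}(\sqrt{\sigma})$, and the division by $\norm{\vv{a}_s(y_*)}_1=\Theta(\log k)$ --- is precisely the computation already performed in the proof of Lemma \ref{lemma_step_query} for $\mathbb{E}[j]$. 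The only caveat is that the paper never formally defines the ``error damping factor,'' but your reading of it as $E^{(1)}_{\text{sq}}/\norm{\vv{a}_s(y_*)}_1$ (equivalently, the $\mathbb{E}[j]$ bound inherited from Theorem \ref{thm:error_robust}) is the one consistent with the stated scaling, so the normalization step is justified.
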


\begin{proof}
Follows immediately from Lemmas \ref{lemma_theta} and \ref{lemma_step_query}.
\end{proof}

\begin{definition}
Given $\Tilde{F}_{\varrho}(y)=\Tr[\varrho\,\Tilde{f}_{s}(y\mathds{1}-H)]$, its T-part is the expression
\begin{equation}
\sum^{k}_{j=0}a_{j}(y)\Tr[\mathcal{T}_{2j+1}(H)\,\varrho],
\end{equation}
while its U-part is the expression
\begin{equation}
\sum^{k}_{j=0}b_{j}(y)\Tr[\sqrt{1-H^{2}}\,U_{2j}(H)\,\varrho].
\end{equation}

\end{definition}

\begin{lemma}\label{lemma_ychoice}
Let $y_*=\frac{1}{\sqrt{2}}$, and $\Tilde{z}^{(1),T}(y)$ and $\Tilde{z}^{(1),U}(y)$ be the unbiased estimators of the $T$-part and the $U$-part of $\Tilde{F}_{\varrho}(y)$ obtained from algorithm 1, for any point $y$, respectively. Then, for any point $y$ we have that
\begin{equation}
\Tilde{F}_{\varrho}(y)=\frac{1}{2}+\mathbb{E}_{w^{T}(y)}\left[\Tilde{z}^{(1),T}(y_*)\right]+\mathbb{E}_{w^{T}(y)}\left[\Tilde{z}^{(1),U}(y_*)\right],
\end{equation}
where $\mathbb{E}_{w^{T}(y)}$ and $\mathbb{E}_{w^{U}(y)}$ are the weighted expected values with weight functions
\begin{equation}
\begin{split}
w^{T}(y,j)=\frac{a_{j}(y)}{a_{j}(y_*)},\\
w^{U}(y,j)=\frac{b_{j}(y)}{b_{j}(y_*)}.
\end{split}
\end{equation}
\end{lemma}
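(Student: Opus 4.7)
The approach is a direct importance-sampling argument: reweight samples drawn from a single distribution at $y_*$ to estimate the analogous quantity at an arbitrary $y$. The only substantive point to check is that the special value $y_* = 1/\sqrt{2}$ keeps every reweighting ratio finite.

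First, I would apply Lemma \ref{lemma_theta} to decompose $\Tilde{F}_\varrho(y) = \Tr[\varrho\,\Tilde{f}_s(y\mathds{1} - H)]$ into three pieces: the constant $\tfrac{1}{2}$, the $T$-part $T(y) := \sum_{j}a_j(y)\,\Tr[\mathcal{T}_{2j+1}(H)\,\varrho]$, and the $U$-part $U(y) := \sum_{j}b_j(y)\,\Tr[\sqrt{\mathds{1}-H^{2}}\,\mathcal{U}_{2j}(H)\,\varrho]$. By Theorem \ref{main_lemma} (extended to the second-kind Chebyshev polynomials via Lemma \ref{lemma_hadamardsecondkind}), the samples underlying $\Tilde{z}^{(1),T}(y_*)$ are drawn from the probability $p(j \mid y_*) = |a_j(y_*)|/\norm{\vv{a}_s(y_*)}_1$ with single-sample value $\norm{\vv{a}_s(y_*)}_1 \sgn(a_j(y_*))\,b^{(1)}$, and similarly for $\Tilde{z}^{(1),U}(y_*)$ with $b_j$ in place of $a_j$. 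Their unconditional expectations are $T(y_*)$ and $U(y_*)$ respectively.

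Second, I would verify that the weights $w^{T}(y,j) = a_j(y)/a_j(y_*)$ and $w^{U}(y,j) = b_j(y)/b_j(y_*)$ are well defined for every $j \in \mathbb{N}$, i.e.\ that neither $a_j(y_*)$ nor $b_j(y_*)$ vanishes. Using the trigonometric representations $\mathcal{T}_{2j+1}(\cos\theta) = \cos((2j+1)\theta)$ and $\mathcal{U}_{2j}(\cos\theta) = \sin((2j+1)\theta)/\sin\theta$ at $\theta = \pi/4$, one obtains $\mathcal{T}_{2j+1}(y_*) = \pm 1/\sqrt{2}$ and $\mathcal{U}_{2j}(y_*) = \pm 1$ for all $j$. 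Both are nonzero; this is precisely the motivation for singling out $y_* = 1/\sqrt{2}$, which I expect to be the one mildly subtle step in the argument, as any other natural choice might kill some coefficient and break the reweighting.

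With the weights well defined, the identity follows by expanding the weighted expectations:
\begin{align*}
\mathbb{E}_{w^{T}(y)}\!\big[\Tilde{z}^{(1),T}(y_*)\big]
&= \sum_{j} \frac{|a_j(y_*)|}{\norm{\vv{a}_s(y_*)}_1}\, \frac{a_j(y)}{a_j(y_*)}\, \norm{\vv{a}_s(y_*)}_1\, \sgn(a_j(y_*))\,\Tr[\mathcal{T}_{2j+1}(H)\,\varrho] \\
&= \sum_{j} a_j(y)\,\Tr[\mathcal{T}_{2j+1}(H)\,\varrho] \;=\; T(y),
\end{align*}
where I have used $|a_j(y_*)|\sgn(a_j(y_*)) = a_j(y_*)$ and cancelled $a_j(y_*)$ against the denominator of $w^{T}$. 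The identical computation with $b_j$ and $\mathcal{U}_{2j}$ produces $\mathbb{E}_{w^{U}(y)}\!\big[\Tilde{z}^{(1),U}(y_*)\big] = U(y)$. Adding $\tfrac{1}{2} + T(y) + U(y) = \Tilde{F}_\varrho(y)$ closes the argument. No concentration-type bounds are needed since the lemma is a statement about means only; the impact on variance (and hence on the number of samples required at $y_*$ to reach a prescribed precision at $y$) is left to the surrounding complexity analysis.
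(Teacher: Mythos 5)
Your proposal is correct and follows essentially the same route as the paper: a direct importance-sampling computation in which the weight $w^{T}(y,j)$ cancels against $p_j\,\norm{\vv{a}_s(y_*)}_1\sgn(a_j(y_*))=a_j(y_*)$ to reproduce the coefficients $a_j(y)$, and likewise for the $U$-part. Your explicit check that $a_j(y_*)$ and $b_j(y_*)$ never vanish at $y_*=1/\sqrt{2}$ (via $\mathcal{T}_{2j+1}(1/\sqrt{2})=\pm1/\sqrt{2}$ and $\mathcal{U}_{2j}(1/\sqrt{2})=\pm1$) is a worthwhile addition that the paper leaves implicit.
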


\begin{proof}
To show this, just notice that by the definition of $\mathbb{E}_{w^{T}(y)}$ and $\Tilde{z}^{(1,T)}(y_*)$ that $\mathbb{E}_{w^{T}(y)}\left[\Tilde{z}^{(1),T}(y_*)\right]$ is equal to
\begin{equation}
\begin{split}
&\sum^{k}_{j=0}\sum_{b}w^{T}(y,j)p_{j}(p(b|j)\norm{\vv{a}_{s}(y_*)}\sgn(a_{j}(y_*))b)\\
=&\sum^{k}_{j=0}w^{T}(y,j)a_{j}(y_*)\mel{\psi}{\mathcal{T}_{2j+1}(H)}{\psi}\\
=&\sum^{k}_{j=0}a_{j}(y)\mel{\psi}{\mathcal{T}_{2j+1}(H)}{\psi}=\mathbb{E}\Big[\Tilde{z}^{(1),T}(y)\Big],
\end{split}
\end{equation}
while a similar computation also shows that
\begin{equation}
\mathbb{E}_{w^{U}(y)}\left[\Tilde{z}^{(1),U}(y_*)\right]=\mathbb{E}\Big[\Tilde{z}^{(1),U}(y)\Big].
\end{equation}
which proves the lemma.
\end{proof}

\begin{corollary}
To sample $\Tilde{F}_{\varrho}(y)$ for $\mathcal{O}(\log(\xi^{-1}))$ values of $y$ it is only necessary to sample $\Tilde{F}_{\varrho}(y_*)$ with $1-\mathcal{O}(\delta/\log(\xi^{-1}))$ confidence, and for each desired value of $y$, to multiply each sample by the appropriate weight factor.
\end{corollary}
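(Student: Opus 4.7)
The plan is to leverage Lemma~\ref{lemma_ychoice} to turn a single sampling experiment at $y_*$ into unbiased estimators at every $y$ needed by the binary search, and then control the total failure probability with a union bound. Concretely, I would fix one round of Algorithm~1 at $y=y_*$, recording for each shot $\alpha$ the index $j_\alpha$ drawn from $p(j)\propto|a_j(y_*)|+|b_j(y_*)|$ together with the corresponding Hadamard outcome. For any target $y$, reweighting each such shot by $w^T(y,j_\alpha)$ (for the $\mathcal{T}$-branch) or $w^U(y,j_\alpha)$ (for the $\mathcal{U}$-branch) yields an estimator whose expected value is $\tilde F_\varrho(y)-1/2$, by the identity established in Lemma~\ref{lemma_ychoice}.

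The key quantitative step is to show that the reweighted single-shot estimator remains uniformly bounded, so that Hoeffding's inequality is directly applicable. This is where the choice $y_*=1/\sqrt{2}$ becomes essential: one has $|\mathcal{T}_{2j+1}(y_*)|=|\sqrt{1-y_*^2}\,\mathcal{U}_{2j}(y_*)|=1/\sqrt{2}$ for every $j$, since both quantities reduce to $|\cos((2j+1)\pi/4)|$ and $|\sin((2j+1)\pi/4)|$. Combining this with the uniform bounds $|\mathcal{T}_{2j+1}(y)|\le1$ and $|\sqrt{1-y^2}\,\mathcal{U}_{2j}(y)|=|\sin((2j+1)\arccos y)|\le1$ valid on $[-1,1]$, the weights satisfy $|w^T(y,j)|,\,|w^U(y,j)|\le\sqrt{2}$ for all $y$ and $j$. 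Hence the reweighted single-shot estimators live in an interval of length $\mathcal{O}(\|\vv{a}_s(y_*)\|_1)$, the same asymptotic range as the unweighted ones up to a universal constant.

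With this uniform boundedness in hand, Hoeffding's inequality guarantees that $S$ shots at $y_*$ produce an $\epsilon$-accurate estimate of $\tilde F_\varrho(y)$ with failure probability at most $2\exp(-c\,\epsilon^2 S/\|\vv{a}_s(y_*)\|_1^2)$ for any fixed $y$. A union bound over the $\mathcal{O}(\log(1/\xi))$ values of $y$ visited by the binary search then gives a total failure probability bounded by $\delta$ as long as the per-$y$ confidence parameter is set to $\delta'=\mathcal{O}(\delta/\log(1/\xi))$, which inflates the required shot count at $y_*$ by only a logarithmic factor $\log(\log(1/\xi)/\delta)$. This is precisely the claim of the corollary.

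The main obstacle is really the boundedness of the weight factors, which is not automatic and hinges on the specific value $y_*=1/\sqrt{2}$ where the two coefficient families $a_j(y_*)$ and $b_j(y_*)$ have the same non-vanishing magnitude across all $j$; any other choice would leave zeros in the denominators of $w^T$ or $w^U$ for some $j$, making the reweighting ill-defined or unbounded and defeating the single-experiment strategy. Once that point is settled, the rest of the argument is a standard Hoeffding-plus-union-bound exercise and introduces only the advertised $\log(1/\xi)$ overhead.
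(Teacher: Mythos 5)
Your proof is correct and follows the same route as the paper's, which simply invokes Lemma~\ref{lemma_ychoice} and declares that the corollary ``follows directly.'' Your write-up is in fact more complete than the paper's: the uniform bound $|w^{T}(y,j)|,\,|w^{U}(y,j)|\le\sqrt{2}$, which you derive from $|\mathcal{T}_{2j+1}(1/\sqrt{2})|=|\sin((2j+1)\arccos(1/\sqrt{2}))|=1/\sqrt{2}$ for all $j$, is precisely the step that justifies why a single sampling experiment at $y_*$ yields bounded reweighted estimators (hence Hoeffding applies with essentially the same range $\mathcal{O}(\norm{\vv{a}_{s}(y_*)}_1)$) and why the union bound over the $\mathcal{O}(\log(1/\xi))$ binary-search points costs only a $\log(\log(1/\xi)/\delta)$ factor in shots --- the paper leaves all of this implicit.
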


\begin{proof}
Follows directly from the previous lemma.
\end{proof}

\section{Extension to QSVT} \label{sec:QSVT}

For simplicity, in the main text, we considered only Hermitian matrices. In that case, there is an operation to build a qubitized oracle whose powers directly lead to the Chebyshev polynomials of the block-encoded matrix. This section shows that our algorithm can be applied to randomize the more general quantum singular value transformation technique.

Let us consider a unitary $U_A$ such that 
$\big(\bra{0}_a\otimes \mathds{1}_s\big)\,U_A\,\big(\ket{0}_a\otimes \mathds{1}_s\big) = A$ as in Eq.\ \eqref{eq:block_encoding}, except that now the block-encoded matrix $A$ does not need to be Hermitian. In fact, $A$ could even be rectangular. In this case, $U_A$ is a bock-encoding of a square matrix built from $A$ by filling the remaining entries with zeroes. The input matrix assumes a singular value decomposition, and we are interested in singular value transformations as in the definition below (Def.\ 16 in Ref.\ \cite{Gilyen2019}):

\begin{definition}[Singular value transformations] \label{def:SVT} Let $f:\mathbb{R}\rightarrow \mathbb{C}$ be an even or odd function. Let $\sum_{j=1}^{d} \sigma_j\ketbra*{\tilde{\psi}_j}{\psi_j}$ be a singular value decomposition of  the $d\times d$ matrix $A$ with right eigenvectors $\{\ket{\psi_j}\}$, left eigenvectors $\{\ket*{\tilde{\psi}_j}\}$ and singular values $\{{\sigma_j}\}$(including zero values). The singular value transformation of $A$ by  $f$ is defined as 
\begin{equation}
    f^{(SV)}(A) := \begin{cases}
        \sum_{j=1}^{d} f(\sigma_j)\ketbra{{\psi}_j}{\psi_j},\quad \text{$f$ even,}\\
        \sum_{j=1}^{d} f(\sigma_j)\ketbra*{\tilde{\psi}_j}{\psi_j},\quad \text{$f$ odd.}
    \end{cases}
\end{equation}
\end{definition}

It was shown in Ref.\ \cite{Gilyen2019}  that a quantum singular value transformation (QSVT) could be implemented from calls to $U_A$ and $U_A^\dagger$, and single qubit rotations controlled by $\ket{0}_a$. In Theorem 17 and Corollary 18, the authors state that  the singular value transformation by a real polynomial of degree $k$ with even or odd parity is always achieved by some sequence of qubit rotations $\Phi=\{\phi_1,\cdots,\phi_k\}\in\mathbb{R}$ and a circuit making $k$ queries to $U_A$ and $U_A^\dagger$, as long as the polynomial is normalized to $1$. In particular, to implement the Chebyshev polynomial $T^{(SV)}_k(A)$, the sequence of angles is simply $\phi_1=(1-k)\frac{\pi}{2}$ and $\phi_j=\frac{\pi}{2}$ for $j\neq1$.

Under singular value transformation, Probs. \ref{problem.1} and \ref{problem.2} can be phrased in the same way, except that now the Hermitean $A$ is substituted by a general matrix $A$ and the function is meant to be a singular value transformation according to Def.\ \ref{def:SVT}. Moreover, the problems are well-posed only for  functions with  defined parity.  Therefore, among the use cases presented in the main text, the ones that are extendable to general matrices are the inverse function and matrix monomials. 

The randomized QSVT solutions to Probs. \ref{problem.1} and \ref{problem.2} follow as before by sampling polynomial degrees according to their weight in the Chebyshev expansion of the target function. A Hadamard test is then run for each polynomial degree draw, the Chebyshev polynomials being implemented using the circuit in Fig.~\ 1 in \cite{Gilyen2019} with the angle sequence described above. The algorithm works for the same reason as for Hermitian matrices: QSVT implements a block-encoding of the polynomial singular value transformation, while the Hadamard test has the useful feature of always selecting the correct block of a unitary block-encoding.

\section{Non-normalized matrices}\label{app:non_norm}

Suppose that, instead of having a normalized matrix $A$ (i.e., satisfying $\norm{A}\leq1$), we have a general Hermitian matrix $A$ as input. We can define the normalized matrix $A'=\frac{A}{\alpha}$, with $\alpha\geq\norm{A}$ to be block-encoded instead of $A$. Therefore, using our framework we can simulate a function $f'(A')$ defined such that 
\begin{equation}
f(A)=f(\alpha\, A')=f'(A'),
\end{equation}
where $f:\mathcal{D}\rightarrow\mathbb{R}$ and $f':\frac{\mathcal{D}}{\alpha}\subseteq[-1,1]\rightarrow\mathbb{R}$ have the same image. Therefore, introducing a sub-normalization will require a Chebyshev expansion for a new function whose domain is also sub-normalized relative to the original. The contribution of the sub-normalization factor $\alpha$ to the sample and query complexities of the randomized algorithm needs to be analyzed on a case-by-case basis. In the following, we analyze the maximal query depth and the $l_1$-norm of the Chebyshev coefficients for each use-case we showed in the main text. From them and  Theorem \ref{main_lemma2}, the complexities of particular instances of Problem 1 or Problem 2 can be obtained.

In the case of monomials, we have that $f(x)=x^t$ yields $f'(x')=\alpha^t {x'}^t$. This implies that an $\nu$-approximation to $A^{n}$ is $\nu/\alpha^{n}$-approximation to $H^{'n}$. To attain an error $\nu$ in approximating $f(x)$, the approximation in Lemma \ref{lem:approx_mon} shall be used with error $\nu/\alpha^n$ and also the $l_1$-norm of the coefficients will gain a factor $\alpha^n$. Therefore, The maximal query depth and coefficients are ${k}=\sqrt{2t\log(\frac{2\alpha^{t}}{\nu})}$ and $\norm{\vv{a}}_{1}=\alpha^{t}-\nu$, respectively.

In the case of the exponential function, notice that $e^{-\beta A}=e^{-(\beta\alpha)(A/\alpha)}$. Therefore, we just need to re-scale  $\beta$ with a factor  $\alpha$  , obtaining ${k}=\mathcal{O}\left(\sqrt{\alpha\beta}\log(\frac{e^{\alpha\beta}}{\nu})\right)$ and 
$\norm{\vv{a}}_{1}=e^{\alpha\beta}$ from Lemmas \ref{lem:approx_exp} and \ref{lem:1norm_exp}, respectively.

In the case of the inverse function, $f(x)=x^{-1}$ needs to be approximated in the domain $\mathcal{D}=[-\norm{A},-1/\norm{A^{-1}}]\cup[1/\norm{A^{-1}},\norm{A}]$. In this case, the randomized algorithm should implement an approximation to $f'(A')=\alpha^{-1}{A'}^{-1}$ in the domain $\mathcal{D}/\alpha=[-1,-1/\kappa]\cup[1/\kappa,1]$, with $\alpha=\norm{A}$ and $\kappa=\norm{A}\,\norm{A^{-1}}$ as usual. It suffices to apply Lemma \ref{lem:approx_inv} to obtain a $\alpha\nu$-approximation to ${A'}^{-1}$ with resulting maximal degree  
$k=\mathcal{O}\left(\kappa\,{\log\frac{\norm{A^{-1}}}{\nu}}\right)$ and coefficients $l_1$-norm 
$\norm{\vv{a}}_{1}=\mathcal{O}\left(\norm{A^{-1}}{\log\frac{\norm{A^{-1}}}{\nu}}\right)$.

Lastly, in the case of the Heaviside step function, we have that $f(A)=f'(A')$. However, in order to approximate $\theta(A)$ in the domain $D=(-1,-\xi]\cup[\xi,1)$, we need to be able to approximate $\theta(A')$ in the larger domain $D=(-1,-\xi/\alpha]\cup[\xi/\alpha,1)$. Therefore, Lemmas \ref{lemma_theta} and \ref{lem:1norm_step} give $k=\mathcal{O}\left(\frac{\alpha}{\xi}\,\log\frac{1}{\nu}\right)$ and $\norm{\vv{a}}_1=\mathcal{O}\left(\log\big(\frac{\alpha}{\xi}\,\log\frac{1}{\nu}\big)\right)$.

\section{GSEE for FeMoco}
\label{app:gate_count}

Here, we provide details on the complexity analysis of GSEE for the specific case of FeMoco. We use the sparse qubitized Hamiltonian model of \cite{Berry2019qubitizationof} with the corrections given in \cite{Lee2021} and the active space of \cite{Li_2019}.

The molecule's Hamiltonian is written as a linear combination $H'=\sum_{l=1}^L b_l\,V_l$ of unitary operators $V_l$. 
A block-encoding of $H=H'/\|\boldsymbol{b}\|_1$ is obtained using the method of linear combination of unitaries (LCU) \cite{Childs_2012}, i.e., encoding the coefficients $b_l$ in the amplitudes of a state of $\order{\log(L)}$ ancillary qubits, which are used to control the action of each unitary $V_l$ in the sum. 
In order to estimate the ground-state energy of $H'$ up to chemistry precision of $0.0016$ $E_\text{h}$ (hartree), we can run the GSEE algorithm on $H$ with precision $\xi=0.001/\|b\|_1$, leaving the remaining $0.0006/\|b\|_1$ tolerated error to imprecision in the block-encoding procedure.

The number of terms $L$ in the Hamiltonian, the $1$-norm $\|b\|_1$ of their coefficients, and also the total number of qubits required depend on the active space used. 
In the case we consider, this is composed of $n=152$ orbitals, each one corresponding to one qubit in the system. 
In Ref.\ \cite[App. A]{Lee2021} it was shown that $\|\boldsymbol{b}\|_1=1547$ $E_\text{h}$ and $L=440501$. 
Moreover, the authors show how to construct a qubitized block-encoding oracle for $H$ explicitly using LCU with $2446$ logical qubits and $1.8\times 10^4$ Toffoli gates per oracle query. It is worth noting that, in principle, only $n+a$ qubits are strictly required for LCU,  with $a=\lceil \log(L)\rceil=19$. Therefore, the majority of the qubits are used to upload the Hamiltonian coefficients into the $a$ ancillas state and could be removed if a more efficient upload is used. 
They apply quantum phase estimation (QPE), which requires $Q=\lceil\frac{\pi}{2\xi}\rceil\approx2.4\times 10^6$ oracle queries and $2\,\log(Q+1)\approx 43$ additional control qubits. 
In our case, only one control qubit is necessary for the Hadamard test, while the number of oracle calls $Q$ is to be compared to $k$ or $\mathbb{E}[j]$.

At last, to implement our algorithm, we need an initial state with a sufficiently large overlap $\eta$ with the target ground state. 
In Ref.\ \cite{Lee_2023} it was shown that $\eta^2=10^{-7}$ can be achieved using Slater determinants. 
Using this and the value of $\xi$ given above, we explicitly construct the series approximating the Heaviside function $\theta(H)$. 
The truncation degree and the average degree obtained are $k=1.12\times10^7$ and $\mathbb{E}[j]=4.01\times 10^5$, respectively. When multiplied by the above-mentioned gate complexity per oracle query, we get a total of $2.03\times10^{11}$ and $7.26\times10^9 $ Toffoli gates per circuit, respectively. 
Notice that, when compared to phase estimation, the coherent QSP algorithm demands more coherent queries to the oracle while the randomized processing is less costly on average. 
At the same time, all three algorithms require a similar number of samples, proportional to $\eta^{-2}$. We summarize the complexities of the three methods in Table\ \ref{tab:femoco}.

\begin{table}[b]
    \centering
    \begin{tabular}{||c||c|c|c||}
     \hline
     \rule{0pt}{3ex}
     \rule[-1.5ex]{0pt}{0pt}
     {\bf Method} & {\bf Calls to $U_A$} & {\bf Toffoli count} & {\bf Qubit count} \\
     \hline
     \hline
     \rule{0pt}{3ex}
     \rule[-1.5ex]{0pt}{0pt}
     QPE \cite{Lee2021} & $2.43\times 10^6$ & $4.4\times10^{10}$ & $2489$ \\
     \hline
     \rule{0pt}{4ex}
     \rule[-2ex]{0pt}{0pt}
     QSP & $1.12\times10^7$ & $2.0\times10^{11}$ & $2448$  \\
     \hline
     \rule{0pt}{4ex}
     \rule[-1.5ex]{0pt}{0pt}
     This work & $4.01\times 10^5$ &  $7.3\times10^9 $ & $2447$ \\
     \hline
    \end{tabular}
    \caption{Comparison of complexities of GSEE for FeMoco using the three different algorithms described in the text. The number of ancillas for QSP is increased by $2$ to account for the function-implementation ancilla and the Hadamard test ancilla.}
    \label{tab:femoco}
\end{table}

Another interesting feature of the randomized scheme is that, due to its variable circuit depth, it allows for variable quantum error-correcting code distances. 
Consequently, the average number of physical qubits required for quantum error correction (QEC) and also the runtime of the typical circuits, which have even lower depth than the average, can be highly reduced further. 
To give a more quantitative idea, we estimated the QEC overhead using the surface code with the spreadsheet provided in Ref. \cite{Gidney2019efficientmagicstate}. 
We consider a total number of Toffoli's obtained using the average degree $\mathbb{E}[j]$, (i.e., not taking into account the possible further due to typical circuits with even lower depths). 
With a level-1 code distance $d_1=19$ and a level-2 code distance $d_2=31$, assuming an error rate per physical gate of $0.001$, the error-corrected algorithm uses $8$ million qubits to obtain a global error budget of $1\%$. 
This means that all circuits with a depth smaller than the average will have an extremely high fidelity. 
The circuits corresponding to larger degrees will have an increasingly higher chance of failure, but they contribute less to the final estimate. 
Meanwhile, the same QEC overhead only ensures roughly $20\%$ chance of errors to a circuit corresponding to the truncation order $k$.

\end{document}